\mathchardef\mhyphen="2D
\newcommand{\F}{\mathbb{F}}%
\newcommand{\C}{\mathbb{C}}%
\newcommand\bigzero{\makebox(0,0){\text{\large0}}}
\newcommand{\srk}{\operatorname{srk}}
\numberwithin{equation}{section}
\theoremstyle{plain}
\newtheorem{theorem}[equation]{Theorem}
\newtheorem{lemma}[equation]{Lemma}
\newtheorem{corollary}[equation]{Corollary}
\newtheorem{conjecture}[equation]{Conjecture}
\theoremstyle{definition}
\newtheorem{definition}[equation]{Definition}
\newcommand{\theproblemname}{}
\newtheorem*{theproblem}{Problem \theproblemname}
\newenvironment{problem}[1]
  {\renewcommand{\theproblemname}{#1}\begin{theproblem}}
  {\end{theproblem}}
\theoremstyle{remark}
\newtheorem*{remark}{Remark}
\newcommand{\problemname}[1]{\textnormal{\textsc{#1}}}
\newcommand{\rk}{\operatorname{rk}}
\newcommand{\id}{\operatorname{id}}
\newcommand{\trans}{\mathsf{T}}
\newcommand{\stab}{\operatorname{stab}}
\newcommand{\mult}{\operatorname{mult}}
\newcommand{\diag}{\operatorname{diag}}
\newcommand{\sgn}{\operatorname{sgn}}
\newcommand{\LatRect}{\textup{LatRect}}
\newcommand{\coldet}{\operatorname{coldet}}
\newcommand{\NP}{\mathsf{NP}}
\newcommand{\coNP}{\mathsf{coNP}}
\newcommand{\BPP}{\mathsf{BPP}}
\newcommand{\VP}{\mathsf{VP}}
\newcommand{\VDET}{\mathsf{VDET}}
\newcommand{\VNP}{\mathsf{VNP}}
\newcommand{\bbP}{\mathbb{P}}
\newcommand{\bbQ}{\mathbb{Q}}
\newcommand{\bbR}{\mathbb{R}}
\newcommand{\bbC}{\mathbb{C}}
\newcommand{\GL}{\mathsf{GL}}
\newcommand{\SL}{\mathsf{SL}}
\newcommand{\Sym}{\mathsf{Sym}}
\newcommand{\lin}[1]{\langle #1 \rangle}
\newcommand{\poly}{\operatorname{poly}}
\newcommand{\cC}{\mathcal{C}}
\newcommand{\cD}{\mathcal{D}}
\newcommand{\cP}{\mathcal{P}}
\newcommand{\cA}{\mathcal{A}}
\newcommand{\cX}{\mathcal{X}}
\newcommand{\cPX}{\mathbb{P}\mathcal{X}}
\newcommand{\cM}{\mathcal{M}}
\newcommand{\cPM}{\mathbb{P}\mathcal{M}}
\newcommand{\aS}{\mathfrak{S}}
\newcommand{\la}{\lambda}
\newcommand{\HWV}{\mathrm{HWV}}
\renewcommand{\det}{\mathrm{det}}
\newcommand{\per}{\mathrm{per}}
\newcommand{\Det}{\mathrm{Det}}
\newcommand{\HMinRk}{\problemname{HMinRank}}
\newcommand{\HMinRkU}{\problemname{HMinRank1}}
\newcommand{\HQSAT}{\problemname{HQuad}}
\author{Markus Bl\"{a}ser\thanks{Saarland University, mblaeser@cs.uni-saarland.de} 
\and Christian Ikenmeyer\thanks{University of Liverpool, christian.ikenmeyer@liverpool.ac.uk }
\and Vladimir Lysikov\thanks{University of Copenhagen, vl@math.ku.dk, 
                             supported by VILLUM FONDEN via the QMATH Centre of Excellence under Grant No. 10059} 
\and Anurag Pandey\thanks{Max Planck Institute for Informatics, apandey@mpi-inf.mpg.de}
\and Frank-Olaf Schreyer\thanks{Saarland University, schreyer@math.uni-sb.de}}
\title{Variety Membership Testing, Algebraic Natural Proofs, \\ and Geometric Complexity Theory}
\begin{document}

\maketitle
\thispagestyle{empty}

\begin{abstract}


Variety membership testing is a central task in algebraic geometry. ``Given'' a variety $V$ and 
a point $x$ in the ambient space, we want to decide whether $x \in V$. In this paper, we are particularly
interested in the case when $V$ is given as an orbit closure and the ambient space is the set of
all tensors of order three. The border tensor rank can be phrased as such a problem. 
The first variety that we consider is the slice rank variety,
which consists of all $3$-tensors of slice rank at most $r$.
The notion of slice rank was introduced by Tao and has subsequently been used for several combinatorial problems like capsets, sunflower free sets,  tri-colored sum-free sets, and progression-free sets. 
We show that deciding if a given 3-tensor has slice rank at most $r$ is $\NP$-hard, that is, the membership testing problem for the slice rank variety is $\NP$-hard.
While the slice rank variety is a union of orbit closures,  we define another variety, the minrank 
variety, which can be expressed as a single orbit closure.
The minrank variety is closely related to the generalized matrix completion problem considered in Bl\"aser et al. (STOC, 2018). 
Our next result is the $\NP$-hardness of membership testing in the minrank variety, hence we establish the $\NP$-hardness of the orbit closure containment problem for tensors of order three.

Algebraic natural proofs were recently introduced by Forbes, \mbox{Shpilka} and Volk
(STOC, 2017)
and independently by Grochow, Kumar, Saks and Saraf~(CoRR, abs/1701.01717, 2017)
as an attempt to transfer Razborov and Rudich's famous barrier
result (J. Comput. Syst. Sci., 1997)
for Boolean circuit complexity to
algebraic complexity theory.
Bl\"aser et al. (STOC, 2018) also gave a version of an algebraic natural proof barrier for the matrix completion problem which relies on the hypothesis that $\coNP \subseteq \exists \BPP$. The result implied that constructing equations for the corresponding variety should be hard.  We generalize their approach to work with any family of varieties for which the membership problem is $\NP$-hard and for which we can efficiently generate a dense subset. Therefore, a similar barrier holds for the slice rank variety and the minrank variety, too.
This allows us to set up the slice rank and minrank varieties as a test-bed for geometric complexity theory (GCT), an approach initiated by Mulmuley and Sohoni (J. Comput., 2001) to attack the permanent versus determinant problem.
We determine the stabilizers of the tensors that generate the orbit closures 
of the slice rank varieties and the minrank varieties and prove that these tensors are almost characterized by their symmetries.
We prove several nontrivial equations for both the slice rank and the minrank varieties using different GCT methods. 
Many equations also work in the regime where membership testing in the slice rank or minrank varieties is $\NP$-hard. In particular,
we obtain equations by using succinctly represented large determinants, Koszul-flattenings, and
representation theoretic methods, more precisely, by using highest weight vectors and bounding multiplicities.
We view this as a promising sign that the GCT approach might indeed be successful.

\end{abstract}

\clearpage
\setcounter{page}{1}

\section{Introduction}

\subsection{Testing membership in algebraic varieties}

Testing whether a point lies in an algebraic variety is a fundamental problem in 
algebraic geometry. ``Given'' a variety $V$ and a point $x$ in the ambient space,
our task is to decide whether $x \in V$ or not. The complexity of this task
depends on how the variety $V$ is given. One natural way of representing a variety is a tuple
of circuits $C_1,\dots,C_m$ computing a set of defining polynomials $f_1,\dots,f_m$ for $V$,
that is, $V$ is the set of common zeros of $f_1,\dots,f_m$. In this case, the problem turns out
to be easy; it is deterministically polynomial-time equivalent to the arithmetic circuit identity
testing problem. For the one direction, note that $x \in V$ iff $f_1(x) z_1 + \dots + f_m(x) z_m$ 
is identically zero. Here, $z_1,\dots,z_m$ are new variables. For the other direction,
we use the fact that general arithmetic circuit identity testing can be reduced to the
case when the circuit computes a constant (that is, degree zero) polynomial \cite{DBLP:journals/siamcomp/AllenderBKM09}.
In this proof, the polynomial is transformed by a Kronecker substitution and then
evaluated at a point of the form $(B,\dots,B)$. Therefore, arithmetic circuit identity testing
even reduces to the case when $V$ is a hypersurface.
 
This shows that when $V$ is given by circuits, the membership problem is easy. A complicated variety will have
a large circuit and therefore, we have more computation time for deciding whether $x \in V$.
However, often we do not know a set of defining equations explicitly. While the computation
is possible in principle, for instance by Gr\"obner bases, it is very costly.
Therefore, we can think of other ways to represent varieties. An obvious way is
to encode the variety explicitly in the problem. One prominent example is the border tensor rank problem.
We are given a tensor $t \in K^{n \times n \times n}$ and want to know whether its border rank
is at most $r$. The tensors in $K^{n \times n \times n}$ of border rank $\le r$ form a variety.

More general is the problem when the variety is given as an \emph{orbit closure}.
Here we have a group $G$ that acts on the ambient space $S$, that is we have a mapping
$\cdot: G \times S \to S$ that satisfies the axioms $1 \cdot s = s$ and $(gh) \cdot s = g\cdot (h \cdot s)$
for all $s \in S$ and $g,h \in G$. Here $gh$ is the group operation. 
Let $\GL_n$ denote the group of all invertible $n \times n$ matrices. $\GL_n$ acts on $K^n$ by
the usual matrix-vector multiplication. $G_n := \GL_n \times \GL_n \times \GL_n$ 
acts on rank-one tensors $u \otimes v \otimes w$ by $(A,B,C) \cdot u \otimes v \otimes w = 
Au \otimes Bv \otimes Cw$ and on arbitrary tensors by linear continuation. 
The orbit of a tensor $t$ under $G_n$ is the set $G_n t := \{ g \cdot t \mid g \in G_n \}$
and its orbit closure is the closure $\overline{G_n t}$ in the Zariski topology.
It is well known that the variety of all tensors of border rank $\le r$ can be 
written with the help of an orbit closure \cite{BCS}, namely $\overline{G_r e_r}$ where $e_r$ is the so-called
unit tensor in $K^{r \times r \times r}$: A tensor $t \in K^{n \times n \times n}$
has border rank $\le r$ iff $\tilde t \in \overline{G_r e_r}$, where $\tilde t$ is an embedding of $t$
into the larger ambient space $K^{r \times r \times r}$.

Orbit closure problems have played a central role in algebraic complexity theory in the recent
years. Not only the border rank problem can be phrased as an orbit closure, but also
the famous permanent versus determinant problem. This is the starting point of the 
geometric complexity program initiated by Mulmuley and Sohoni, see Section~\ref{sec:gct}.

We can think of a tensor $t \in K^{n \times n \times m}$ as a set of 
$m$ matrices $A_1,\dots,A_m$ of size $n \times n$, stacked up on top of each other (also called slices). 
The group $\Gamma_n := \SL_n \times \SL_n$ acts
on $t$ by simultaneously multiplying each of the matrices from the left and the right.
B\"urgin and Draisma \cite{bd} showed that the \emph{noncommutative} rank of the matrix
space given by $A_1,\dots,A_m$ is maximal iff $0 \in \overline{\Gamma_n t}$.
(All such tensors $t$ are said to lie in the \emph{null cone}.) 
Garg et al.~\cite{DBLP:conf/focs/GargGOW16} show how to decide the null-cone problem in this setting in polynomial time,
hence giving a deterministic noncommutative identity testing algorithm. (Unfortunately,
we do not know whether something similar can be achieved in the commutative setting.
More unfortunately, Makam and Wigderson proved recently that the commutative case cannot be written as a null-cone problem \cite{makam2019singular}.)

While the complexity of the border rank is still unknown to our best knowledge,
the fact that the tensor rank problem is $\NP$-hard might be seen as an indication
that the border rank problem is hard, too. The noncommutative identity testing problem
however is easy. Of course, we have different group actions in these two problems.
Furthermore, in the border rank problem, the vector on the right hand side is essentially fixed,
namely to $e_r$, whereas in the latter problem, the vector on the left hand side is fixed,
namely it is zero. As our first contribution, we settle the complexity of testing whether
a tensor $t$ lies in the orbit closure of another tensor $t'$ under the group action
$\GL_k \times \GL_m \times \GL_n$. Note that when $t$ lies in the closure of $t'$, then
the whole orbit closure of $t$ is contained in the closure of $t'$. Therefore, we refer 
to this problem as the \emph{orbit closure containment problem}. We prove that the orbit closure containment problem
is $\NP$-hard for tensors under the $\GL_k \times \GL_m \times \GL_n$ action by defining a quantity
called minrank (see Sections~\ref{sec:homogminrank} and~\ref{sec:geo}). We prove that deciding whether the minrank is bounded by some given bound $b$ is an $\NP$-hard question (see Section~\ref{sec:hard}) and
furthermore, that this question can be phrased as an orbit closure containment problem. 
We also study another quantity, the so-called \emph{slice rank}. The slice rank
was introduced recently, in the proof of the capset conjecture. The tensors of slice rank
bounded by $r$ form a variety, too. Its structure seems to be more complicated, we prove
that it is the union of polynomially many orbit closures. We show that the membership
problem for the slice rank variety is $\NP$-hard, too.

If the orbit closure of $t$ is not contained in the orbit closure of $t'$, then there is a polynomial $f$ that vanishes
on the orbit closure of $t'$ but $f(t) \not= 0$. Such an $f$ is a proof that $t$ is not contained 
in the orbit closure of $t'$. 
Now let $t'$ be a tensor such that membership testing in the orbit closure of $t'$
is NP-hard, for instance, $t'$ could come from a sequence of tensors that generate the minrank varieties.
We will prove that unless the polynomial time hierarchy collapses,
not all such $f$ can have polynomial size algebraic circuits.
This can be viewed as an instance of the algebraic natural proofs framework,
introduced by Forbes et al.~as well as Grochow et al., see Section~\ref{sec:anp}.

When such an $f$ has superpolynomial circuit size, this is an indication that
proving that $t$ is not contained in the orbit closure of $t'$ might be hard.
However, when we want to separate the permanent from the determinant, we need to 
prove a statement like this. (Note however, that we currently do not know whether it is
hard to test whether a polynomial lies in the orbit closure of the determinant,
this is an algebraic variant of the so-called minimum circuit size problem.)
In the third part of this paper we investigate how methods from geometric complexity
theory might overcome this barrier by constructing equations for the slice rank variety which is a union of orbit closures and the minrank variety which in fact is an orbit closure.

\subsection{Slice rank problem}

The notion of slice rank was first used implicitly by Croot, Lev, and Pach in their application of the so-called polynomial method in their breakthrough work on progression-free sets, also known as capsets \cite{croot17}. Later Tao \cite{tao16} gave a symmetrized formulation of this method and used slice rank explicitly. In \cite{BCCGNSU17}, Blasiak et al.~used the term slice rank for the notion that Tao introduced. They used this notion to extend the results on capsets and obtained some barrier results on the group-theoretic approach to the matrix multiplication. Further Tao and Sawin \cite{ts16} explored slice rank of tensors systematically. The methods based on slice rank have been very useful in advancement of several combinatorial problems like the sunflowers free sets, the tri-colored and multi-colored sum-free sets, the capsets and the progression-free problem, and multiplicative matching in nonabelian groups (see for instance  \cite{ellenberg17, naslund17, lovasz19, sawin18}). 

We describe the notion of slice rank and then the corresponding computational problem. For this, we consider the space $V_1 \otimes V_2 \otimes V_3$. It can also be written as ${\bigotimes_{i=1}^3 V_i}$, and is generated by the decomposable (also called rank-one) tensors $v_1 \otimes v_2 \otimes v_3$, where $v_i \in V_i$. 
The usual tensor rank is the minimum number of decomposable tensors that is needed to write
a given tensor as a sum of decomposable tensors. The slice rank is defined in a similar manner,
however, the basic building blocks are not decomposable tensors but tensors that can be
decomposed into a matrix and a single vector. More formally,
consider the smaller tensor products ${\bigotimes_{1 \leq i \leq 3: i \neq j} V_i}$ and the $j$-th tensor products $\otimes_j: V_j \times \bigotimes_{1 \leq i \leq 3: i \neq j} V_i \rightarrow \bigotimes_{i=1}^3 V_i$ with its natural definition.
Now the rank one functions are the elements of the form ${v_j \otimes_j v_{\hat j}}$ for some  ${v_j \in V_j}$ and ${v_{\hat j} \in \bigotimes_{1 \leq i \leq 3: i \neq j} V_i}$. 
The slice rank (or $\srk$ for short) of a tensor $T \in {\bigotimes_{i=1}^3 V_i}$ is the smallest nonnegative integer ${r}$ such that $T$ can be expressed as a linear combination of ${r}$ rank one functions. 
For its comparison with other notions of rank of tensors, like subrank and multi-slice rank, see \cite[ Section 5]{christiandl18}. For its relation to the analytic rank and the partition rank, see \cite{Lovett19}. For its connection to the null cone problem of group actions, see \cite{BGOWW18, BCCGNSU17}.

The slice rank problem is the following.

\begin{problem}
1We are given $T \in \F^n \otimes \F^n \otimes \F^n$ and a number $r$, and we want to know whether $\srk(T) \leq r$.
\end{problem}

Prior to this work, nothing about the complexity status of the problem was known to our best knowledge.

\subsection{Matrix completion and minrank problems}

An instance of a matrix completion problem over some field $K$ is
an $n \times n$-matrix $A$ that is filled with elements from $K$ or
with a special symbol $*$. One can think of the $*$'s as placeholders
that can be replaced by arbitrary elements from $K$. The goal is to replace
the $*$'s in such a way that the rank of the resulting matrix
is either minimized or maximized, depending on the application. 

Matrix completion has many applications, for instance, in machine learning and network coding,
we here just refer to 
\cite{DBLP:journals/combinatorica/Peeters96, DBLP:conf/soda/HarveyKY06, DBLP:conf/colt/HardtMRW14},
which contain relevant hardness results. 
When we consider minimization, the problem is $\NP$-hard, even when the
resulting matrix has rank $3$ \cite{DBLP:journals/combinatorica/Peeters96}. When we consider maximization,
then the problem is $\NP$-hard over finite fields \cite{DBLP:conf/soda/HarveyKY06}. 
Over large enough fields,
there is a simple randomized polynomial time algorithm that simply
works by plugging in random elements from a large enough set.
The correctness of this algorithm follows from the well-known Schwartz-Zippel lemma.

We can phrase the matrix completion problem as a problem on tensors or on
tuples of matrices.
Let $E_{i,j} \in K^{n \times n}$ be the matrix that has a $1$ in position 
$(i,j)$ and zeros elsewhere. Let $A_0$ be the matrix that is obtained
from $A$ by replacing every $*$ by a $0$. For every star, we create
a matrix $E_{i,j}$ where $(i,j)$ is the position of the $*$.
Let $F_1,\dots,F_m$ be the resulting matrices.
We can view $(A_0,F_1,\dots,F_m)$ as a tensor in $K^{n \times n \times (m+1)}$.
We call $A_0,F_1,\dots,F_m$ the slices of this tensor.
Then the matrix completion problem can be phrased as follows: Find the minimum $r$
such that there are $\lambda_1,\dots,\lambda_m \in K$ fulfilling
\[
   \rk(A_0 + \lambda_1 F_1 + \dots + \lambda_m F_m) \le r.
\]
Here $\rk$ denotes the usual matrix rank.
Many variants of matrix completion have been studied in the literature.
For instance, instead of having simply $*$'s we can have variables and
each occurrence of a variable has to be replaced by the same value. This can naturally be
modeled as a tensor problem, too: Each of the $F_i$ will have a $1$ at each position
where a particular variable occurs and $0$'s elsewhere. The most general setting would
be the following: Given a tensor $t$ as a tuple of $n \times n$-matrices $(A_0,A_1,\dots,A_m)$, what is
the minimum $r$ such that there are $\lambda_1,\dots,\lambda_m$ 
with
\begin{equation} \label{eq:affine}
    rk(A_0 + \lambda_1 A_1 + \dots + \lambda_m A_m) \le r.
\end{equation}
We call this problem a generalized matrix completion problem and we call
the minimum value $r$ above the \emph{completion rank} of $t$.

In \cite{DBLP:conf/stoc/BlaserIJL18}, it is shown that given $t$ and a bound $r$, deciding whether
the completion rank of $t$ is bounded by $r$ is $\NP$-hard. Furthermore---and this is the interesting case here---even testing
whether $t$ is in the \emph{algebraic closure} of the set of all tensors of completion rank $\le r$
is $\NP$-hard. The smallest $r$ such that this is the case, is called the \emph{border completion rank}.
This makes the class of all tensors of border completion rank bounded by some number $r$ an interesting test case
for algebraic natural proofs and methods from geometric complexity theory.

When we want to address this problem with methods from geometric complexity theory,
it is unsatisfactory that in (\ref{eq:affine}), we have an affine matrix pencil, that it,
the matrix $A_0$ is always contained in the linear combination. It would be much more natural
to view this as a problem in projective space, that is, we allow arbitrary nonzero linear combinations. 
We call this measure the \emph{minrank}, since it is the smallest rank of any nonzero matrix that is contained in the linear
span of the slices of the tensor.
In the hardness proofs in \cite{DBLP:conf/stoc/BlaserIJL18},
it is crucially used that $A_0$ always has unbounded rank whereas all other $A_i$ always have constant rank.
Therefore, the hardness proofs for completion rank do not transfer to minrank.


\subsection{Algebraic natural proofs} \label{sec:anp}

Algebraic natural proofs were introduced by Forbes, Shpilka, and Volk~\cite{DBLP:conf/stoc/ForbesSV17}
and independently by Grochow, Kumar, Saks, and Saraf~\cite{DBLP:journals/corr/Grochow0SS17} (see also \cite{ADblog,AD})
as an attempt to transfer Razborov and Rudich's famous barrier
result \cite{DBLP:journals/jcss/RazborovR97} for Boolean circuit complexity to
algebraic complexity theory.

Let $X$ be a set of indeterminates.
We fix a set of monomials $\cM \subseteq K[X]$ and we consider the linear
span $\lin \cM$ of $\cM$ in $K[X]$. Every polynomial in $\lin M$
is of the form $\sum_{m \in \cM} c_m m$. Every $f \in \lin M$ is identified
with its list of coefficients $(c_m)_{m \in \cM}$. We consider a class
$\cC \subseteq \lin M$. Think of $\cC$ as the polynomials of ``low'' complexity
in $\lin M$. An \emph{algebraic proof} or \emph{distinguisher} is a nonzero polynomial 
$D$ in $|\cM|$ variables $T_m$, $m \in M$, that vanishes on the coefficient vectors
of all polynomials in $\cC$. If for $f \in \lin M$, $D(f) \not= 0$, then
$D$ proves that $f$ is not in $\cC$, that is, $f$ has ``high'' complexity.

\begin{definition}[Algebraic Natural Proofs \cite{DBLP:conf/stoc/ForbesSV17,DBLP:journals/corr/Grochow0SS17}]
Let $X$ be a set of variables and let $M \subseteq K[X]$ be a set of monomials.
Let $\cC \subseteq \lin M$ be a set of polynomials 
and let $\cD \subseteq K[T_m: m \in M]$.  

A polynomial $D$ is an \emph{algebraic $\cD$-natural proof against $\cC$}, if 
\begin{compactenum}
\item $D \in \cD$,
\item $D$ is a nonzero polynomial, and
\item for all $f \in \cC$, $D(f) = 0$, that is, $D$ vanishes on the coefficient
vectors of all polynomials in $\cC$.
\end{compactenum}
\end{definition}

Furthermore, for $f_0 \in \lin M$, 
we call $D$ as above an algebraic $\cD$-natural proof for $f_0$ against $\cC$,
if we have $D(f_0) \not= 0$. That is, $D$ proves that $f_0$ is not in $\cC$.

A \emph{hitting set} for some class of polynomials $\cP$ in $\mu$ variables
is a set of vectors $H \subseteq K^{\mu}$ such that for all $p \in \cP$,
there is an $h \in H$ such that $p(h) \not= 0$. Forbes et al.\
as well as Grochow et al.\ go on and define $\cC$-succinct hitting sets where $\cC$ is some class of polynomials. 
Their main barrier result is that there are either algebraic $\cD$-natural proofs against $\cC$
or $\cC$-succinct hitting sets for $\cD$.

Maybe the most interesting example is when $\cC$ is the class of polynomials
in $n$ variables that have degree $\poly(n)$ and circuit size $\poly(n)$,
that is, we get the class $\VP$ when we run over all $n$. If a polynomial
vanishes on a particular set, it also vanishes on the Zariski closure of this set. 
So an algebraic proof against some class $\cC$ will vanish on polynomials $f$
that are not contained in $\cC$, but are contained in the closure $\overline{\cC}$.
Polynomials in the border $\overline{\cC} \setminus \cC$ may have higher complexity
than polynomials in $\cC$ (otherwise, they would be in $\cC$), yet they cannot
be distinguished by an algebraic proof from polynomials in $\cC$, independently
of any barrier. Therefore, to study algebraic proofs properly, one needs to look at Zariski closed  classes of polynomials. 

The setting above is not only limited to the class $\VP$,
we can for instance also consider tensors of order three, that is, trilinear forms.
In this case, instead of circuit complexity, we study for instance the border rank
of tensors. While the complexity of the border rank is still open (note however, that testing the
rank is a hard problem 
\cite{DBLP:Hastad, shitov, DBLP:journals/corr/SchaeferS16, DBLP:conf/stoc/BlaserIJL18, DBLP:conf/approx/Swernofsky18}), 
Bl\"aser et al. \cite{DBLP:conf/stoc/BlaserIJL18} defined
a related measure, the so-called border completion rank.
They proved that border completion rank is $\NP$-hard. From this it follows that from any set of equations
defining the variety of tensors of border completion rank bounded by a certain value at least one
of the equations has superpolynomial algebraic circuit complexity, unless $\coNP \subseteq \exists\BPP$. 

Even more general, we can consider this problem for any variety $V$. In our setting, $V$ would be $\overline{\VP}$
or the orbit closure of the determinant or the variety of all tensors with border rank bounded by some $r$. 
An equation $f$ of this variety
can be considered as an algebraic proof: If $f(x) \not= 0$, then $f$ is a proof that $x$
is not in $V$. Of course, if $V$ is arbitrarily complex, then the complexity of $f$ can be of course
arbitrarily high, therefore, we are interested when $V$ can be easily described,
for instance as the closure of objects (polynomials, tensors, \dots) of low complexity.

\subsection{Geometric Complexity Theory} \label{sec:gct}
In \cite{Val:79} Valiant proved that every polynomial $f$ can be written as the determinant of a matrix whose entries are affine linear polynomials.
The required matrix size to write $f$ in this way is called the \emph{determinantal complexity} $\text{dc}(f)$.
The flagship conjecture in algebraic complexity theory is that the sequence $\text{dc}(\per_m)$ is not polynomially bounded,
where $\per_m := \sum_{\pi \in \aS_m}\prod_{i=1}^m x_{i,\pi(i)}$ is the permanent polynomial.
In terms of algebraic complexity classes, this can be succinctly phrased as $\VDET \neq \VNP$.
Mulmuley and Sohoni \cite{gct1,gct2} proposed to reinterpret Valiant's determinant versus permanent conjecture in terms of questions about
certain orbit closures and the representations in their coordinate rings.
They arrive at the potentially stronger conjecture $\overline{\VDET} \neq \overline{\VNP}$
and coined the name \emph{geometric complexity theory} (GCT) for their approach.
At the center of their attention is the study of the orbit closure $\Det_n := \overline{\GL_{n^2}\det_n}$,
as it allows us to define the \emph{border determinantal complexity} $\underline{\text{dc}}(f)$ to be the smallest $n$ such that the padded polynomial $x_{1,1}^{n-m}f$
is contained in $\overline{\GL_{n^2}\det_n}$. The conjecture $\overline{\VDET} \neq \overline{\VNP}$
is equivalent to $\underline{\text{dc}}(\per_m)$ growing superpolynomially.
The definition of $\underline{\text{dc}}$ from $\text{dc}$ is in complete analogy to going from tensor rank to the border rank of tensors, see e.g.\ \cite{Bin:80}.

To prove that a point $p$ does not lie in $\Det_n$ one searches for polynomials vanishing on $\Det_n$, that is, \emph{equations for the variety $\Det_n$}.
Those which do not vanish on $p$ are sometimes called \emph{separating polynomials} in the GCT literature, as they prove $p \notin \Det_n$.
In the language of the previous section, they would be called algebraic proofs or distinguishers.
Sometimes representation theory can be used to find equations for varieties:
Since $\Det_n$ is closed under the action of $\GL_{n^2}$, the vanishing ideal $I(\Det_n)$ decomposes into a direct sum of irreducibles in each degree.
If $\la$ is an $\GL_{n^2}$-isomorphism type and $a_\la$ is the multiplicity in the coordinate ring of the ambient space $\bbC[x_{1,1},\ldots,x_{n,n}]_n$,
then the algebraic Peter-Weyl theorem implies that an equation of type $\la$ exists if the multiplicity of $\la$ in the coordinate ring of the orbit $\GL_{n^2}\det_n$
is less than $a_\la$.
This multiplicity if known as the \emph{rectangular symmetric Kronecker coefficient}, see \cite{BLMW:11}.
This criterion is satisfied in numerous cases, see the appendix of \cite{Ike:12b}.

The GCT approach is very general and can be applied to numerous algebraic measures of complexity.
As one example, the border rank of the matrix multiplication tensor was phrased in this setup \cite{BI:11, BI:13} and explicit lower bounds for this border rank have
been found using the multiplicities in the coordinate ring of the $\GL_n \times \GL_n \times \GL_n$ orbit of the unit tensor.
These were the first lower bounds in algebraic complexity theory found using this approach.
Since our space of objects is a tensor space and the action a product of general linear groups, the multiplicities in the coordinate ring of this ambient space
are given by the infamous\footnote{``Although Kronecker coefficients are a classical subject, frustratingly little is known about them.'' \cite{Bur}} Kronecker coefficients.

\section{Our contributions}


We give a detailed overview of our results and its meaning for geometric complexity theory. 
We keep this exposition at a non-expert level, the complete results can be found in the subsequent sections.

\subsection{The slice rank problem and orbit closures}

Our first contribution is that we show that the slice rank problem is $\NP$-hard
under polynomial time many-one reductions.
(see Section \ref{sec:hardslices}). For this, we use a connection of the slice rank to the size of a minimum vertex cover of a hypergraph by Tao and Sawin \cite{ts16}. They showed that for every $3$-uniform, $3$-partite hypergraph $H$, one can associate a tensor $T_{H}$, and if the edge set of the hypergraph forms an antichain, then the slice rank of the associated tensor $T_{H}$ equals the size of the minimum vertex cover of the hypergraph $H$. To our best knowledge, the complexity of the decision version of the slice rank problem
for order-three tensors has been open so far.
Prahladh Harsha, Aditya Potukuchi, and Srikanth Srinivasan kindly sent us an unpublished
manuscript, in which they prove that the order-four case is $\NP$-hard. However, this one
more tensor leg gives an additional degree of freedom, which easily allows to establish
the antichain condition. B\"urgisser et 
al.~\cite[page 27]{brgisser2017alternating}
report that Sawin has an unpublished proof that \emph{computing} the slice-rank of tensors of order three is $\NP$-hard. 
However, they also state that the decision version is open.


We show the $\NP$-hardness of the slice rank problem for order-three tensors by showing that the $3$-uniform, $3$-partite hypergraph minimum vertex cover problem where the edge set forms an antichain is $\NP$-hard. The corresponding hypergraph minimum vertex cover problem without the antichain restriction is known to be $\NP$-hard \cite{gottlob10} by reduction from the usual 3-SAT problem. However, their reduction does not work if one wants to adapt it to the antichain restriction. We use a reduction from a restricted SAT-variant, the bounded-occurrence mixed SAT (bom-SAT) problem, in which there are 3-clauses and 2-clauses, and every variable occurs exactly thrice, once in a 3-clause and twice in 2-clauses. Because of the antichain restriction, our labelling of the gadget becomes very delicate and needs to be handled very carefully in the reduction (see Lemma \ref{lem:slicelabel}).

Next, we phrase the slice rank problem in terms of orbit closures. More specifically, we show that testing whether a tensor $T \in \F^{n \times n \times n}$ has $\srk(T) \leq r$ is equivalent to testing if the tensor $T$ is contained in a polynomially large union of orbit closures. Let $(r_1, r_2, r_3)$ be such that $r_1 + r_2 + r_3 = r$. We first embed $T$ in a larger subspace $ U' \otimes V' \otimes W' \cong \F^{s_1} \otimes \F^{s_2} \otimes \F^{s_3}$ (this is called padding), where $s_1 = r_1 + nr_2 +nr_3$, $s_2 =  nr_1 + r_2 + nr_3$ and $s_3 = nr_1 + nr_2 + r_3$, and define  
\[ 
S_{n,r_1,r_2,r_3} = \sum_{i =1}^{r_1} \sum_{j=1}^{n}e^1_i \otimes e^1_{ij}\otimes e^1_{ij} + \sum_{i =1}^{r_2} \sum_{j=1}^{n}e^2_{ij} \otimes e^2_i\otimes e^2_{ij} + \sum_{i =1}^{r_3} \sum_{j=1}^{n}e^3_{ij} \otimes e^3_{ij}\otimes e^3_i.
\]
Intuitively, in the sum above, we have $r_1$ rank-one elements of the form ${v_1 \otimes_1 v_{\hat 1}}$ with ${v_1 \in V_1}$ and ${v_{\hat 1} \in \bigotimes_{1 \leq i \leq 3: i \neq 1} V_i}$,
$r_2$ elements of the form $v_2 \otimes_2 v_{\hat 2}$, and $r_3$ elements of the form $v_3 \otimes_3 v_{\hat 3}$. 
Now $\srk(T) \leq r$ becomes equivalent to testing whether $T$ is in the orbit closure of the $S_{n,r_1,r_2,r_3}$ for some $(r_1, r_2, r_3)$ with $r_1 + r_2 + r_3 = r$. 
Thus we show that the slice rank variety $\mathcal{SV}_{\mathbb{F}^n \otimes \mathbb{F}^n \otimes \mathbb{F}^n, r}$ is the union of orbit closure of $S_{n,r_1,r_2,r_3}$ over all $(r_1, r_2, r_3)$ with $r_1 + r_2 + r_3 = r$, intersected with the ambient space $\F^{n} \otimes \F^{n} \otimes \F^{n}$,
see Section~\ref{sec:srk:orbit} for details.
Note that Tao showed that the set of all $T$ with $\srk(T) \leq r$ is closed,
so there is no need to define a notion of border slice rank (see \cite[Corollary 2]{ts16}).
This is different to the situation with determinantal complexity and border determinantal complexity or tensor rank and border rank.

Next we go on to determine the stabilizer of $S_{n, r_1, r_2, r_3}$, i.e., the subgroup of $\GL(U') \times \GL(V') \times \GL(W')$ which fixes $S_{n,r_1, r_2, r_3}$ (Theorem~\ref{slicerank:thm:stab}). We can also show that each $S_{n,r_1, r_2, r_3}$ is almost characterized by its stabilizer, i.e., it is a direct sum of three tensors that are each characterized by their respective stabilizers (Theorem \ref{thm:srk:sym}). This is an important property in the context of geometric complexity theory.
Both the permanent and the determinant are characterized by their respective stabilizers as well.

Phrasing the problem geometrically allows us to find equations for the slice rank varieties. This makes the slice rank problem an interesting ``testing ground'' for the methods of geometric complexity theory.
The situation is very similar to the permanent-determinant and border rank settings that have been studied in geometric complexity theory; 
the symmetries of the tensors determine the orbits.
This allows us to analyze the problem using representation-theoretic methods.
What makes our ``testing ground'' very appealing from a complexity theoretic point of view is the fact
that we can prove that testing containment in the slice rank varieties is $\NP$-hard, something which we
do not know for $\overline{\VP}$, the orbit closure of the determinant, or tensors of a given border rank.
This hardness allows us to reason about proof barriers.

\subsection{The minrank problem and orbit closures}

As a second test case, we study the minrank problem as a test case for the geometric complexity methodology. In contrast to the slice rank problem, the corresponding variety can
be written as a \emph{single} orbit closure. Since we also prove that containment in the minrank variety 
is $\NP$-hard, we obtain the result that the orbit closure containment problem is $\NP$-hard (Corollary \ref{cor:occhard}),
which we cannot deduce from the hardness of slice rank.

For the minrank problem, we are given a tuple of matrices $A_1, \dots, A_k$ of the same size $m \times n$ and a 
number $r$, and the problem is to decide whether 
there exist a nonzero linear combination $x_1 A_1 + \dots + x_k A_k$ with rank at most $r$.
The smallest $r$ is called the \emph{minrank} of $A_1,\dots,A_k$. Instead of thinking
of a tuple of matrices, we can also view $A_1,\dots,A_k$ as a tensor in $F^{k \times m \times n}$
with $A_1,\dots,A_k$ being its slices. We will use both views in this paper. 

In contrast to the completion rank, we allow any nontrivial linear combination of the slices
in the minrank case, whereas for completion rank, we always require $x_1 = 1$.
We can view the minrank problem as the homogeneous version of the completion rank problem.
As minrank is a homogeneous problem, from an algebraic perspective, it is more natural than the completion rank.
Instead of affine varieties, we obtain a projective variety.
The $\NP$-hardness proofs in \cite{DBLP:conf/stoc/BlaserIJL18} critically used the fact
that $x_1 = 1$, since $A_1$ was a matrix that had rank linear in the input size whereas all
other matrices had the same, constant rank. These hardness proofs do not work in the  homogeneous setting,
since all instances created in the proofs trivially have the same minrank. As one of our
main results, we prove that testing containment in these projective varieties is still $\NP$-hard.

While the minrank problem certainly is an interesting problem on its own right, we consider it here as 
a ``test-bed'' for the geometric complexity approach. To this aim, we again want to show
that we can write the minrank problem as an orbit closure problem. For a tensor $T \in F^{k \times m \times n}$ 
given as $e_1 \otimes A_1 + \dots + e_k \otimes A_k$ and a linear form $x \in (F^k)^*$, we define the contraction
$Tx$ by $Tx := x(e_1) A_1 + \dots + x(e_k) A_k$, that is, we form a linear combination of the slices.
If we take the set of all $(T, x)$ with $\rk(Tx) \le r$ and $x \not= 0$ and project on the first component,
we get all tensors of minrank at most $r$.
Since the set of all such $(T, x)$ is invariant under scaling of $T$ or $x$ by nonzero factors, it also defines a \emph{projective} variety,
and the projection on the first component is a projective variety, too, see Section~\ref{sec:geo} for more 
details.
So we are in the nice situation where the set of all tensors of minrank at most $r$ is Zariski closed (Theorem~\ref{thm:closed}).
This means that we are in the same situation as slice rank;
we do not need an additional border complexity measure, i.\,e., minrank and border minrank coincide.
We denote the corresponding variety of all tensors $T \in U \otimes V \otimes W$
of minrank at most $r$ by $\cM_{U \otimes V \otimes W, r}$ or just $\cM_r$ when the tensor space is clear from context. 

Next, we want to write the minrank varieties $\cM_{U \otimes V \otimes W, r}$ as orbit closures. 
Note that we can always embed a tensor $T \in U \otimes V \otimes W$ into a larger
ambient space $U \otimes L \otimes L$, where $V$ and $W$ are subspaces of $L$,
by filling the new entries with zeros. (This process is called \emph{padding}.)
We then show (Corollary \ref{cor:orbit}), that $\cM_{U \otimes V \otimes W, r}$ is 
the $\GL(U) \times \GL(L) \times \GL(L)$-orbit closure of the tensor
\[
  T_{k, n, r} = e_1 \otimes (\sum_{j = 1}^r e_{1j} \otimes e_{1j}) + \sum_{i = 2}^{k} e_i \otimes (\sum_{j = 1}^n e_{ij} \otimes e_{ij}),
\]
intersected with the ambient space $U \otimes V \otimes W$ (here $k = \dim U$, $n = \dim L$).
This means that we can reduce the question whether a tensor has minrank at most $r$ to the question whether it
is contained in the orbit closure of $T_{k, n, r}$.

Then we go on and determine the so-called stabilizer of $T_{k, n, r}$ (Theorem~\ref{thm:stab}),
a subgroup of $GL(U) \times \GL(L) \times GL(L)$ which fixes $T_{k, n, r}$.
This is the group of ``symmetries'' of $T_{k, n, r}$.
It can be shown that the orbit of $T_{k, n, r}$ is completely described by the stabilizer, 
that is, all tensors having the same stabilizer lie in the orbit of $T_{k, n, r}$ (Theorem~\ref{thm:sym}).
This is an important property in the context of geometric complexity theory, only slightly weaker than
the property of being characterized by the stabilizer up to scale shared by determinant and permanent.


\subsection{Equations of slice rank varieties}

In Section~\ref{slicerank:sec:ideal} we describe many nonzero polynomials which vanish on slice rank varieties.
We use two different representation theoretic methods to find these equations, and interestingly both yield the same set of polynomials.
The first method in Section~\ref{subsec:slicerankeqdesign} uses multilinear algebra and highest weight vectors, while the second method in Section~\ref{subsec:slicerankeqmult} uses the stabilizer of the slice rank tensors and invariant theory.
Interestingly, among these equations we find that the unique $\SL_{n^2}\times\SL_{n^2}\times\SL_{n^2}$-invariant function of degree $n^3$ vanishes on tensors of slice rank $< n^2$ in $\bbC^{n^2}\otimes\bbC^{n^2}\otimes\bbC^{n^2}$.
This function is known as one of Cayley's hyperdeterminants. If the combinatorial property in \cite[Cor.~5.25(3)]{BI:17fundinv} is true, then this coincides with the \emph{fundamental invariant of the $n \times n$ matrix multiplication tensor}.

Let $k$, $m$ and $n$ denote the dimensions of $U$, $V$ and $W$ respectively.
Since the slice rank variety $\mathcal{SV}_{U \otimes V \otimes W, r}$ is invariant under the group action of $\GL(U)\times\GL(V)\times\GL(W)$,
the ideal of polynomials vanishing on it is also a representation of $\GL(U) \times \GL(V) \times \GL(W)$.
The irreducible polynomial representations of $\GL(U)$ are indexed by partitions with at most $\dim U$ many parts (a partition is a finite list of nonincreasing natural numbers).
The irreducible polynomial representations of $\GL(U) \times \GL(V) \times \GL(W)$ are indexed by triples of partitions $(\la,\mu,\nu)$ where $\la$ has at most $k$ parts, $\mu$ has at most $m$ parts, and $\nu$ has at most $n$ parts.
The multiplicity with which $(\la,\mu,\nu)$ occurs in the coordinate ring $\bbC[U \otimes V \otimes W]$ is called the \emph{Kronecker coefficient}.
It is nonzero only if $|\la|=|\mu|=|\nu|$,
in which case $D=|\la|$ is the degree of the polynomial.

Let $\ell = \lceil\sqrt{\max\{k,m,n\}}\rceil$.
In Section~\ref{slicerank:sec:ideal} we find that all partition triples that satisfy $\la_1>\ell$ or $\mu_1>\ell$ or $\nu_1>\ell$ vanish on $\mathcal{SV}_{U \otimes V \otimes W, D/\ell}$.

It is intriguing that both constructions in Section~\ref{subsec:slicerankeqdesign} and
Section~\ref{subsec:slicerankeqmult} give the same set of equations. Both approaches are rather indirect, but the multilinear algebra approach in Section~\ref{subsec:slicerankeqdesign} contains a construction principle for the functions,
whereas the approach in Section~\ref{subsec:slicerankeqmult} is purely based on invariant theory and the stabilizers of the slice rank tensors that we determine in Section~\ref{sec:srk:orbit}.
In Section~\ref{subsec:slicerankeqmult} do not write down the polynomials, but we obtain an upper bound on the multiplicities of irreducible representations in the coordinate ring of slice rank varieties.
The upper bound is given by the multiplicities in the coordinate rings of the orbits of the slice rank tensors.
When these multiplicities are all less than the Kronecker coefficients, then we get equations.

The polynomials described in Section~\ref{subsec:slicerankeqdesign} have polynomial degree, but the direct method of evaluation of these polynomials involves exponential sums. We conjecture that it is $\NP$-hard to check vanishing of these polynomials on a given tensor.
The polynomials obtained from comparing multiplicities are given by a type of representation.
It describes polynomials exactly using small size labels, but does not give a direct method of evaluation of these polynomials,
which makes it potentially able to overcome the barrier saying that in some cases algebraic proofs need to be hard.
But it also means we do not have a concrete tensor on which these polynomials do not vanish.
Indeed, the nonvanishing of the hyperdeterminant on the matrix multiplication tensor is an open question posed in \cite[Cor.~5.25(3)]{BI:17fundinv}.
It would be nice to find explicit nontrivial tensors for which we can prove a slice rank lower bounds using multiplicities.

\subsection{Equations of the minrank varieties}\label{subsec:eqforminrank}

Throughout this subsection, $k$, $m$ and $n$ denote the dimensions of $U$, $V$ and $W$ respectively.
In Section~\ref{sec:ideal} we describe several polynomials which vanish on minrank varieties.
Section~\ref{sec:eq-basic} gives a very basic example which follows directly from the definition: existence of a slice $Tx$ of rank at most $r$ means that the $(r + 1) \times (r + 1)$ minors of $Tx$ as polynomials in $x$ have a common zero.
Instead of this, a weaker condition can be checked. The degree $r + 1$ homogeneous polynomial map sending $x$ to the collection of all minors gives rise to a linear map sending $x^{\otimes (r + 1)}$ to the same minors.
The existence of the common zero in this case can be checked simply by checking the rank of this linear map.
Interestingly, in one special case this construction coincides with the construction of the hyperdeterminant~\cite{Gelfand1994}.

Another construction is an application of Koszul flattenings which are also used in the study of border rank of tensors.
The best currently known lower bounds for the border rank of the matrix multiplication tensor are based on Koszul flattenings~\cite{DBLP:journals/toc/LandsbergO15b,landsberg2017lower}
(but it is also known that these bounds cannot be significantly improved~\cite{wigdersonrank}).
A simplest example of this family of equations is the case $k = 3$, $m = n$.
Let $T = e_1 \otimes A_1 + e_2 \otimes A_2 + e_3 \otimes A_3$.
We can form a matrix
\[
\begin{bmatrix}
-A_2 & A_1 & 0 \\
-A_3 & 0 & A_1 \\
0 & -A_3 & A_2
\end{bmatrix}
\]
which is $\GL(U) \times \GL(V) \times \GL(W)$-covariant in the sense that if one tensor is obtained from another by an action of this group,
then the correponding matrices are also equivalent.
If $\rk(A_1) \leq r$, then the rank of this matrix is at most $2m + 2r$, which is less than the maximal possible $3m$ for $r < \frac{m}{2}$.
It is known~\cite{STRASSEN1983645} that for a generic tensor the rank of the above matrix is maximal,
which means that the equations for $\cM_r$ given by the $(2m + 2r + 1) \times (2m + 2r + 1)$ minors of this matrix are nontrivial.
This construction can be generalized. In Section~\ref{sec:eq-koszul} we describe a generalization that gives equations
when $m = \frac{k - p}{p + 1} n$ and $r < \frac{m}{k - p}$ where $p$ is a parameter (Theorem~\ref{thm:koszul}).

More interesting examples are given by representation-theoretic methods.
Since the minrank variety $\cM_r$ is invariant under the group action,
the ideal of polynomials vanishing on it is also a representation of $\GL(U) \times \GL(V) \times \GL(W)$.
In Section~\ref{sec:eq-rect} we describe a family of polynomials which potentially give equations for minrank varieties in cases $m, n > kr$.
They are constructed as specific highest weight vectors of certain $\GL(U) \times \GL(V) \times \GL(W)$-representations.
The nontriviality of these equations is connected to interesting combinatorial questions about Latin rectangles.
These problems arise from the evaluation of the constructed polynomial, which involve exponential sums.
We now give an example of such a degree 6 equation in the case $k = 2$, $n = m = 3$, $r = 2$.
Let $\sgn(w) \in \{-1,1\}$ denote the sign of a permutation $w$ of a set of numbers that start at 1,
e.g., $\sgn(1,3,2) = -1$ and $\sgn(1,2) = 1$.
We define $\sgn(w)=0$ for any other list of numbers $w$, e.g., $\sgn(2,3)=0$ and $\sgn(1,2,2)=0$.
Denote the components of the tensor $T \in \bbC^{2 \times 3 \times 3}$ by $T_{\alpha, \beta, \gamma}$.
Let $R:=\{(i,j)\mid 1\leq i \leq 2, 1 \leq j \leq 3\}$ be a $2 \times 3$ rectangle
and $I:=\{(\alpha, \beta, \gamma) \mid 1 \leq \alpha \leq 2, 1 \leq \beta,\gamma \leq 3\}$ the set of possible indices.
Define
\begin{equation}
\label{eq:hyperdet}\textstyle
f(T) = \sum_{\varphi : R \to I} \sigma(\varphi) \prod_{q \in R} T_{\varphi(q)},
\end{equation}
\vspace{-0.8cm}
\begin{eqnarray*}
\text{where \quad\quad} \sigma(\varphi) &=& 
\sgn\big(\varphi(1,1)_1,\varphi(2,1)_1\big)\cdot\sgn\big(\varphi(1,2)_1,\varphi(2,2)_1\big)\cdot\sgn\big(\varphi(1,3)_1,\varphi(2,3)_1\big) \\
&& \cdot\sgn\big(\varphi(1,1)_2,\varphi(1,2)_2,\varphi(1,3)_2\big)\cdot\sgn\big(\varphi(2,1)_2,\varphi(2,2)_2,\varphi(2,3)_2\big) \\
&& \cdot\sgn\big(\varphi(1,1)_3,\varphi(2,1)_3\big)\cdot\sgn\big(\varphi(1,2)_3,\varphi(2,2)_3\big)\cdot\sgn\big(\varphi(1,3)_3,\varphi(2,3)_3\big).
\end{eqnarray*}
Clearly, $f$ is a homogeneous degree 6 polynomial on $U \otimes V \otimes W$.
It will follow from the general results in Section~\ref{sec:eq-rect} that $f$ is nonzero and that $f$ vanishes on tensors of minrank $\leq 2$.
Moreover, this $f$ is a highest weight vector of weight $(3,3),(2,2,2),(3,3)$.

More indirectly, the existence of equations can be proven by showing that the ideal of $\cM_r$
contains an irreducible representations of a given weight.
We study this approach in Section~\ref{sec:eq-mult}.
There we show that nontrivial equations can be obtained purely from the description of minrank varieties
in terms of orbit closures, \emph{without} presenting the polynomial explicitly and computing its value on tensors from minrank varieties.
This is achieved by obtaining an upper bound on the multiplicities of irreducible representations in the coordinate ring of minrank varieties.
The upper bound is given by the multiplicities in the coordinate ring of the orbit of the tensor describing min-rank.
When this multiplicity is less than the multiplicity in the space of all polynomials (which is given by Kronecker coefficients), then we get an equation.
We determine the exact formula for the upper bound and verify that it indeed yields numerous equations,
for example one can calculate that one of the equations we find in this way is the one in \eqref{eq:hyperdet}.

Computational properties of the described polynomials vary.
Basic polynomials described in Section~\ref{sec:eq-basic} are compositions of determinants where inner determinants are of size $r + 1$ and outer determinants are of size $\binom{k + r}{r + 1}$.
If $r$ is constant (in particular, in the case $r = 1$), these polynomials are easy to compute.
This means that for almost all tensors of minrank greater than $1$ we can easily prove this using these basic equations.
Nevertheless, we will prove that in general this question is $\coNP$-hard and thus we are unlikely to have easy proofs for all tensors.
If, on the other hand, $r$ is linear in the size of the tensor (in this other regime we also prove hardness results) computation involves determinants of exponential degree.
Similarly, the ranks of Koszul flattenings from Section~\ref{sec:eq-koszul} are computable in polynomial time if the parameter $p$ is constant and involve determinants of exponential size if $p$ is linear in the size of the tensor.
The polynomials described in Section~\ref{sec:eq-rect} have polynomial degree, but the direct method of evaluation of these polynomials involves exponential sums. We conjecture that it is $\NP$-hard to check vanishing of these polynomials on a given tensor.
The polynomials obtained from comparing multiplicities are given by a type of representation.
It describes polynomials exactly using small size labels, but does not give a direct method of evaluation of these polynomials,
which makes it potentially able to overcome the barrier saying that in some cases algebraic proofs need to be hard.
But it also means we do not have a concrete tensor on which these polynomials do not vanish.
Of course these equations are nonzero, but we only know that it does not vanish on a tensor
of the form $T_{k,n,r}$ for large enough $r$. It would be nice to find other explicit tensors for which 
we can prove a minrank lower bound using multiplicities.

\subsection{Hardness of membership testing and algebraic natural proofs}

In Section~\ref{sec:hardslices}, we prove that testing membership in the slice rank varieties
is $\NP$-hard and
we prove in Section~\ref{sec:hard} that testing membership in the minrank varieties is $\NP$-hard.
Since the minrank varieties are orbit closures, we get as a corollary, that the orbit closure
containment problem is $\NP$-hard.
It turns out that even deciding whether the minrank is $\le 1$ is already $\NP$-hard.
We can use these hardness results to show the following lower bound for the equations of slice rank and minrank varieties:
For infinitely many $n$, there is an $m$, a tensor $T \in F^{m \times n \times n}$
and a value $r$ such that there is no 
algebraic $\poly(n)$-natural proof for the fact that the slice rank or minrank of $T$ is greater than $r$ 
unless $\coNP \subseteq \exists \BPP$. We prove this by providing a general methodology
for proving statements like this, generalizing results from \cite{DBLP:conf/stoc/BlaserIJL18}.
The two main ingredients needed to achieve this results are 
\begin{itemize}
 \item the $\NP$-hardness of the membership problem of the varieties and
 \item the ability to effectively generate a dense subset of the variety.
\end{itemize}
Here ``effectively generate'' means that we can provide a vector of polynomials, each computed
by a polynomial sized circuit, such that the image of the vector (interpreted as a polynomial map)
lies dense in the variety. Since the minrank varieties are orbit closures, it follows easily that the minrank
varieties satisfy the second property. For the slice rank varieties, this works, too, since
they are a polynomial union of orbit closures.

Now we can go as follows: If we assume that almost all minrank varieties are described by a set of equations
of polynomial size, we can decide non-membership as follows: Given a point $x$, we guess a polynomially sized circuit
for an equation $f$. Since we can effectively generate a dense set, we can check whether $f$ vanishes
on this set and therefore on the whole variety using polynomial identity testing. This can be done 
by an $\exists\BPP$-machine. Then we simply test whether $f(x) \not= 0$ using polynomial identity testing again
and can therefore decide non-membership. Since membership testing is $\NP$-hard, non-membership testing is
$\coNP$-hard and the result follows.

We can interpret this as a barrier result: We proved that we can get equations for the slice rank or minrank
varieties by different methods from geometric complexity. The result above means that in a full set of
equations, that is, for a set of equations that describe the variety completely, not all of them
will have algebraic circuits of polynomial size (unless the polynomial time hierarchy collapses).
Therefore, if we want to prove lower bounds with the GCT approach, we have to argue why 
in our lower bound proof, we do not (implicitly) evaluate the circuit to prove that a point is not
contained in the variety, but we do something more clever.

\subsection{Does GCT avoid natural proofs barriers?}

By the results mentioned in the previous section, if the $(V_n)$-membership problem
is $\NP$-hard for some family $(V_n)$ of varieties, then not all equations of $V_n$ can have polynomial size circuits.
However, in this paper we have constructed various equations for the slice rank and minrank varieties using the GCT methodology,
even in the regime where the membership problem is $\NP$-hard. While we do not know
whether all of the equations have polynomial size circuit---we rather suspect not, 
since they are described by exponential sized determinants or exponential sums---they have polynomial size descriptions
in other models, for instance, they are given by:
\begin{itemize}
\item succinctly represented exponential size determinants,
\item succinctly represented exponential sums, or
\item succinct representation-theoretic objects.
\end{itemize}

But by the end of the day, given a variety $V$ and a point $x$,
the GCT approach produces the description of an equation $f$ of the variety $V$
under consideration, such that $f(x) \not= 0$. The description of this equation
can be very short, as mentioned above.
However, we do not only have to give a description of the equation, but we also 
have to prove that $f(x) \not = 0$ and that $f$ vanishes on the corresponding variety.

More specifically, we have a sequence $(V_n)$ of varieties and a sequence of points $(x_n)$
and we want to prove that $x_n \notin V_n$. We do this by constructing a 
sequence of polynomials $(f_n)$ such that $f_n$ vanishes on $V_n$ and $f_n(x_n) \not= 0$.
What the term ``constructing'' means, that is, how do we represent the polynomial $f_n$
and how do we prove that it vanishes on $V_n$ and $f_n(x_n) \not = 0$, might depend
on our lower bound method. In our example, we can think of $(V_n)$ as being a sequence of minrank
varieties generated by $T_{k(n),n,r(n)}$ where $k(n)$ and $r(n)$ are chosen in
such a way that the membership problem in $(V_n)$ is $\NP$-hard, see Section~\ref{sec:hard}
for various possible choices of parameters. Slice rank is a little bit more complicated,
but essentially the same reasoning works.

From Theorem~\ref{thm:hardproof} it follows that when the $(V_n)$-membership problem 
is $\NP$-hard and $(V_n)$ fulfills the further, very natural prerequisites of the theorem, 
then we cannot expect that all equations of $V_n$ have polynomial circuit size. So the question
is whether we can prove that $f_n$ vanishes on $V_n$ and that $f_n(x_n) \not= 0$ despite
this natural proof barrier. 

One can think of various ways how to circumvent this barrier. Think of $(V_n)$ being a sequence of slice rank or minrank varieties
such that membership testing is hard:

\begin{itemize}
\item First of all, we might be lucky and the equations $f_n$ that we picked for $x_n$ have polynomial size circuits.
We cannot rule out this possibility, however, it seems very unlikely to us that this might actually happen.

\item For all our equations that we constructed in the paper, we were able to show that they vanish
on the corresponding variety. This means that if one of these equations has superpolynomial circuit size---what we consider
to be very likely---all our proofs in Sections~\ref{slicerank:sec:ideal} and~\ref{sec:ideal} cannot rely on evaluating this underlying circuit on a generic point of the variety,
even not implicitly. We think that this is a good sign. 

(A small side remark: One could say that to prove the vanishing of an equation, we do not need to evaluate the circuit,
it suffices to evaluate an algebraic decision tree.
However, it is well known that for every algebraic 
decision tree that decides membership in a variety, there is an algebraic circuit of roughly the same size
that \emph{computes} a multiple of this equation (by following the so-called generic path and then applying
Hilbert's Nullstellensatz) see e.g.~\cite{BCS}.
Then we can use Kaltofen factorization \cite{Kaltofen1987} and
the recent variants \cite{DBLP:conf/stoc/Dutta0S18}, which also work to some extent for exponential degree
equations, to get a circuit computing the equation. So algebraic decision trees will most likely not overcome
the natural proofs barrier.)

\item The second item that we have to prove is
$f_n(x_n) \not= 0$. In our setting, we easily can overcome this problem. We can choose
$S_{n,r_1(n),r_2(n),r_3(n)}$ such that $r_1(n) + r_2(n) + r_3(n)$ is large enough or $T_{k(n),n,r'(n)}$ for some large enough $r'(n) > r(n)$ as our point, respectively. Then, since we have an ascending chain of orbit closures,
$S_{n,r_1(n),r_2(n),r_3(n)}$ or
$T_{k(n),n,r'(n)}$ does not lie in $V_n$ by design and hence the equation $f_n$ will not vanish on 
it, since it is nontrivial. Again, we believe that this a promising sign, too. 
Of course, this will not be as easy
in the permanent versus determinant setting. However, there is  hope that GCT can prove $f(\per_n)\neq 0$ using symmetry properties of 
$\per_n$. Unlike a generic point, the permanent is characterized by its symmetries, so such a proof would be special to the permanent.
Note that for our points $S_{n,r_1(n),r_2(n),r_3(n)}$ and $T_{k(n),n,r'(n)}$,
the situation is similar. They
are almost characterized by their symmetries, for instance,
$S_{n,r_1(n),r_2(n),r_3(n)}$ is the direct sum of three tensors that
are characterized by its symmetries.

\item Finally, in the case of the permanent versus the determinant problem, it is possible that 
deciding whether a family of polynomials is in $\overline{\VP}$ is an easy problem.
This is the algebraic analogue of the minimum circuit size problem, the complexity of which
is widely open. The same could be the case for the border tensor rank problem, although here, this is rather
unlikely, since we know that the tensor rank problem is $\NP$-hard.

\end{itemize}

\section{Geometric description of slice rank varieties}

For the necessary mathematical background,
the reader is referred \cite{Shafarevich1994,landsberg1,landsberg2,biskript}.

In this section, we describe the geometric description of the slice rank varieties for 3-tensors.
Let us say we are given a 3-tensor $T \in U \otimes V \otimes W$, and we are interested in finding out if it has slice rank at most $r$, i.e., if $\srk(T) \leq r$.

In what follows, we phrase this problem geometrically and formulate it as  variety membership testing problem. More explicitly, we write it as membership testing of $T$ in a union of orbit closures of certain tensors.

\begin{lemma}(\cite[Corollary 2]{ts16})\label{lem:svclosed}
Let $U, V, W$ be vector spaces over an algebraically closed field $\mathbb{F}$. The set of all tensors $T \in U \otimes V \otimes W$ with slice rank at most $r$ is a Zariski closed set.
\end{lemma}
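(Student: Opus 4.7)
The plan is to reinterpret the slice rank bound as the existence of ``small'' coordinate subspaces witnessing a decomposition, and then exhibit the variety as the image of a closed incidence variety under a projection from a complete variety.

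First I would establish the key reformulation: $\srk(T) \leq r$ if and only if there exist nonnegative integers $r_1, r_2, r_3$ with $r_1 + r_2 + r_3 \leq r$ and subspaces $U' \subseteq U$, $V' \subseteq V$, $W' \subseteq W$ with $\dim U' = r_1$, $\dim V' = r_2$, $\dim W' = r_3$, such that
\[
T \in U' \otimes V \otimes W \;+\; U \otimes V' \otimes W \;+\; U \otimes V \otimes W'.
\]
The forward direction is immediate: take $U'$, $V'$, $W'$ as the spans of the vector parts of the type-1, type-2, type-3 summands respectively (discarding summands if some span is smaller than expected). The reverse direction is also clear, since any tensor in $U' \otimes V \otimes W$ has slice rank at most $\dim U'$ by choosing a basis of $U'$, and similarly for the other summands.

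Second, I would observe that the containment condition above is equivalent to the vanishing of the induced map on tensor products of quotients:
\[
(\pi_U \otimes \pi_V \otimes \pi_W)(T) = 0 \quad \text{in} \quad (U/U') \otimes (V/V') \otimes (W/W'),
\]
where $\pi_U, \pi_V, \pi_W$ are the canonical quotient maps. This is a linear (hence Zariski-closed) condition on $T$ for each fixed triple $(U', V', W')$, and using the tautological quotient bundles on the Grassmannians it varies algebraically with $(U', V', W')$. Hence for each fixed triple of dimensions $(r_1, r_2, r_3)$, the incidence set
\[
Z_{r_1,r_2,r_3} = \bigl\{(U', V', W', T) : (\pi_U \otimes \pi_V \otimes \pi_W)(T) = 0\bigr\}
\]
is a closed subvariety of $\mathrm{Gr}(r_1, U) \times \mathrm{Gr}(r_2, V) \times \mathrm{Gr}(r_3, W) \times (U \otimes V \otimes W)$.

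Third, I would project onto the tensor space. Since each factor $\mathrm{Gr}(r_j, \cdot)$ is a complete (projective) variety, their product is complete, and therefore the projection $Z_{r_1,r_2,r_3} \to U \otimes V \otimes W$ is a closed map. By the equivalence from the first step, its image is precisely the set of tensors of slice rank at most $r_1 + r_2 + r_3$ admitting a decomposition with exactly those type counts. Finally, the slice rank variety is the finite union $\bigcup_{r_1+r_2+r_3 \leq r} \mathrm{image}(Z_{r_1,r_2,r_3})$, hence closed.

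The main obstacle is the bookkeeping in the second step: one must verify that the condition ``$T$ lies in the sum of the three tensor products'' really is cut out by algebraic equations that depend algebraically on the varying triple $(U', V', W')$. Phrasing it via the single quotient tensor $(\pi_U \otimes \pi_V \otimes \pi_W)(T)$ --- or equivalently, describing $Z_{r_1,r_2,r_3}$ in local coordinates on the Grassmannian charts --- makes this transparent, and the rest of the argument is the standard ``image of projective is closed'' pattern.
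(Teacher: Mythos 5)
Your proof is correct and complete. The paper itself does not prove this lemma; it cites \cite[Corollary 2]{ts16}, and the Tao--Sawin argument is essentially the one you have reconstructed: reformulate slice rank at most $r$ as the existence of subspaces $U',V',W'$ with $\dim U'+\dim V'+\dim W'\leq r$ and $T\in U'\otimes V\otimes W + U\otimes V'\otimes W + U\otimes V\otimes W'$ (equivalently, $T$ maps to zero in $(U/U')\otimes(V/V')\otimes(W/W')$), form the incidence variety over a product of Grassmannians, and use that projection from a complete variety is a closed map. Your identification of the containment condition with the vanishing of the quotient tensor is the right way to make the incidence set manifestly closed, and the rest is the standard bookkeeping; one could add the trivial restriction $r_j\leq\dim$ of the relevant factor so that the Grassmannians are nonempty, but this does not affect the conclusion.
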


In fact, they even showed that the set of all tensors $T \in U \otimes V \otimes W$ with slice rank at most $r$ decomposed as $(r_1,r_2,r_3)$  for a fixed tuple $(r_1, r_2,r_3)$ with $r_1 + r_2 + r_3 = r$ is also Zariski closed.

\begin{definition}
We call the the affine variety
\[ \mathcal{SV}_{U \otimes V \otimes W, r} = 
\{ T \in U \otimes V \otimes W | \srk(T) \leq r \} \]
the \emph{affine slice rank variety} or simply the \emph{slice rank variety}. 
\end{definition}

When clear from the context, we drop the index $U \otimes V \otimes W$.

\begin{lemma}\label{lem:subspace}
Let $U, V$, and $W$ be subspaces of vector spaces $U', V'$, and $W'$, respectively. Then 
\[\mathcal{SV}_{U \otimes V \otimes W, r} = \mathcal{SV}_{U' \otimes V' \otimes W', r} \cap (U \otimes V \otimes W). \]
\end{lemma}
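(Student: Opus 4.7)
The plan is to prove the two inclusions separately. The inclusion $\mathcal{SV}_{U \otimes V \otimes W, r} \subseteq \mathcal{SV}_{U' \otimes V' \otimes W', r} \cap (U \otimes V \otimes W)$ is the straightforward direction: if $T \in U \otimes V \otimes W$ admits a slice rank decomposition with at most $r$ rank-one summands in the smaller space, then via the natural inclusion $U \otimes V \otimes W \hookrightarrow U' \otimes V' \otimes W'$ the same summands are still rank-one in the larger space (a tensor of the form $v_j \otimes_j v_{\hat j}$ with $v_j \in V_j$ and $v_{\hat j} \in \bigotimes_{i \neq j} V_i$ remains of that form inside $V'_j \otimes_j \bigotimes_{i \neq j} V'_i$). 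Moreover $T$ still lies in $U \otimes V \otimes W$, so membership in the intersection is immediate.

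The nontrivial inclusion is $\mathcal{SV}_{U \otimes V \otimes W, r} \supseteq \mathcal{SV}_{U' \otimes V' \otimes W', r} \cap (U \otimes V \otimes W)$. The key idea is to use a coordinate projection. Since $U$, $V$, $W$ are subspaces, pick linear complements and write $U' = U \oplus \tilde U$, $V' = V \oplus \tilde V$, $W' = W \oplus \tilde W$. Let $\pi_U : U' \to U$, $\pi_V : V' \to V$, $\pi_W : W' \to W$ be the associated projections, and set
\[
\Pi := \pi_U \otimes \pi_V \otimes \pi_W : U' \otimes V' \otimes W' \longrightarrow U \otimes V \otimes W.
\]
Then $\Pi$ acts as the identity on $U \otimes V \otimes W$, so for any $T \in U \otimes V \otimes W$ we have $\Pi(T) = T$.

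Now suppose $T \in U \otimes V \otimes W$ has slice rank at most $r$ in the larger space, so we can write $T = \sum_{\ell=1}^{r} s_\ell$ with each $s_\ell$ a slice-rank-one tensor in $U' \otimes V' \otimes W'$. Applying $\Pi$ to both sides yields $T = \sum_{\ell=1}^{r} \Pi(s_\ell)$. It therefore suffices to check that $\Pi$ sends slice-rank-one tensors to slice-rank-one tensors (or to zero). If, say, $s_\ell = u' \otimes_1 M'$ with $u' \in U'$ and $M' \in V' \otimes W'$, then
\[
\Pi(s_\ell) = \pi_U(u') \otimes_1 (\pi_V \otimes \pi_W)(M'),
\]
which lies in $U \otimes_1 (V \otimes W)$ and so is slice-rank-one or zero; the cases $j = 2, 3$ are analogous. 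Collecting the nonzero projected summands gives a slice rank decomposition of $T$ inside $U \otimes V \otimes W$ of size at most $r$, proving $\srk_{U \otimes V \otimes W}(T) \leq r$.

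The argument is completely elementary and the only mildly subtle step is the behavior of $\Pi$ on rank-one slice tensors, which is immediate from the factorization $\Pi = \pi_U \otimes \pi_V \otimes \pi_W$ once the splitting into complements is fixed. No closedness input from Lemma~\ref{lem:svclosed} is needed; the statement is a set-theoretic equality of slice rank sets.
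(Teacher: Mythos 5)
Your proof is correct, and it actually does more work than the paper's proof, which is a one-line assertion: ``A tensor lies in $\mathcal{SV}_{U \otimes V \otimes W,r}$ iff it is an element of the space $U \otimes V \otimes W$ and has slice rank at most $r$, i.e.\ lies in $\mathcal{SV}_{U' \otimes V' \otimes W', r}$.'' That phrasing quietly treats ``slice rank at most $r$'' as an ambient-independent property of $T$, which is exactly the point at issue: a priori, a slice-rank-one decomposition in the larger space $U' \otimes V' \otimes W'$ might use vectors outside $U, V, W$, so one cannot assume the decomposition restricts. Your projection argument via $\Pi = \pi_U \otimes \pi_V \otimes \pi_W$ is precisely the missing (elementary) step: $\Pi$ restricts to the identity on $U \otimes V \otimes W$, and $\Pi$ sends each slice-rank-one summand $v'_j \otimes_j M'_{\hat j}$ to $\pi_j(v'_j) \otimes_j (\pi_{\hat j}\,M'_{\hat j})$, which is again slice-rank-one or zero, so a length-$\leq r$ decomposition in the big space pushes down to one in the small space. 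Your observation that no closedness input from the Tao--Sawin lemma is required is also accurate; the identity is purely set-theoretic. In short, same conclusion, but you have supplied the argument the paper leaves implicit, and a careful reader would prefer your version.
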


\begin{proof}
A tensor lies in $\mathcal{SV}_{U \otimes V \otimes W,r}$ iff it is an element of the space $U \otimes V \otimes W$ and has slice rank at most $r$, i.e. lies in $\mathcal{SV}_{U' \otimes V' \otimes W', r}$.
\end{proof}

\begin{lemma}\label{lem:invariant}
The slice rank variety $\mathcal{SV}_{U \otimes V \otimes W ,r}$ is invariant under the standard action of $GL(U)\times GL(V) \times GL(W)$ on $U \otimes V \otimes W$.
\end{lemma}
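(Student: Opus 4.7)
The plan is to exploit the definition of slice rank directly: a tensor has slice rank at most $r$ if and only if it can be written as a sum of $r$ rank-one functions, where a rank-one function is a tensor of the form $v_j \otimes_j v_{\hat j}$ for some $j \in \{1,2,3\}$. So it suffices to show that the group action $GL(U) \times GL(V) \times GL(W)$ sends rank-one functions of type $j$ to rank-one functions of the same type $j$, and then invoke linearity.

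First I would fix $(A,B,C) \in GL(U) \times GL(V) \times GL(W)$ and consider a rank-one function of type $j=1$, namely $v_1 \otimes_1 v_{\hat 1}$ with $v_1 \in U$ and $v_{\hat 1} \in V \otimes W$. By the natural definition of $\otimes_j$, this equals the usual tensor $v_1 \otimes v_{\hat 1} \in U \otimes (V \otimes W)$. Applying $(A,B,C)$ gives $(Av_1) \otimes ((B \otimes C) v_{\hat 1})$, which is again of the form $v_1' \otimes_1 v_{\hat 1}'$ with $v_1' := Av_1 \in U$ and $v_{\hat 1}' := (B \otimes C)v_{\hat 1} \in V \otimes W$, hence a rank-one function of type $1$. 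The analogous statement for $j=2,3$ follows by symmetry.

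Now given $T \in \mathcal{SV}_{U \otimes V \otimes W, r}$, write $T = \sum_{s=1}^{r} R_s$ as a sum of $r$ rank-one functions. Then by linearity of the action, $(A,B,C) \cdot T = \sum_{s=1}^{r} (A,B,C) \cdot R_s$, and by the paragraph above each summand is again a rank-one function of the same type. Hence $\srk((A,B,C)\cdot T) \leq r$, so $(A,B,C) \cdot T \in \mathcal{SV}_{U \otimes V \otimes W, r}$.

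There is essentially no obstacle here; the entire argument is a direct consequence of the definition of slice rank together with the multilinearity of the tensor product. The only thing to be a bit careful about is that the action on $V \otimes W$ induced by $(B,C)$ is well-defined on rank-one elements and extended by linearity, so the identity $(A,B,C) \cdot (v_1 \otimes v_{\hat 1}) = Av_1 \otimes (B \otimes C)v_{\hat 1}$ holds for arbitrary $v_{\hat 1} \in V \otimes W$, not just decomposable ones. This is exactly the compatibility between the action on $U \otimes V \otimes W$ and the factorization $U \otimes (V \otimes W)$, which is built into the definition of the group action.
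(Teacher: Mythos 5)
Your proof is correct and follows essentially the same approach as the paper: expand $T$ as a sum of $r$ rank-one slice functions and observe that the group action preserves each type of rank-one function, so slice rank cannot increase. The paper's proof is stated more tersely (it simply asserts "clearly" after writing out the decomposition), while you spell out why the action preserves each type; the substance is the same.
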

\begin{proof}
If $\srk(T) \leq r$, we have $T = \displaystyle \sum_{i=1}^{r_1} u_{i,1} \otimes_1 T_{i,1} + \sum_{i=1}^{r_2} u_{i,2}  \otimes_2 T_{i,2} + \sum_{i=1}^{r_3} u_{i,3}  \otimes_3 T_{i,3}$ for some $(r_1,r_2,r_3)$ such that $r_1 + r_2 + r_3 = r$, where $u_{i,1} \in U$, $u_{i,2} \in V$, $w_{i,3}\in W$, and $T_{i,1} \in V \otimes W$, $T_{i,2} \in U \otimes W$, $T_{i,3} \in U \otimes V$. Clearly when $A \otimes B \otimes C \in \GL(U) \times \GL(V) \times \GL(W)$ acts on $T$, the slice rank remains at most $r$.
\end{proof}

\subsection{Slice rank varieties and orbit closures}
\label{sec:srk:orbit}


For every tuple $(r_1, r_2, r_3)$ of non-negative integers such that $r_1 + r_2 + r_3 =r$, we consider the vector spaces $U'_{(r_1, r_2, r_3)} = \mathbb{F}^{r_1} \oplus (\mathbb{F}^n)^{\oplus(r_2)}
 \oplus (\mathbb{F}^n)^{\oplus(r_3)}$, $V'_{(r_1, r_2, r_3)} = (\mathbb{F}^n)^{\oplus(r_1)}
  \oplus \mathbb{F}^{r_2} \oplus  (\mathbb{F}^n)^{\oplus(r_3)}$, and $W'_{(r_1, r_2,r_3)} =(\mathbb{F}^n)^{\oplus(r_1)} \oplus  (\mathbb{F}^n)^{\oplus(r_2)} \oplus \mathbb{F}^{r_3}$. We will drop the index $(r_1,r_2,r_3)$ in the following.
  
$U'$ has dimension $s_1(r_1,r_2,r_3) = r_1+nr_2 + nr_3$, and is 
decomposed into $1 + r_2 + r_3$ summands, where one summand is of dimension $r_1$, while the other summands are of dimensions $n$ each. Similarly, $V'$ and $W'$ have dimensions $s_2(r_1,r_2,r_3) = nr_1+ r_2+ nr_3$ and $s_3(r_1,r_2,r_3) = nr_1 + nr_2 + r_3$, respectively, and are 
 decomposed analogously as $U'$, into $r_1 + 1 + r_3$ summands and $r_1 + r_2 + 1$ summands respectively. We will denote $s_1(r_1,r_2,r_3)$, $s_2(r_1,r_2,r_3)$ and $s_3(r_1,r_2,r_3)$ simply by $s_1$, $s_2$, and $s_3$, respectively.
Thus $U' \otimes V' \otimes W' \cong \mathbb{F}^{s_1} \otimes \mathbb{F}^{s_2} \otimes \mathbb{F}^{s_3}$.

Let us give names to the components:
Let $L^1$ be $(\mathbb{F}^n)^{\oplus(r_1)}$ of dimension $nr_1$, $L^2$ be $(\mathbb{F}^n)^{\oplus(r_2)}$, and $L^3$ be $(\mathbb{F}^n)^{\oplus(r_3)}$, respectively, and we have vector spaces $\tilde{U} = \mathbb{F}^{r_1}$, $\tilde{V} = \mathbb{F}^{r_2}$ and $\tilde{W} =  \mathbb{F}^{r_3}$ 
respectively. Let $L^k_i $ be the $i$-th summand of $L^k, k\in \{1,2,3 \}$ with standard basis $e^k_{ij}$, $j \in [n]$, and let $e^1_i, e^2_i$ and 
$e^3_i$ be the standard basis of $\tilde{U}$, $\tilde{V}$ and $\tilde{W}$.
We have $U'= \tilde U \oplus L^2_1 \oplus \dots \oplus L^2_{r_2} \oplus L^3_1 \oplus \dots \oplus L^3_{r_3}$ and similar decomposition for $V'$ and $W'$.
   
\begin{definition}  \label{def:srk:unittensor}
   For $(r_1,r_2,r_3)$,
   we define the  \emph{unit slice rank tensor} $S_{n,r_1,r_2,r_3} \in (\tilde{U} \otimes L^1 \otimes L^1) \oplus (L^2 \otimes \tilde{V} \otimes L^2) \oplus (L^3 \otimes L^3 \otimes \tilde{W}) \subseteq U' \otimes V' \otimes W'$ as
\[ S_{n,r_1,r_2,r_3} = \sum_{i =1}^{r_1} \sum_{j=1}^{n}e^1_i \otimes e^1_{ij}\otimes e^1_{ij} + \sum_{i =1}^{r_2} \sum_{j=1}^{n}e^2_{ij} \otimes e^2_i\otimes e^2_{ij} + \sum_{i =1}^{r_3} \sum_{j=1}^{n}e^3_{ij} \otimes e^3_{ij}\otimes e^3_i.
\]
\end{definition}

Along $\tilde{U}$ we have $r_1$ slices where each slice contains an $n \times n$ identity matrix each in disjoint blocks. Then along $\tilde{V}$, we have $r_2$ slices with $n \times n$ identity matrices in disjoint blocks. Finally, we have $r_3$ slices with $n \times n$ identity matrices in disjoint blocks along $\tilde{W}$. Thus $S_{n,r_1,r_2,r_3}$ can be decomposed into three summands $S_{n.r_1} \in \tilde{U} \otimes L^1 \otimes L^1, S_{n,r_2} \in L^2 \otimes \tilde{V} \otimes L^2$ and $ S_{n,r_3}\in L^3 \otimes L^3 \otimes \tilde{W}$ such that $S_{n,r_1,r_2,r_3} = S_{n,r_1} \oplus S_{n,r_2} \oplus S_{n,r_3}$.

The group $\GL_{s_1} \times \GL_{s_2} \times \GL_{s_3}$ acts on $U' \otimes V' \otimes W'$ in a natural way. The slice rank variety can be defined as the union of orbit closures of $S_{n,r_1,r_2,r_3}$ under the action of $\GL_{s_1} \times \GL_{s_2} \times \GL_{s_3}$, where the union is taken over $(r_1,r_2,r_3)$ such that $r_1+r_2+r_3 = r$.

\begin{lemma} \label{lem:srk:orbit}
Let $U$, $V$, and $W$ be $n$-dimensional subspaces of $U'$, $V'$, and $W'$, respectively. Then we have
\[ \mathcal{SV}_{U \otimes V \otimes W, r} = \bigcup_{\substack{r_1, r_2, r_3 \\ r_1 + r_2 + r_3 = r}} \overline{(\GL_{s_1} \times \GL_{s_2} \times \GL_{s_3})S_{n, r_1, r_2,r_3}} \cap (U \otimes V \otimes W).\]
\end{lemma}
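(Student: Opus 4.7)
The plan is to prove both inclusions separately, exploiting the fact that the three summands of $S_{n,r_1,r_2,r_3}$ live in pairwise disjoint subspaces of $U' \otimes V' \otimes W'$, so that $S_{n,r_1,r_2,r_3}$ carries a transparent slice-rank decomposition into $r_1+r_2+r_3=r$ summands.

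For the inclusion $\supseteq$, I would argue that $\srk(S_{n,r_1,r_2,r_3}) \leq r$ directly from the definition (it already is a sum of $r_1$ type-1, $r_2$ type-2, and $r_3$ type-3 rank-one slice tensors). By Lemma~\ref{lem:invariant} the set $\mathcal{SV}_{U' \otimes V' \otimes W', r}$ is invariant under the action of $\GL_{s_1}\times\GL_{s_2}\times\GL_{s_3}$, and by Lemma~\ref{lem:svclosed} it is Zariski closed, so it also contains the orbit closure of $S_{n,r_1,r_2,r_3}$. Intersecting with $U \otimes V \otimes W$ and applying Lemma~\ref{lem:subspace} yields $\mathcal{SV}_{U \otimes V \otimes W, r}$.

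For the inclusion $\subseteq$, take $T \in \mathcal{SV}_{U \otimes V \otimes W, r}$. Padding with zero summands if necessary, write
\[
T \;=\; \sum_{i=1}^{r_1} u_i \otimes_1 A_i^{(1)} \;+\; \sum_{i=1}^{r_2} v_i \otimes_2 A_i^{(2)} \;+\; \sum_{i=1}^{r_3} w_i \otimes_3 A_i^{(3)},
\]
where $u_i \in U$, $v_i \in V$, $w_i \in W$, and each $A_i^{(k)}$ lies in the appropriate two-fold tensor product of $U,V,W$. Since $\dim U=\dim V=\dim W=n$, each $A_i^{(k)}$ is a sum of $n$ rank-one tensors, for instance $A_i^{(1)}=\sum_{j=1}^n v_{ij}^{(1)}\otimes w_{ij}^{(1)}$ with $v_{ij}^{(1)}\in V, w_{ij}^{(1)}\in W$. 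I would then define linear maps $A\in\mathrm{End}(U')$, $B\in\mathrm{End}(V')$, $C\in\mathrm{End}(W')$ block by block following the three-part decomposition of $U',V',W'$: for example, $A(e^1_i):=u_i$, $B(e^1_{ij}):=v_{ij}^{(1)}$, $C(e^1_{ij}):=w_{ij}^{(1)}$ to handle the first summand, and analogously on $L^2$, $\tilde V$, $L^3$, $\tilde W$ for the other two. A direct computation then gives $(A,B,C)\cdot S_{n,r_1,r_2,r_3}=T$, e.g.\ the first piece produces $\sum_i u_i \otimes \sum_j v_{ij}^{(1)}\otimes w_{ij}^{(1)} = \sum_i u_i\otimes A_i^{(1)}$, and the three pieces do not interfere because they involve disjoint block coordinates.

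The main obstacle is that $A,B,C$ constructed this way will generically not be invertible: their images are contained in the $n$-dimensional subspaces $U,V,W$ of $U',V',W'$, whereas $s_1,s_2,s_3$ can be much larger than $n$. I would resolve this by a standard perturbation argument: choose generic $A'\in\mathrm{End}(U')$, $B'\in\mathrm{End}(V')$, $C'\in\mathrm{End}(W')$ and set $A_\epsilon:=A+\epsilon A'$, $B_\epsilon:=B+\epsilon B'$, $C_\epsilon:=C+\epsilon C'$. The determinants $\det(A_\epsilon),\det(B_\epsilon),\det(C_\epsilon)$ are polynomials in $\epsilon$ which are not identically zero for generic $A',B',C'$, so the three maps are simultaneously invertible for all but finitely many $\epsilon$. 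The expression $(A_\epsilon, B_\epsilon, C_\epsilon)\cdot S_{n,r_1,r_2,r_3}$ is polynomial in $\epsilon$ and equals $T$ at $\epsilon=0$, so letting $\epsilon\to 0$ (avoiding the bad values) exhibits $T$ as a limit of points in the orbit of $S_{n,r_1,r_2,r_3}$. Since $T\in U\otimes V\otimes W$ by hypothesis, this places $T$ in the right-hand side and completes the proof.
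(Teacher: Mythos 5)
Your proof is correct and follows essentially the same route as the paper: both directions match, including the decomposition of each summand $A_i^{(k)}$ into $n$ rank-one terms and the assembly of block endomorphisms $A,B,C$ of $U',V',W'$ so that $(A\otimes B\otimes C)S_{n,r_1,r_2,r_3}=T$. The only cosmetic difference is the final step: the paper directly invokes that the Zariski closure of $\GL_{s_i}$ in $\mathrm{End}(\F^{s_i})$ is all of $\mathrm{End}(\F^{s_i})$, whereas you re-derive this fact in place via the $\epsilon$-perturbation argument; these are the same idea.
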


Note that each of the orbit closures is taken in a different ambient space,
since each $S_{n,r_1,r_2,r_3}$ lives in a different ambient space. 
But since we intersect each closure with $U \otimes V \otimes W$,
this is fine. We could also embed all $S_{n,r_1,r_2,r_3}$ into
a larger ambient space, however, this is disadvantageous when we want
to determine the stabilizers later.

\begin{proof}
First of all note that for every such $(r_1,r_2,r_3)$, we have that $S_{n,r_1,r_2,r_3} \in \mathcal{SV}_{U' \otimes V' \otimes W',r}$, simply by the construction of $S_{n,r_1,r_2,r_3}$, where $U' \cong \F^{s_1}, V \cong \F^{s_2}, W' \cong \F^{s_3}$. Now since by Lemma \ref{lem:invariant}, $\mathcal{SV}_{U' \otimes V' \otimes W',r}$ is invariant under the action of $\GL(U')\times \GL(V') \times \GL(W')$, we have that the entire orbit $(\GL_{s_1} \times \GL_{s_2} \times \GL_{s_3})S_{n, r_1, r_2,r_3}$ lies in it. Also, from Lemma \ref{lem:svclosed} (see \cite[Corollary 2]{ts16}), it follows that $(\GL_{s_1} \times \GL_{s_2} \times \GL_{s_3})S_{n, r_1, r_2,r_3}$ is contained in a Zariski closed subset of $\mathcal{SV}_{U' \otimes V' \otimes W'}$ and hence the orbit closure $\overline{(\GL_{s_1} \times \GL_{s_2} \times \GL_{s_3})S_{n, r_1, r_2,r_3}}$  also lies in $\mathcal{SV}_{U' \otimes V' \otimes W'}$. 
Now we apply Lemma \ref{lem:subspace} to get the desired inclusion.

For the other direction, let us assume $T \in \mathcal{SV}_{U \otimes V \otimes W, r}$. Since $\srk(T) \leq r$, we have that we have $T = \displaystyle \sum_{i=1}^{r_1} u_{i,1} \otimes_1 T_{i,1} + \sum_{i=1}^{r_2} u_{i,2}  \otimes_2 T_{i,2} + \sum_{i=1}^{r_3} u_{i,3}  \otimes_3 T_{i,3}$, for some $(r_1,r_2,r_3)$ such that $r_1 + r_2 + r_3 = r$, where $u_{i,1} \in U$, $u_{i,2} \in V$, and $w_{i,3}\in W$ and $T_{i,1} \in V \otimes W$, $T_{i,2} \in U \otimes W$, and  $T_{i,3} \in U \otimes V$. Since $\forall i \in [r_1]$, $\rk(T_{i,1}) \leq n$, we can write $T_{i,1}$ as $(Q_{i,1} \otimes R_{i,1})(\sum_{j=1}^n e^1_{i,j} \otimes e^1_{i,j})$ for linear maps $Q_{i,1}: L^{1}_i \rightarrow V$ and $R_{i,1} : L^{1}_i \rightarrow W$. 
Analogously, $T_{i,2} = (P_{i,2} \otimes R_{i,2})(\sum_{j=1}^n e^2_{i,j} \otimes e^2_{i,j})$ for linear maps $P_{i,2}: L^{2}_i \rightarrow U$ and $R_{i,2} : L^{2}_i \rightarrow W$, and $T_{i,3} = (P_{i,3} \otimes Q_{i,3})(\sum_{j=1}^n e^3_{i,j} \otimes e^3_{i,j})$ for linear maps $P_{i,3}: L^{3}_i \rightarrow U$ and $Q_{i,3} : L^{3}_i \rightarrow V$.

Let $Q_1: L^1 \rightarrow V$ and $R_1: L^1 \rightarrow W$ be linear maps which are equal to $Q_{i,1}$ and $R_{i,1}$, respectively, when restricted to the $i$-th slice $L^1_i$. Similarly we have maps $P_2: L^2 \rightarrow U$ and $R_2: L^2 \rightarrow W$ whose restrictions to $i$-th slices are $P_{i,2}$ and $R_{i,2}$, respectively, and $P_3: L^3 \rightarrow U$ and $Q_3: L^3 \rightarrow V$ have their restrictions as $P_{i,3}$ and $Q_{i,3}$.

Finally, we also have linear maps $P_1: \tilde{U} \rightarrow U$ sending $e^1_i$ to $u_{i,1}$ ,  $Q_2: \tilde{V} \rightarrow V$ sending $e^2_i$ to $u_{i,2}$ and $R_3: \tilde{W} \rightarrow W$ sending $e^3_i$ to $u_{i,3}$. 

Thus $T = ((P_1 \otimes Q_1 \otimes R_1) \oplus (P_2 \otimes Q_2 \otimes R_2) \oplus  (P_3 \otimes Q_3 \otimes R_3)) S_{n,r_1,r_2,r_3}$ for some $(r_2,r_2,r_3)$. 
The closure of $\GL_{s_1}, \GL_{s_2}$ and $\GL_{s_3}$ contains all linear endomorphisms of $U', V'$ and $W'$, respectively, and thus contains $(P_1 \otimes Q_1 \otimes R_1) \oplus (P_2 \otimes Q_2 \otimes R_2) \oplus  (P_3 \otimes Q_3 \otimes R_3)$. Therefore, $T$ lies in the closure  $\overline{(\GL_{s_1} \times \GL_{s_2} \times \GL_{s_3})S_{n, r_1, r_2,r_3}}$ for some $(r_1, r_2, r_3)$ with $r_1 + r_2 + r_3 = r$.
\end{proof}

\begin{corollary}
\[ \mathcal{SV}_{n, r} := \mathcal{S}_{\mathbb{F}^n \otimes \mathbb{F}^n \otimes \mathbb{F}^n, r} = \bigcup_{\substack{r_1, r_2, r_3 \\ r_1 + r_2 + r_3 = r}} \overline{(\GL_{s_1} \times \GL_{s_2} \times \GL_{s_3})\mathcal{S}_{n, r_1, r_2,r_3}} \cap (\mathbb{F}^n \otimes \mathbb{F}^n \otimes \mathbb{F}^n)\]
\end{corollary}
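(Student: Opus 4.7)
My plan is to derive this corollary as a direct instantiation of Lemma~\ref{lem:srk:orbit} with $U = V = W = \mathbb{F}^n$. No new argument is needed; the work is purely bookkeeping about subspace embeddings, since the substantive content is already in the lemma.

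First I would choose, for each triple $(r_1, r_2, r_3)$ with $r_1 + r_2 + r_3 = r$, an embedding of $\mathbb{F}^n$ as an $n$-dimensional subspace of each of $U'_{(r_1,r_2,r_3)}$, $V'_{(r_1,r_2,r_3)}$, $W'_{(r_1,r_2,r_3)}$. The case $r = 0$ is trivial, since then $\mathcal{SV}_{n,0} = \{0\}$ and $S_{n,0,0,0} = 0$ makes the right-hand side equal $\{0\}$ as well, so I may assume $r \geq 1$. Then at least one of $r_1, r_2, r_3$ is positive, so each of $U'$, $V'$, $W'$ contains at least one direct summand of the form $L^k_i \cong \mathbb{F}^n$. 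I would fix, once and for all, such a summand in each of $U'$, $V'$, $W'$ (for each triple) and identify $\mathbb{F}^n$ with it.

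With these embeddings in place, Lemma~\ref{lem:srk:orbit} immediately gives
\[ \mathcal{SV}_{\mathbb{F}^n \otimes \mathbb{F}^n \otimes \mathbb{F}^n,\, r} \;=\; \bigcup_{\substack{r_1, r_2, r_3 \\ r_1 + r_2 + r_3 = r}} \overline{(\GL_{s_1} \times \GL_{s_2} \times \GL_{s_3})\, S_{n, r_1, r_2, r_3}} \,\cap\, (\mathbb{F}^n \otimes \mathbb{F}^n \otimes \mathbb{F}^n), \]
which is exactly the claim.

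The one point worth flagging, already mentioned in the remark immediately following Lemma~\ref{lem:srk:orbit}, is that each orbit closure on the right is taken inside its own ambient space $U'_{(r_1,r_2,r_3)} \otimes V'_{(r_1,r_2,r_3)} \otimes W'_{(r_1,r_2,r_3)}$; intersecting each closure with the common subspace $\mathbb{F}^n \otimes \mathbb{F}^n \otimes \mathbb{F}^n$ (via the fixed embeddings) is what makes the union a well-defined subset of a single ambient space. There is no obstacle here: once the embeddings are fixed, the corollary is a one-line specialization, and all the real content has already been absorbed into Lemma~\ref{lem:srk:orbit}.
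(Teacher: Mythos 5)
Your proof takes essentially the same approach as the paper's: fix an embedding of $\mathbb{F}^n$ into each of $U'_{(r_1,r_2,r_3)}$, $V'_{(r_1,r_2,r_3)}$, $W'_{(r_1,r_2,r_3)}$ and then specialize Lemma~\ref{lem:srk:orbit} with $U = V = W = \mathbb{F}^n$. The paper simply embeds $\mathbb{F}^n$ via the first $n$ coordinate directions.

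The one flaw is your justification for why the embedding exists. You assert that if at least one of $r_1, r_2, r_3$ is positive, then each of $U'$, $V'$, $W'$ contains a summand $L^k_i \cong \mathbb{F}^n$. This is false: recall $U' = \tilde{U} \oplus L^2 \oplus L^3$, so when $r_2 = r_3 = 0$ and $r_1 = r > 0$ we have $U' = \tilde{U} = \mathbb{F}^{r_1}$, which has no $\mathbb{F}^n$ summand at all. More generally, $U'$ contains an $L^k_i$ summand iff $r_2 + r_3 \geq 1$, $V'$ iff $r_1 + r_3 \geq 1$, and $W'$ iff $r_1 + r_2 \geq 1$, so all three hold only when at most one of $r_1, r_2, r_3$ vanishes. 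Even in the regime $r \geq n$, where the paper's coordinate embedding works for every triple because $s_1, s_2, s_3 \geq n$, your $L^k_i$ argument fails for triples such as $(r,0,0)$. The repair is immediate: drop the appeal to $L^k_i$ summands and just take any $n$-dimensional coordinate subspace, which exists whenever $s_i \geq n$ and is all that Lemma~\ref{lem:srk:orbit} requires. (Both your proof and the paper's silently omit the degenerate triples with some $s_i < n$, which occur only when $r < n$.)
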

\begin{proof}
We identify $\mathbb{F}^n$ as a subspace of $U', V'$ as well as $W'$ by embedding an element $x \in \mathbb{F}^n$ in the bigger spaces as $(x,0, \ldots, 0)$ where we have zeros in all the other coordinates except the first $n$ coordinates. Now apply Lemma \ref{lem:subspace}.
\end{proof}


We now describe the stabilizers of the unit slice rank tensors.

\begin{lemma}\label{lem:stabidentity}
  The stabilizer of $\sum_{i = 1}^k e_i \otimes e_i \in \F^k \otimes \F^k$ in $\GL_k \times \GL_k$ consists of elements of the form $(A, A^{-\trans})$.
\end{lemma}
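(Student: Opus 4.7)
The plan is to identify $\F^k \otimes \F^k$ with the space of $k \times k$ matrices $\mathrm{Mat}_k(\F)$ via the linear isomorphism $e_i \otimes e_j \mapsto E_{ij}$, where $E_{ij}$ is the matrix unit with a $1$ in position $(i,j)$ and zeros elsewhere. Under this identification the distinguished tensor $\sum_{i=1}^k e_i \otimes e_i$ corresponds precisely to the identity matrix $I_k$.

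Next I would translate the $\GL_k \times \GL_k$-action into matrix multiplication. On a pure tensor, $(A,B) \cdot (e_i \otimes e_j) = Ae_i \otimes Be_j$. Writing $Ae_i \otimes Be_j$ in the basis $\{e_p \otimes e_q\}$ and comparing with the matrix picture shows that this corresponds to $A \cdot E_{ij} \cdot B^{\trans}$. By linearity, the action on an arbitrary tensor $M \in \mathrm{Mat}_k(\F)$ is $(A,B) \cdot M = A M B^{\trans}$.

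With this translation, the stabilizer condition $(A,B) \cdot I_k = I_k$ becomes the matrix equation $AB^{\trans} = I_k$, which is equivalent to $B^{\trans} = A^{-1}$, i.e.\ $B = A^{-\trans}$. Conversely, any pair $(A, A^{-\trans})$ clearly satisfies $A I_k (A^{-\trans})^{\trans} = A A^{-1} = I_k$, so every such pair stabilizes the tensor. This gives the desired equality of sets.

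There is no real obstacle here; the only subtle point is bookkeeping the transpose correctly when passing between the tensor action $Ae_i \otimes Be_j$ and the matrix action $M \mapsto AMB^{\trans}$. I would make sure to spell out this identification explicitly so that the reader can see why $B$ (rather than, say, $B^{-1}$) receives the transpose, and then the conclusion follows immediately from the invertibility of $A$.
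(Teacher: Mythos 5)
Your proof is correct and takes essentially the same route as the paper: both identify $\F^k\otimes\F^k$ with $k\times k$ matrices so that $\sum_i e_i\otimes e_i$ becomes $I_k$, translate the $(A,B)$-action into $M\mapsto AMB^{\trans}$, and read off $B=A^{-\trans}$ from $AB^{\trans}=I_k$. The paper's version is just more telegraphic, packaging the translation as the left-right action $AXB:=(A,B^{\trans})X$; you spell out the same bookkeeping explicitly.
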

\begin{proof}
For the left action of $\GL_k \times \GL_k$ on $\F^k \otimes \F^k$ consider the corresponding left-right action:
$AXB := (A,B^{\trans})X$ for $A,B\in \GL_k$ and $X \in \F^k\otimes \F^k$.
If we interpret $F^k\otimes F^k$ as the space of $k \times k$ matrices, then
$\sum_{i = 1}^k e_i \otimes e_i$ is the identity matrix $I$ and we observe that $AXB$ is the usual product of matrices.
Clearly $AIB=I$ iff $A=B^{-1}$.
\end{proof}

\begin{theorem}\label{slicerank:thm:stab}
  For $n \ge 2$, the stabilizer of $S_{n, r_1,r_2,r_3}$ in $\GL_{s_1} \times \GL_{s_2} \times \GL_{s_3}$ is isomorphic to $\bigoplus\limits_{i=1}^{3} ((\GL_n \times \GL_1)^{r_i} \rtimes \mathfrak{S}_{r_i})$. The element 
    $(Z_{i1}, z_{i1}, \dots, Z_{ir_i}, z_{ir_i}) \in  (\GL_n \times \GL_1)^{r_i}$ for $i = 1,2,3$
  is embedded into $\GL_{r_1} \times \GL_{nr_1} \times \GL_{nr_1}$,  $\GL_{nr_2} \times \GL_{r_2} \times \GL_{nr_2}$ and $\GL_{nr_3} \times \GL_{nr_3} \times \GL_{r_3}$ respectively, via 
  \begin{align*}
    (\diag(z_{11}, \dots, z_{1r_1}), \diag(Z_{11}, \dots, Z_{1r_1}), \diag((z_{11} Z_{1r_1})^{-\trans}, \dots, (z_{1r_1} Z_{1r_1})^{-\trans})), & \\
    (\diag(Z_{21}, \dots, Z_{2r_2}), \diag(z_{21}, \dots, z_{2r_2}), \diag((z_{21} Z_{2r_2})^{-\trans}, \dots, (z_{2r_2} Z_{2r_2})^{-\trans})), & \text{ and} \\
      ( \diag(Z_{31}, \dots, Z_{3r_3}), \diag((z_{31} Z_{3r_3})^{-\trans}, \dots, (z_{3r_3} Z_{3r_3})^{-\trans}), \diag(z_{31}, \dots, z_{3r_3})
      ),
\end{align*}  
 respectively. The $\mathfrak{S}_{r_i}$ factor permutes the $r_i$ coordinates of $\tilde{U}, \tilde{V}$ and $\tilde{W}$,  and the $r_i$ summands of  $L^i \times L^i$ simultaneously.
\end{theorem}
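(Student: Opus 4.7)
The plan is to prove two inclusions. For the forward direction, I would verify directly that every element of the stated form stabilizes $S_{n,r_1,r_2,r_3}$. Since $S_{n,r_1,r_2,r_3}=S_{n,r_1}\oplus S_{n,r_2}\oplus S_{n,r_3}$ is supported on three pairwise disjoint blocks of $U'\otimes V'\otimes W'$, this reduces to one check per block. In the $S_{n,r_1}$ block, after identifying $L^1_i\otimes L^1_i$ with $n\times n$ matrices so that $\sum_j e_{ij}^1\otimes e_{ij}^1$ becomes $I_n$, a triple $(z_{1i},Z_{1i},(z_{1i}Z_{1i})^{-\trans})$ sends $e_i^1\otimes I_n$ to $z_{1i}e_i^1\otimes Z_{1i}\cdot I_n\cdot(z_{1i}Z_{1i})^{-1}=e_i^1\otimes I_n$, a direct consequence of Lemma \ref{lem:stabidentity}. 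The $\mathfrak S_{r_1}$ factor simultaneously relabels the $r_1$ basis vectors of $\tilde U$ and the $r_1$ copies of $L^1$, also preserving $S_{n,r_1}$; the blocks $S_{n,r_2}$ and $S_{n,r_3}$ are treated analogously.

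The substantive content is the converse. Let $(A,B,C)\in\GL_{s_1}\times\GL_{s_2}\times\GL_{s_3}$ stabilize $S_{n,r_1,r_2,r_3}$; the goal is to show $(A,B,C)$ respects the direct-sum decompositions of $U',V',W'$ up to permutation within the isomorphic summands of each $L^i$. My strategy is to characterize the nine summands intrinsically from the tensor. Writing $u=\sum a_i e_i^1+\sum b_{ij}e_{ij}^2+\sum c_{ij}e_{ij}^3\in U'$, the slice $S(u,\cdot,\cdot)\in V'\otimes W'$ splits as a sum of three matrices supported in the pairwise disjoint subspaces $L^1\otimes L^1$, $\tilde V\otimes L^2$ and $L^3\otimes\tilde W$, with rank contributions $n\cdot|\{i:a_i\neq 0\}|$, $|\{i:\exists j,\,b_{ij}\neq 0\}|$ and $|\{i:\exists j,\,c_{ij}\neq 0\}|$. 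The support pattern of this slice in $V'\otimes W'$ is thus an invariant of the tensor attached to $u$, and distinguishes whether $u$ lies in $\tilde U$, $L^2$, or $L^3$; symmetric analyses on $V'$ and $W'$ identify all nine summands up to permutation of isomorphic blocks. The hardest step will be ruling out stabilizer elements that mix subspaces of different types, which requires showing these intrinsic support patterns are preserved under the $(B,C)$-action; the hypothesis $n\geq 2$ is essential here since for $n=1$ the three summand types collapse and $\GL_n=\GL_1$.

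Once the decomposition is preserved, the problem decouples into three independent problems. In the $S_{n,r_1}$ block, the induced action in $\GL(\tilde U)\times\GL(L^1)\times\GL(L^1)$ must permute the $r_1$ rank-$n$ summands $I_{n,i}\in L^1\otimes L^1$, yielding the $\mathfrak S_{r_1}$ factor; once this permutation is absorbed, the action is block-diagonal on each $L^1_i$, and a direct application of Lemma \ref{lem:stabidentity} on each block produces the one-parameter family $(z_{1i},Z_{1i},(z_{1i}Z_{1i})^{-\trans})$. Treating the $S_{n,r_2}$ and $S_{n,r_3}$ blocks analogously yields the full stabilizer $\bigoplus_{i=1}^{3}\bigl((\GL_n\times\GL_1)^{r_i}\rtimes\mathfrak S_{r_i}\bigr)$.
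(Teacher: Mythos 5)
Your forward direction is fine, and the high-level plan---first show the stabilizer respects the block decomposition, then reduce to Lemma~\ref{lem:stabidentity} on each of the three blocks---matches the structure of the paper's argument. The difficulty is that you name but do not resolve the central step. You characterize $\tilde U$, $L^2$, $L^3$ inside $U'$ by the ``support pattern'' of the slice $S(u,\cdot,\cdot)$ in $V'\otimes W'$ (i.e.\ which of the three blocks $L^1\otimes L^1$, $\tilde V\otimes L^2$, $L^3\otimes\tilde W$ it hits), and then assert this is ``an invariant''. It is not: the support pattern is defined in terms of the fixed decomposition of $V'\otimes W'$, and proving that the $(B,C)$-action preserves that decomposition is precisely the statement you are trying to establish. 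You acknowledge this as ``the hardest step'' but leave it as a to-do, so the converse inclusion---which is the whole content of the theorem---is not proved.

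What actually closes the gap (and what the paper does) is to work with quantities that are genuinely invariant and to exploit the bijection that $A$ induces on slices. Since $B\otimes C$ is invertible, it preserves the \emph{rank} of each slice, and because the three supports are disjoint the rank of a linear combination $\sum_j a_{ij}S_j$ is \emph{additive} over the three blocks, equal to $qn$ plus the $L^2$ and $L^3$ contributions, where $q$ is the number of nonzero coefficients among $a_{i1},\dots,a_{ir_1}$. For $n\geq 2$ this forces $q\leq 1$ on the first $r_1$ rows of $A$; a pigeonhole/counting argument on the bijection induced by $A$ then rules out $q=0$ and any $L^2$ or $L^3$ contribution in those rows, and a separate rank-$1$ argument handles the remaining rows. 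Your sketch never invokes this additivity or the counting argument, so the claimed invariance of the support pattern remains unjustified. As a smaller point, for $n=1$ the obstruction is not that ``$\GL_n=\GL_1$'' but that the $\tilde U$-slices also have rank $1$, so rank no longer separates the blocks; that is the precise reason the theorem assumes $n\geq 2$.
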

\begin{proof}
  Let $S:= S_{n,r_1,r_2,r_3} = S_{n,r_1} \oplus S_{n,r_2} \oplus S_{n,r_3}$, and $(A, B, C) \in \stab S$, that is, $(A \otimes B \otimes C) S = S$.
 It will be useful to visualize $A$, $B$ and $C$ as
 \[
A = 
\left( \begin{array}{c|c|c}
   A_{11} & A_{12} & A_{13} \\
   \midrule
   A_{21} & A_{22} & A_{23}  \\
   \midrule
   A_{31} & A_{32}  & A_{33} 
\end{array}\right), \enspace
B = 
\left( \begin{array}{c|c|c}
   B_{11} & B_{12} & B_{13} \\
   \midrule
   B_{21} & B_{22} & B_{23} \\
   \midrule
   B_{31} & B_{32} & B_{33},
\end{array}\right), \enspace
C = 
\left( \begin{array}{c|c|c}
   C_{11} & C_{12} & C_{13} \\
   \midrule
   C_{21} & C_{22} & C_{23} \\
   \midrule
   C_{31} & C_{32} & C_{33}
\end{array}\right). 
\]
Above,
$A_{11}$ is an $r_1 \times r_1$ matrix, $A_{22}$ is an $nr_2 \times nr_2$ matrix, and $A_{33}$ is an $nr_3 \times nr_3$ matrix, respectively.
 $B_{11}$ is an $nr_1 \times nr_1$ matrix, $B_{22}$ is an $r_2 \times r_2$ matrix, and $B_{33}$ is an $nr_3 \times nr_3$ matrix, respectively, and  $C_{11}$ is an $nr_1 \times nr_1$ matrix , $C_{22}$ is an $nr_2 \times nr_2$ matrix, and $C_{33}$ is an $r_3 \times r_3$ matrix, respectively.
  
 Let $S = \sum_{i=1}^{s_1} e_i \otimes_1 S_{i}$, where $S_i$ is the $i$-th slice of $S$.
 Then,
  \[(A \otimes B \otimes C) S_{n,r_1,r_2,r_3} = \sum_{i = 1}^{s_1} (Ae_i) \otimes (B \otimes C)S_i = \sum_{i = 1}^{s_1} e_i \otimes (B \otimes C)(\sum_{j = 1}^{s_1} {a_{ij}} S_j).\]

  First of all we divide the set of slices into groups. These include:
  \begin{itemize}
      \item $r_1$ groups of size $1$ each, $\{1\}, \ldots, \{n\}$,
      \item $r_2$ groups of size $n$ each, $\{r_1 +1, \ldots, r_1 + n \}, \ldots, \{r_1 + (r_2 -1)n +1, \ldots, r_1 + r_2n\}$,
      \item $r_3$ groups of size $n$ each, $\{r_1 + r_2n + 1, \ldots, r_1 + r_2n + n \}, \ldots, \{r_1 + r_2n + (r_3 -1)n+ 1, \ldots, r_1 + r_2n + r_3n\}$.
  \end{itemize}

  We first consider the first $r_1$ groups of slices, i.e., slices $S_i$ for $i \in \{ 1, \ldots,r_1\}$ to deduce about the first $r_1$ rows of $A$. 
  
  Recall that for $i \in \{1, \ldots, r_1\}$, $\rk(S_i) = n$ (by definition). Thus we have that $\rk(\sum_{j = 1}^{s_1} {a_{ij}} S_j)$ and consequently the rank of the $i$-th slice of $(A \otimes B \otimes C) S$ will be at least $qn$, where $q$ is the number of nonzero entries among $a_{i1}, \dots, a_{ir_1}$.
  Therefore, there will be at most one $j \in \{1,\ldots, r_1 \}$ such that $a_{ij}$ is nonzero. Now consider the case when for some $i' \in \{ 1,\ldots, r_1 \}$, $a_{i'j} \neq 0$ for some $j \in \{r_1 +1,\ldots, r_1 + nr_2, \ldots, r_1 + nr_2 + nr_3\}$. 
  First of all, it implies that $a_{i'j} = 0$, for all $j \in \{1, \ldots, r_1\}$, otherwise  $\rk(\sum_{j = 1}^{s_1} {a_{i'j}} S_j) \geq n+1$. Now since $A$ induces a bijection among the slices, every slice is involved in the linear combination of at least one of the slices. 
  And since two slices of rank $n$ cannot be involved in the linear combination of first $r_1$ slices, the above forces that at least one of the rank $n$ slices is involved in the linear combination of a slice $S_i$ for $i \in \{r_1 +1,\ldots, r_1 + nr_2, \ldots, r_1 + nr_2 + nr_3\}$, i.e., $a_{ij} \neq 0$ for some $i \in \{r_1 +1,\ldots, r_1 + nr_2, \ldots, r_1 + nr_2 + nr_3\}$,  $j\in  \{1, \ldots, r_1\}$. 
  But this implies that $\rk(\sum_{j = 1}^{s_1} {a_{ij}} S_j) \geq n$ and not $1$, which cannot be the case if $A \otimes B \otimes C \in \stab S$.
  Thus $a_{ij} = 0$ for all $j \in \{r_1 +1,\ldots, r_1 + nr_2, \ldots, r_1 + nr_2 + nr_3\}$. Thus $A_{11}$ is a product of a diagonal matrix and a permutation matrix, and $A_{12} = A_{13} = 0$.
  Symmetrical argument implies that $B_{21} = B_{23} = C_{31} = C_{32} = 0$, and both $B_{22}$ and $C_{33}$ are products of a diagonal matrix and a permutation matrix.

 Now consider the first group from the second case i.e. $i \in \{r_1 +1,\ldots, r_1 + n\}$:
 Here, first of all recall that $\rk(S_i) = 1$ for all $i$. Thus $\rk(\sum_{j=1}^{s_1}a_{ij}S_j)$ has to be $1$. This immediately implies that $a_{ij} = 0$ for all $j \in \{1, \ldots, r_1 \}$, otherwise the resulting rank will be at least $n$. We further argue that $a_{ij} = 0$ for all $j \in \{ r_1 + r_2n +1, \ldots, r_2n+n,\ldots,r_1 + r_2n+r_3n \}$. 
  Assume the contrary. Then $\sum_{j=1}^{s_1}a_{ij}S_j$ will have something in the bottom right $r_3 \times nr_3$ block. In order for $(A, B, C)$ to be in $\stab S$, $(B \otimes C)(\sum_{j=1}^{s_1}a_{ij}S_j)$ should bring it back to its original place, i.e., in the central $nr_2 \times r_2$ block. 
  However $C_{33}$ being a product of a diagonal matrix and a permutation matrix, $C$ will only permute the the last $r_3$ rows  of $\sum_{j=1}^{s_1}a_{ij}S_j$ within themselves and hence $B \otimes C$ will not bring $\sum_{j=1}^{s_1}a_{ij}S_j$ to the central block as needed. Thus from the above discussion, we have $A_{21} = A_{23} = A_{31} = A_{32} = B_{12} = B_{13} = B_{31} = B_{32} = C_{12} = C_{13} = C_{31} = C_{32} = 0$. 
  Finally, $A_{22}$ will be a product of a block diagonal matrix and a block permutation matrix. For this, notice that for a fixed $i \in \{r_1 +1,\ldots, r_1 + r_2n\}$, $a_{ij}$ cannot be non-zero for $j$'s belonging to more than one group, otherwise the rank of the resulting slice exceeds $1$. 
  
  Thus $A$ will be a product of a diagonal matrix with a permutation matrix in the top left block. In the central block, it will be a product of a block diagonal matrix with a block permutation matrix. Similarly, for the bottom right block, too, it will be a product of a block diagonal matrix with a block permutation matrix.  Thus the picture becomes 
  
   \[
A = 
\left( \begin{array}{c|c|c}
   A_{11} & \bigzero & \bigzero \\
   \midrule
   \bigzero & A_{22} & \bigzero  \\
   \midrule
   \bigzero & \bigzero  & A_{33} 
\end{array}\right), \enspace
B = 
\left( \begin{array}{c|c|c}
   B_{11} & \bigzero & \bigzero \\
   \midrule
   \bigzero & B_{22} & \bigzero \\
   \midrule
   \bigzero & \bigzero & B_{33}
\end{array}\right), \enspace
C = 
\left( \begin{array}{c|c|c}
   C_{11} & \bigzero & \bigzero\\
   \midrule
   \bigzero & C_{22} & \bigzero \\
   \midrule
   \bigzero & \bigzero & C_{33}
\end{array}\right), 
\] 
where $A_{11}, B_{22}$ and $C_{33}$ are products of a diagonal matrix and a permutation matrix, and $A_{22}, A_{33}, B_{11}, B_{33}, C_{11}$ and $C_{22}$ are all products of a block diagonal matrix and a block permutation matrix. Thus we can decompose $(A,B,C) \in \stab S$ as $((A_{11},B_{11},C_{11}) \oplus (A_{22},B_{22},C_{22}) \oplus (A_{33},B_{33},C_{33}))$ where $(A_{11},B_{11},C_{11})$ acts on $\tilde{U} \otimes L^1 \otimes L^1$,  $(A_{22},B_{22},C_{22})$ acts on $ L^2 \otimes \tilde{V} \otimes L^2$ and $(A_{33},B_{33},C_{33})$ acts on  $L^3 \otimes L^3 \otimes \tilde{W}$, respectively. Hence for $(A,B,C)$ to be in $\stab S$, with $S = S_{n,r_1} \oplus S_{n,r_2} \oplus S_{n,r_3}$, $(A_{11},B_{11},C_{11})$ must preserve $S_{n,r_1}$, $(A_{22},B_{22},C_{22})$ must preserve $S_{n,r_2}$, and $(A_{33},B_{33},C_{33})$ must preserve $S_{n,r_3}$, i.e., $\stab S = \stab S_{n,r_1} \oplus \stab S_{n,r_2}\oplus \stab S_{n,r_3}$, where $\stab S_{n,r_1} \subseteq \GL_{r_1} \times \GL_{nr_1} \times \GL_{nr_1}$, $\stab S_{n,r_2} \subseteq \GL_{nr_2} \times \GL_{r_2} \times \GL_{nr_2}$ and $\stab S_{n,r_3} \subseteq \GL_{nr_3} \times \GL_{nr_3} \times \GL_{r_3}$.

 We consider $\stab S_{n,r_1}$ now. Let $P_{\sigma_1}$ be an element of $\GL_{r_1} \times \GL_{nr_1} \times \GL_{nr_1}$ which permutes the $r_1$ coordinates of $\tilde{U} \subseteq U'$ and the $r_1$ summands of $L^1 \times L^1 \subseteq V'\times W'$ according to the permutation $\sigma_1$. It is easy to see that $P_{\sigma_1} \in \stab S_{n,r_1}$. Hence, $(A_{11},B_{11},C_{11})P^{-1}_{\sigma_1} = (\tilde{A}_{11}, \tilde{B}_{11}, \tilde{C}_{11}) \in \stab S_{n,r_1}$. 
Using this and the previous discussion, we have that $\tilde{A}_{11}$ is a diagonal matrix $\diag(\tilde{a}^{1}_{11}, \ldots, \tilde{a}^{r_1}_{11})$. Let 
   $\tilde{A}_{11}'$ be the linear map which scales elements of $L^1_{i}$ by $\tilde{a}^{i}_{11}$ for each $i \in [r_1]$. 
   Clearly
  $(\tilde{A}_{11}^{-1}, \id, \tilde{A}_{11}')$ also preserves $S_{n,r_1}$.
  Therefore, $(\tilde{A}_{11}, \tilde{B}_{11}, \tilde{C}_{11}) \cdot (\tilde{A}_{11}^{-1}, \id, \tilde{A}_{11}') = (\id, \tilde{B}_{11}, \hat{C}_{11})$ is in $\stab S_{n,r_1}$ 

  Now, since the first component of $(\id, \tilde{B}_{11}, \hat{C}_{11})$ is the identity, it preserves $S_{n,r_1}$ if and only if $\tilde{B}_{11} \otimes \hat{C}_{11}$ preserves each slice of $S_{n,r_1}$.
  If it preserves each slice, it also preserves its sum $ \sum_{i =1}^{r_1} \sum_{j=1}^{n} e^1_{ij}\otimes e^1_{ij}$, the full rank diagonal matrix of size $nr_1 \times nr_1$.
  Therefore, by Lemma \ref{lem:stabidentity}, $\hat{C}_{11} = \tilde{B}_{11}^{-\trans}$. Thus $(\id,\tilde{B}_{11}, \tilde{B}_{11}^{-\trans} ) \in \stab S_{n,r_1}$.
  
  Thus, we decomposed an element $A_{11} \otimes B_{11} \otimes C_{11} \in \stab S_{1}$ into a product of three special elements $(\id, \diag(B_{11}^{1}, \dots, B_{11}^{r_1}), \diag(B_{11}^{1}, \dots, B_{11}^{r_1})^{-\trans})$, $(\diag(\tilde{a}^{1}_{11}, \ldots, \tilde{a}^{r_1}_{11}), \id, \diag(\tilde{a}^{1}_{11} \id, \ldots, \tilde{a}^{r_1}_{11} \id)^{-1})$, and $P_{\sigma_1}$ for some permutation $\sigma_1 \in \mathfrak{S}_{r_1}$.
  These three types of elements correspond to three subgroups of $\stab S_{n,r_1}$.
  The subgroups intersect only in the identity; elements of the first two types commute, and the conjugation with $P_{\sigma_1}$ permutes $\tilde{a}^i_{11}$ and $B_{11}^i$ according to $\sigma_1$, so the product of the first subgroups is direct, and the last product is semidirect. Symmetrical arguments for $\stab S_{n,r_2}$ and $\stab S_{n,r_3}$ finishes the proof.
\end{proof}

Now we show that the unit slice rank tensor $S_{n,r_1,r_2,r_3}$ is almost characterized by its stabilizer. More precisely, it is a direct sum of three tensors $S_{n,r_1}, S_{n,r_2}$ and $S_{n,r_3}$ that are each characterized by their respective stabilizers.

\begin{theorem} \label{thm:srk:sym}
  Suppose $T$ is a tensor in $U' \otimes V' \otimes W'  = (\tilde{U} \oplus L^2 \oplus L^3) \otimes (L^1 \oplus \tilde{V} \oplus L^3) \otimes (L^1 \oplus L^2 \oplus \tilde{W}) $. If $\stab T = \stab S_{n, r_1, r_2, r_3}$, then 
 $$ T =  
  ((\diag(\alpha, \dots, \alpha), \id, \id), ( \id, \diag(\beta, \dots, \beta), \id),   ( \id, \id, \diag(\gamma, \dots, \gamma)) S_{n, r_1,r_2,r_3},$$ 
  for some $ \alpha, \beta, \gamma \neq 0$, i.e., $T =  \alpha \cdot S_{n,r_1} \oplus \beta \cdot S_{n,r_2} \oplus \gamma \cdot S_{n,r_3}$.
\end{theorem}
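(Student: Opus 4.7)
The plan is to exploit the explicit description of $\stab S_{n,r_1,r_2,r_3}$ in Theorem~\ref{slicerank:thm:stab} by peeling off constraints on $T$ in three stages --- using successively the diagonal torus, the $\GL_n$ factors, and the permutation groups --- and then invoking the full equality of stabilizers at the end to exclude degenerate coefficients. This mirrors the standard ``tensor characterized by symmetries'' argument from GCT.

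First I would take the $(r_1+r_2+r_3)$-dimensional torus obtained by fixing $Z_{ij}=I_n$ and varying only the scalars $z_{ij}$ in Theorem~\ref{slicerank:thm:stab}. Decomposing $U'\otimes V'\otimes W'$ into weight spaces under this torus and enumerating all $3\times 3\times 3=27$ combinations of sub-blocks from $U'=\tilde U\oplus L^2\oplus L^3$, $V'=L^1\oplus\tilde V\oplus L^3$, $W'=L^1\oplus L^2\oplus\tilde W$ shows that zero-weight basis tensors can occur only in five families: three ``diagonal'' ones $(\tilde U_a,L^1_b,L^1_a)$, $(L^2_a,\tilde V_b,L^2_b)$, $(L^3_a,L^3_b,\tilde W_b)$ already appearing in $S_{n,r_i}$, and two ``cross'' families $(L^2_a,L^3_b,\tilde W_b)$ and $(L^3_a,\tilde V_b,L^2_b)$. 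Since $T$ is fixed by the torus, $T$ is supported on these. Next I would apply the $\GL_n$ factors: for each block $i$ and index $j_0$, the subgroup with $Z_{ij_0}=Z\in\GL_n$ and everything else trivial acts, on each paired $L^i_{j_0}\otimes L^i_{j_0}$ slab of a diagonal family, as $Z\otimes Z^{-\trans}$, so by the Schur-type computation in Lemma~\ref{lem:stabidentity} such a slab must be a scalar multiple of $\sum_k e^i_{j_0,k}\otimes e^i_{j_0,k}$ and the remaining free index is forced to equal $j_0$; on each cross family the same $\GL_n$ acts as $Z\otimes I$ or $I\otimes Z^{-\trans}$ on the relevant slab, which kills it. What remains is
\[
T = \sum_{j=1}^{r_1}\alpha_j\, e^1_j\otimes\sum_k e^1_{jk}\otimes e^1_{jk} + \sum_{j=1}^{r_2}\beta_j \sum_k e^2_{jk}\otimes e^2_j\otimes e^2_{jk} + \sum_{j=1}^{r_3}\gamma_j \sum_k e^3_{jk}\otimes e^3_{jk}\otimes e^3_j,
\]
and invariance under the $\mathfrak{S}_{r_i}$-factor of $\stab S_{n,r_1,r_2,r_3}$, which simultaneously permutes the $r_i$ slots of block $i$, equates the $\alpha_j$ among themselves, and similarly for the $\beta_j$ and $\gamma_j$, giving $T=\alpha\, S_{n,r_1}\oplus\beta\, S_{n,r_2}\oplus\gamma\, S_{n,r_3}$.

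Finally, to conclude $\alpha,\beta,\gamma\neq 0$ I would use the full hypothesis $\stab T=\stab S_{n,r_1,r_2,r_3}$, not just containment: if for instance $\alpha=0$ then $T$ has no component in $\tilde U\otimes L^1\otimes L^1$, so the much larger subgroup $\GL(\tilde U)\times\GL(L^1)\times\GL(L^1)$ acting only on these summands and trivially on the rest fixes $T$ but is not contained in the block-$1$ part of $\stab S_{n,r_1,r_2,r_3}$, contradicting the equality. The main obstacle I expect is the bookkeeping in the middle step: one has to correctly match each of the five zero-weight sub-block families with an appropriate $\GL_n$ factor so that the three diagonal families collapse to the expected ``identity slab'' form while the two cross families are eliminated. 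Missing a cross family would leave spurious summands in $T$; once this pairing is tabulated correctly, the torus, permutation, and non-vanishing arguments are essentially formal.
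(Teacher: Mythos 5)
Your proof is correct and its overall skeleton — torus weight decomposition, then $\GL_n$ factors via Lemma~\ref{lem:stabidentity}, then $\mathfrak{S}_{r_i}$ permutations, then a stabilizer-equality argument to force $\alpha,\beta,\gamma\neq 0$ — matches the paper's. The one genuine difference is your first stage: you restrict to the $r$-dimensional subtorus obtained by setting all $Z_{ij}=I_n$ and varying only the $z_{ij}$, which is \emph{not} the torus the paper uses. The paper instead takes a one-shot element with $A=(\alpha_1 I,\alpha_2 I,\alpha_3 I)$, $B=(\beta_1 I,\beta_2 I,\beta_3 I)$, $C=(\gamma_1 I,\gamma_2 I,\gamma_3 I)$ subject to $\alpha_i\beta_i\gamma_i=1$, a $6$-dimensional torus that scales each of the $27$ coarse blocks by a single scalar. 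On that torus only the three diagonal big blocks $\tilde U\otimes L^1\otimes L^1$, $L^2\otimes\tilde V\otimes L^2$, $L^3\otimes L^3\otimes\tilde W$ have trivial weight, so the paper never encounters your two cross families $(L^2_a,L^3_b,\tilde W_b)$ and $(L^3_a,\tilde V_b,L^2_b)$; it then needs a \emph{second}, index-refining torus $(A_i(\lambda^{-2}),B_i(\lambda),B_i(\lambda))$ inside each diagonal block before applying $\GL_n$. Your smaller torus buys a single weight-space computation but at the price of the cross families, which you correctly dispatch in the $\GL_n$ stage since each cross slab carries an action $Z\otimes I\otimes I$ (or $I\otimes Z^{-\trans}\otimes I$), which has no nonzero invariants; the same $\GL_n$ stage simultaneously kills the $a\neq b$ off-diagonal pieces of the diagonal families and normalizes the $a=b$ pieces via Lemma~\ref{lem:stabidentity}. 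Net effect: your version collapses the paper's two torus steps into one weaker torus step by pushing more work into the $\GL_n$ stage, and your final argument that $\alpha,\beta,\gamma\neq 0$ (exhibiting $\GL(\tilde U)\times\GL(L^1)\times\GL(L^1)$ as an extra symmetry when $\alpha=0$, contradicting $\stab T=\stab S$) is actually spelled out more carefully than in the paper, which leaves that step implicit with ``if $\alpha\neq 0$, then\ldots''. The only caveat you might add is that the nonvanishing argument presumes $r_i>0$; when some $r_i=0$ the corresponding summand and scalar are vacuous and the statement holds trivially.
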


\begin{proof}
  Suppose $T$ is stabilised by $\stab S_{n, r_1,r_2,r_3}$.
  We first establish that $T$ also has the block structure like $S_{n, r_1,r_2,r_3}$, i.e., even though the ambient space of $T$ is $(\tilde{U} \oplus L^2 \oplus L^3) \otimes (L^1 \oplus \tilde{V} \oplus L^3) \otimes (L^1 \oplus L^2 \oplus \tilde{W})$, it sits completely inside one of the smaller subspaces $(\tilde{U} \otimes L^1 \otimes L^1) \oplus (L^2 \otimes \tilde{V} \otimes L^2) \oplus (L^3 \otimes L^3 \otimes \tilde{W})$ which also contains $S_{n,r_1,r_2,r_3}$.
  
  For this, we take the element $(A,B,C)  \in \stab S_{n,r_1,r_2,r_3} \subseteq \GL_{s_1} \times \GL_{s_2} \times \GL_{s_3}$ where we have $A = (\diag(\alpha_1, \ldots, \alpha_1), \diag(\alpha_2 \cdot \id, \ldots, \alpha_2 \cdot \id), \diag(\alpha_3 \cdot \id, \ldots, \alpha_3 \cdot \id))$, whereas $B = ( \diag(\beta_1 \cdot \id, \ldots, \beta_1 \cdot \id), \diag(\beta_2, \ldots, \beta_2), \diag(\beta_3 \cdot \id, \ldots, \beta_3 \cdot \id))$ and $C = (  \diag(\gamma_1 \cdot \id, \ldots, \gamma_1 \cdot \id), \diag(\gamma_2 \cdot \id, \ldots, \gamma_2 \cdot \id), \diag(\gamma_3, \ldots, \gamma_3))$ such that $\alpha_1  \beta_1  \gamma_1 =  \alpha_2  \beta_2  \gamma_2 = \alpha_3  \beta_3  \gamma_3 = 1$. Now, $U' \otimes V' \otimes W'$ is a direct sum of $27$ subspaces, and $T$ can be decomposed into corresponding $27$ blocks. On the action of $(A,B,C)$ on $T$, only three of the blocks remain fixed, i.e., the ones corresponding to the subspaces $(\tilde{U} \otimes L^1 \otimes L^1)$,  $(L^2 \otimes \tilde{V} \otimes L^2)$ and  $(L^3 \otimes L^3 \otimes \tilde{W})$ because the entries in these blocks get scaled by $\alpha_1 \beta_1 \gamma_1, \alpha_2 \beta_2 \gamma_2$ and $\alpha_3 \beta_3 \gamma_3$ respectively, which are all equal to unity. The blocks corresponding to the other subspaces will be scaled by non-unity and hence will not remain fixed. Hence for $(A,B,C)$ to be in the stabilizer of $T$, only the blocks corresponding to these three subspaces will be non-zero, which is also the case for $S_{n, r_1,r_2,r_3}$.
  
  Recall from Definition \ref{def:srk:unittensor} that $S_{n,r_1,r_2,r_3}$ can be decomposed as $S_{n,r_1} \oplus S_{n,r_2} \oplus S_{n,r_3}$. Thus, we decompose $T$ into blocks as $T = T_{n,r_1} \oplus T_{n,r_2} \oplus T_{n,r_3}$. We focus on $T_{n,r_1} =: T'$, where $T' \in (\tilde{U} \otimes L^{1} \otimes L^{1})$.
  
   Let $T'_1, \dots, T'_{r_1}$ be the slices of $T'$. Decompose them into blocks $T'_i = (T'_{ijk})$ according to the decomposition of $L^{1}$ into $L^{1}_i$.
  
  Let $A_i(\lambda): \tilde{U} \to \tilde{U}$ be the map which scales the $i$-th coordinate by $\lambda$, leaving other in place, and $B_i(\lambda): L^{1} \to L^{1}$ be the map which scales $L^{1}_i$ by $\lambda$ and acts like identity on the other summands.
  Applying to $T'$ the transformation $(A_i(\lambda^{-2}), B_i(\lambda), B_i(\lambda)) \in \stab S_{n,r_1}$,
  we see that all blocks of $T'_i$ except $T'_{iii}$ are zero, as they are multiplied by a coefficient $\lambda^{-2}$ or $\lambda^{-1}$ in this transformation.
  
  Applying $(\id, \diag(Z_{11}, \dots, Z_{1r_1}), \diag(Z_{11}, \dots, Z_{1r_1})^{-\trans})$ with arbitrary $Z_{1i}$ to $T'$, we obtain that each $T'_{iii}$ has the form $a_{1i} \sum_{j = 1}^{\dim L^{1}_i} e^{1}_{ij} \otimes e^{1}_{ij}$.
  Applying permutations on the blocks of $T'_{iii}$, we see that $a_{11} = \cdots = a_{1r_1} =: \alpha$.

  Therefore
  \[
    T' = \alpha \sum_{i = 1}^{r_1} \sum_{j = 1}^{n} e_i \otimes_1 e^{1}_{ij} \otimes e^{1}_{ij} = \alpha \cdot S_{n,r_1}.
  \]
  If $\alpha \neq 0$, then $T' = T_{n,r_1} = (\diag(\alpha, \dots, \alpha), \id, \id) S_{n, r_1}$.
  Applying the symmetrical arguments on $T_{n,r_2}$ and $T_{n,r_3}$, we get that
   $$T =  
  (\diag(\alpha, \dots, \alpha), \id, \id) S_{n,r_1} \oplus ( \id, \diag(\beta, \dots, \beta), \id)S_{n,r_2} \oplus ( \id, \id, \diag(\gamma, \dots, \gamma)) S_{n,r_3},$$ for some $ \alpha, \beta, \gamma \neq 0$,
  or simply
  $T =  \alpha \cdot S_{n,r_1} \oplus \beta \cdot S_{n,r_2} \oplus \gamma \cdot S_{n,r_3}$.
\end{proof}

\section{Ideals of slice rank varieties}\label{slicerank:sec:ideal}
In this section we will consider tensors over $\bbC$ (or over algebraically closed field of characteristic 0).
For an affine variety $\cA$, we denote its ideal by $I(\cA)$ and its coordinate ring by $\bbC[\cA]$.
Since slice rank varieties $\mathcal{SV}_{U\otimes V \otimes W,r}$ are scale-invariant, their ideals and coordinate rings inherit grading from $\bbC[U \otimes V \otimes W]$.
Since they are invariant under the action of $\GL(U) \times \GL(V) \times \GL(W)$, this group acts on $I(\mathcal{SV}_{U\otimes V \otimes W,r})$ and $\bbC[\mathcal{SV}_{U\otimes V \otimes W,r}]$.
We study representation-theoretic properties of specific equations in $I(\cM_r)$ and multiplicities of irreducible representations in $I(\mathcal{SV}_{U\otimes V \otimes W,r})$ and $\bbC[\mathcal{SV}_{U\otimes V \otimes W,r}]$.
Irreducible representations in $\bbC[U \otimes V \otimes W]_d$ are indexed by triples $(\lambda,\mu,\nu)$ of partitions of $D$,
and the multiplicity of the type $(\lambda,\mu,\nu)$ in $\bbC[U \otimes V \otimes W]_D$ is called the \emph{Kronecker coefficient}.
Finding a combinatorial interpretation for the Kronecker coefficient is Problem~10 in \cite{sta:00}.

Here we describe some polynomials in the ideals of slice rank varieties and study their representation-theoretic properties.

Throughout the section we assume $\dim U = k$, $\dim V = m$, $\dim W = n$.

Interestingly, the equations that we find in Section~\ref{subsec:slicerankeqdesign} and Section~\ref{subsec:slicerankeqmult} are exactly the same, even though the method of finding them is very different.
Also note that the equations are given indirectly just by their representation isomorphism type.

\begin{theorem}\label{thm:equations}
Let $\la \vdash_{n^2}D$, $\mu \vdash_{n^2}D$, $\nu \vdash_{n^2}D$ be partitions of $D$ with at most $n^2$ rows.
If $\la_1 \leq n$ and $\mu_1 \leq n$ and $\nu_1 \leq n$,
then all $\GL_{n^2}^3$-modules of type $(\la,\mu,\nu)$ are in the vanishing ideal of $\mathcal{SV}_{\bbC^{n^2} \otimes \bbC^{n^2} \otimes \bbC^{n^2},r}$ for all $r<D/n$.
\end{theorem}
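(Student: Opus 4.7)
The plan is to translate the statement into a multiplicity‑zero statement via the algebraic Peter--Weyl theorem, following the strategy anticipated in Section~\ref{subsec:slicerankeqmult}. By Lemma~\ref{lem:srk:orbit} applied with $U=V=W=\bbC^{n^2}$, the slice rank variety decomposes as a finite union, over triples $(r_1,r_2,r_3)$ with $r_1+r_2+r_3=r$, of orbit closures $\overline{(\GL_{s_1}\times\GL_{s_2}\times\GL_{s_3})\cdot S_{n^2,r_1,r_2,r_3}}$ intersected with the ambient space. The vanishing ideal of a union is the intersection of vanishing ideals, so it suffices to prove that for each such $(r_1,r_2,r_3)$ the entire $\GL_{n^2}^3$-isotypic component of type $(\la,\mu,\nu)$ lies in the ideal of the corresponding orbit closure. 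By $\GL_{n^2}^3$-stability this is equivalent to the multiplicity of $(\la,\mu,\nu)$ in the coordinate ring of that orbit closure being zero, and by the algebraic Peter--Weyl theorem this multiplicity is bounded above by $\dim\bigl(V_\la \otimes V_\mu \otimes V_\nu\bigr)^{\stab S_{n^2,r_1,r_2,r_3}}$.

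Hence the work reduces to showing that this invariant space is trivial whenever $\la_1,\mu_1,\nu_1\leq n$ and $r_1+r_2+r_3<D/n$. To do this, I would use Theorem~\ref{slicerank:thm:stab}, which exhibits inside $\stab S_{n^2,r_1,r_2,r_3}$ a subgroup $\prod_{i=1}^{3}\SL_{n^2}^{r_i}$ in which the factor $\SL_{n^2}^{r_i}$ coming from the summand $S_{n^2,r_i}$ acts as a Levi block on two of the three tensor factors (with dual actions there) and trivially on the third. Restricting $V_\la$ from $\GL_{n^2}$ to the relevant Levi $\prod\GL_{n^2}$ subgroup and expanding via Littlewood--Richardson coefficients, $\SL_{n^2}$-invariance forces each component $\nu^{(j)}$ to be an $n^2\times k_j$ rectangle, while the hypothesis $\la_1\leq n$ forces each $k_j\leq n$ through the LR row bound; the same holds for $\mu$ and $\nu$. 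Matching the rectangles across the paired tensor factors (dictated by the dual embedding of the stabilizer) and summing box counts yields the pigeonhole inequality $D\leq n\cdot(r_1+r_2+r_3)=n\cdot r$, contradicting $r<D/n$. Thus no invariant exists and the multiplicity is zero.

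The main obstacle is the combinatorial bookkeeping in the last step. Each $\SL_{n^2}^{r_i}$ is easy to handle individually, but one must verify that the rectangle widths on the two tensor factors paired by a given $S_{n^2,r_i}$ actually agree (forced by the dual action) and that no ``free'' invariant contributions arise from the third, non-acted-upon factor via the residual $\GL_1$ scalars and $\mathfrak{S}_{r_i}$ pieces of the stabilizer; these two checks are precisely what pins down the exponent $n$ (rather than a weaker $n^2$) in the bound $r<D/n$. A secondary technicality is reconciling the intersection with the ambient $\bbC^{n^2}\otimes\bbC^{n^2}\otimes\bbC^{n^2}$ (the orbit closures live in the larger spaces $\bbF^{s_i}\otimes\bbF^{s_i}\otimes\bbF^{s_i}$) with the multiplicity count; this is routine since restriction to a $\GL_{n^2}^3$-invariant subspace is a surjective equivariant map on coordinate rings, so vanishing of the isotypic component in the ideal of the large-ambient orbit closure implies the analogous vanishing after intersection.
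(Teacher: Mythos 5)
Your proposal takes a genuinely different route from the paper's proof, and it contains a substantive error in the crucial combinatorial step.

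\textbf{Route.} The paper proves Theorem~\ref{thm:equations} by an \emph{explicit} construction: it writes down highest weight vectors $f_H$ of type $(\la,\mu,\nu)$ attached to hypergraphs on $D$ vertices, expresses $\langle f_H, T^{\otimes D}\rangle$ as a sum of products of determinants indexed by maps $J:[D]\to[\text{triads}]$, and observes that any nonzero summand can place at most $n$ triads with the same parent vector (one per layer-$k$ hyperedge, of which there are at most $\la_1\leq n$). Pigeonhole on $D$ vertices then forces $r\geq D/n$ for a nonzero contribution. Your proposal instead aims for a Peter--Weyl multiplicity bound via the stabilizer, which is the strategy of Section~\ref{subsec:slicerankeqmult} — a valid parallel route in principle, but one whose combinatorics you have to get right.

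\textbf{The gap.} The key claim in your step is wrong: ``$\SL_{n^2}$-invariance forces each component $\nu^{(j)}$ to be an $n^2\times k_j$ rectangle.'' The $\GL_{n^2}^{r_i}$ blocks of $\stab S_{n^2,r_1,r_2,r_3}$ (Theorem~\ref{slicerank:thm:stab}) each act on \emph{two} tensor legs with mutually dual actions and trivially on the third, so the invariance condition in the $\GL_{n^2}^{r_i}$-branching pieces $V_{\mu'}\otimes V_{\nu'}^{*}$ is the equality $\mu'=\nu'$ (or equality up to a column for $\SL$); it does \emph{not} force either factor to be a rectangle, and there is no mechanism producing $n^2\times k_j$ rectangles here. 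In the correct argument the bound $D\leq nr_i$ comes from an entirely different source: the decomposition of $\bbC^{s_1}=\bbC^{r_1}\oplus(\bbC^{n^2})^{r_2}\oplus(\bbC^{n^2})^{r_3}$ has an $r_1$-dimensional $\GL_1^{r_1}$ piece, and the $H_{\textsf{small}}$ invariance forces the multi-Littlewood--Richardson coefficient $c^\la_{(\ell_1),\ldots,(\ell_{r_1}),\ldots}$ to have \emph{one-row} lower parameters $(\ell_i)$ with $\sum_i\ell_i=D$; Pieri's rule requires $\ell_i\leq\la_1\leq n$, giving $D\leq r_1 n$ and analogously for $r_2,r_3$. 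Your rectangle bookkeeping, even if repaired, would give at best $D\leq r n^3$ (since $|\nu^{(j)}|=n^2 k_j\leq n^3$), which is far weaker than the claimed $D\leq rn$ and does not prove the theorem. A further unaddressed technicality in your approach is that the stabilizer acts on the large space $\bbC^{s_1}\otimes\bbC^{s_2}\otimes\bbC^{s_3}$, whereas the theorem concerns $\GL_{n^2}^3$-types on $\bbC^{n^2}\otimes\bbC^{n^2}\otimes\bbC^{n^2}$; this passage needs the branching of $\GL_{s_i}$-types to $\GL_{n^2}$-types and an equivariant-surjectivity argument, which you gesture at but do not supply. By contrast the paper's Section~\ref{subsec:slicerankeqdesign} proof avoids both pitfalls by working directly in $\bbC^{n^2}\otimes\bbC^{n^2}\otimes\bbC^{n^2}$ with concrete highest weight vectors.
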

The whole $\GL_{n^2}^3$-module $\{\la,\mu,\nu\}$ being in the vanishing ideal means that if not all of the functions in these modules vanish at $T$, then $(\la,\mu,\nu)$ is an occurrence obstruction.

The abundance of equations that we get from Theorem~\ref{thm:equations} is quite remarkable.
Note that using Schur-Weyl duality we have that the multiplicity of
$\{\la,\mu,\nu\}$ in $\Sym^D \otimes^3 \bbC^{n^2}$ is the Kronecker coefficient $k(\la,\mu,\nu)$.
The dimension of the space of degree $D$ equations that we obtain for slice rank $r$ is
$\sum_{D=nr+1}^{n^3}\sum_{\la,\mu,\nu\vdash_{n^2}D} k(\la,\mu,\nu)\dim\{\la\}\dim\{\mu\}\dim\{\nu\}$.
The dimensions of $\{\la\}$, $\{\mu\}$, $\{\nu\}$ are given by the hook content formula.

Theorem~\ref{thm:equations} gives equations for slice rank in the full range, up to
the most extreme case $\la=\mu=\nu=n^2 \times n \vdash n^3$, which gives equations for slice rank $<n^3/n = n^2$.
This equation is known as Cayley's hyperdeterminant.
Its evaluation at the matrix multiplication tensor is explained combinatorially in \cite[Prop.~5.24]{BI:17fundinv}.

Many more equations for slice rank $<n^2$ than just this hyperdeterminant are readily constructed from Theorem~\ref{thm:equations}.
It is an open question whether or not the equations from Theorem~\ref{thm:equations} cut out $\mathcal{SV}_{\bbC^{n^2} \otimes \bbC^{n^2} \otimes \bbC^{n^2},r}$
set-theoretically.

\subsection{Equations from designs}\label{subsec:slicerankeqdesign}

We study slice rank in $\bbC^{N} \otimes \bbC^{N} \otimes \bbC^{N}$.
It will be natural to have $N=n^2$.
We start by establishing a construction principle for equations.
\subsubsection*{Construction of highest weight vectors}
Given a representation $X$ of $\GL(U)\times\GL(V)\times\GL(W)$,
a \emph{highest weight vector} $t \in X$ is a vector that
\begin{enumerate}
\item is invariant under the action of triples of upper triangular matrices with 1s on the main diagonal
\item satisfies that $x$ is rescaled under the action of triples of diagonal matrices as follows for some triple of partitions $(\la,\mu,\nu)$:
\end{enumerate}
\begin{eqnarray*}
&& (\diag(\alpha_1,\ldots,\alpha_{k}),\diag(\beta_1,\ldots,\beta_{m}),\diag(\gamma_1,\ldots,\gamma_{n}))x \\
&=&
\alpha_1^{\la_1}\cdots\alpha_{k}^{\la_{k}}
\beta_1^{\mu_1}\cdots\beta_{m}^{\mu_{m}}
\gamma_1^{\nu_1}\cdots\gamma_{n}^{\nu_{n}}x
\end{eqnarray*}
The triple $(\la,\mu,\nu)$ is called the \emph{type} of the highest weight vector.
Each irreducible $\GL(U)\times\GL(V)\times\GL(W)$-representation $X$ of type $(\la,\mu,\nu)$
has exactly one highest weight vector (up to scale),
and the type of $X$ coincides with the type of its highest weight vector.
Moreover, $X$ equals the linear span of the $\GL(U)\times\GL(V)\times\GL(W)$-orbit of its highest weight vector.

To construct an irreducible representation of nontrivial equations for a variety, we construct the corresponding highest weight vector and prove that it vanishes on the variety.
The tensor product of tensor powers $U^{\otimes D} \otimes V^{\otimes D} \otimes W^{\otimes D}$ is
known to decompose into irreducibles of the group $\GL(U)\times\GL(V)\times\GL(W)\times\aS_D\times\aS_D\times\aS_D$ via Schur-Weyl duality:
\begin{equation*}
U^{\otimes D} \otimes V^{\otimes D} \otimes W^{\otimes D} = \bigoplus_{\la,\mu,\nu} \{\la\} \otimes \{\mu\} \otimes \{\nu\} \otimes [\la] \otimes [\mu] \otimes [\nu],
\end{equation*}
where the sum runs over all partition triples $\la,\mu,\nu$ such that $\la$, $\mu$, $\nu$ have $D$ boxes and $\la$ has at most $\dim U$ many rows, $\mu$ has at most $\dim V$ many rows, and $\nu$ has at most $\dim W$ many rows.
The $D$-th tensor power of $U \otimes V \otimes W$ is isomorphic to $U^{\otimes D} \otimes V^{\otimes D} \otimes W^{\otimes D}$.
Let $\varrho:U^{\otimes D} \otimes V^{\otimes D} \otimes W^{\otimes D} \to (U \otimes V \otimes W)^{\otimes D}$
denote this canonical isomorphism.
Moreover, if we embed $\aS_D \hookrightarrow \aS_D\times\aS_D\times\aS_D$, $\pi \mapsto(\pi,\pi,\pi)$,
then the space of homogeneous degree $D$ polynomials on $U \otimes V \otimes W$ can be identified with the $\aS_D$-invariant linear subspace of
$(U \otimes V \otimes W)^{\otimes D}$ in a very natural way via polarization and restitution:
If $F$ is a homogeneous degree $D$ polynomial on $U^* \otimes V^* \otimes W^*$ and $f$ its corresponding tensor in $((U \otimes V \otimes W)^{\otimes D})^{\aS_D}$,
then the evaluation of 
$f$ at a point $t \in U^* \otimes V^* \otimes W^*$ equals
the tensor contraction
\begin{equation}\label{slicerank:eq:evaluationcontraction}
F(t) = \langle f, (t^{\otimes D})\rangle.
\end{equation}
We fix a basis of $U$, $V$, and $W$ and denote each basis with $e_1, e_2, \ldots$ when there is no possibility of confusion.
For a partition $\la$, let $\la^t$ denote its transpose.
Given a triple $(\la,\mu,\nu)$ of partitions of $D$, a highest weight vector $h$ of type $(\la,\mu,\nu)$ in
$U^{\otimes D} \otimes V^{\otimes D} \otimes W^{\otimes D}$ can be constructed via $h := h_\la \otimes h_\mu \otimes h_\nu$,
where
\[
h_\la := e_1 \wedge e_2 \wedge \cdots \wedge e_{\la^t_1} \otimes e_1 \wedge e_2 \wedge \cdots \wedge e_{\la^t_2} \otimes \cdots \otimes
e_1 \wedge e_2 \wedge \cdots \wedge e_{\la^t_{\la_1}}.
\]
Let $\pi^{(1)}\in \aS_D$, $\pi^{(2)}\in \aS_D$, $\pi^{(3)}\in \aS_D$.
Clearly $(\pi^{(1)} h_\la) \otimes (\pi^{(2)} h_\mu) \otimes (\pi^{(3)} h_\nu)$ is also a highest weight vector of type $(\la,\mu,\nu)$.
The projection of $\varrho((\pi^{(1)} h_\la) \otimes (\pi^{(2)} h_\mu) \otimes (\pi^{(3)} h_\nu))$
onto $((U\otimes V \otimes W)^{\otimes D})^{\aS_D}$ corresponds to a highest weight vector
of type $(\la,\mu,\nu)$
in $\bbC[U\otimes V\otimes W]_D$ via eq.~\eqref{slicerank:eq:evaluationcontraction}.
The vector space of highest weight vectors of weight $(\la,\mu,\nu$ in $\bbC[U\otimes V\otimes W]_D$ is spanned by the vectors that are constructed in this fashion.

\subsubsection*{Evaluation via products of determinants}
Let $T \in U^* \otimes V^* \otimes W^*$.
Considering eq.~\eqref{slicerank:eq:evaluationcontraction}, we aim to understand the tensor contraction
\begin{equation}\label{slicerank:eq:tensorcontractionfI}
\langle f, T^{\otimes D}\rangle.
\end{equation}
For $T = \sum_{j=1}^r a_j \otimes b_j \otimes c_j$ we expand
\[
T^{\otimes D} = \sum_{J:[D]\to[r]} a_{J(1)} \otimes b_{J(1)} \otimes c_{J(1)} \otimes \cdots \otimes a_{J(D)} \otimes b_{J(D)} \otimes c_{J(D)}.
\]
For a list of vectors $v_1,\ldots,v_k$ of large enough dimension we define $\det(v_1,\ldots,v_k)$ to be the determinant of the $k \times k$ matrix whose columns are given by the top $k$ entries of each $v_i$.
Note that for $\mu=\la^t$ we have
\begin{equation}
\langle h_{\la}, v_1 \otimes \cdots \otimes v_{|\la|}\rangle = \det(v_1,\ldots,v_{\mu_1}) \det(v_{|\mu_1|+1},\ldots,v_{\mu_1+\mu_2}) \cdots \det(v_{|\mu|-\mu_{\la_1}+1},\ldots,v_{|\mu|}).
\end{equation}

\cite{BI:13} established that there is a basis of highest weight vectors in $\C[U\otimes V\otimes W]_D$ for which the contraction in \eqref{slicerank:eq:tensorcontractionfI} has a combinatorial description as follows:
Given $\la,\mu,\nu$ we consider a hypergraph on $D$ vertices with 3 types of hyperedges (we call these ``layers'' of hyperedges) such that every layer of hyperedges is a set partition of the vertices. Moreover, to every column in $\la$ we attach a hyperedge of layer 1 such that the number of vertices in the hyperedge equals the length of the column. We do the same for layer 2 and $\mu$ and layer 3 and $\nu$.
We end up with a hypergraph in which every vertex lies in exactly one hyperedge of layer 1, one hyperedge of layer 2, and one hyperedge of layer 3. We require that no two vertices lie in the same three hyperedges.
As described in \cite{BI:13},
for a hypergraph $H$
we get a highest weight vector $f_H$ (or $f_H=0$) such that for any tensor $T = \sum_{i=1}^R a_i \otimes b_i \otimes c_i$
the evaluation \eqref{slicerank:eq:tensorcontractionfI} can be written as follows:
\begin{multline}
\label{eq:sumofproductofdets}
\langle f_H, T^{\otimes D}\rangle = \sum_{J:[D]\to[R]} \prod_{\text{layer 1 hyperedge }e} \det(a_{J(e_1),\ldots,a_J(e_{|e|})})
\prod_{\text{layer 2 hyperedge }e} \det(b_{J(e_1),\ldots,b_J(e_{|e|})})\\
\prod_{\text{layer 3 hyperedge }e} \det(c_{J(e_1),\ldots,c_J(e_{|e|})})
,
\end{multline}
where we fixed an order on each hyperedge, $e_i$ is the $i$th vertex of $e$, and the determinant of an list of $m$ vectors of dimension $N$ is the determinant of the square matrix in which the columns are given by the top $m$ entries of the vectors.

\subsubsection*{The equations vanish on low slice rank}
\begin{proof}[Proof of Theorem~\ref{thm:equations}]
Let $n = \lceil \sqrt{N} \rceil$.
Since $\la_1 \leq n$, $\mu_1 \leq n$, $\nu_1 \leq n$,
the hypergraph can be reinterpreted as a cardinality $D$ subset of $[n]^3$,
where the slices in $x$-, $y$-, and $z$-direction of $[n]^3$ form the three layers of hyperedges: points share a layer $k$ hyperedge iff they share the $k$-th coordinate, see \cite{BI:13}.
We evaluate $f_H$ at a tensor $T = \sum_{i=1}^{r_1} \sum_{j=1}^n u_{i} \otimes v_{i,j} \otimes w_{i,j} + \sum_{i=1}^{r_2} \sum_{j=1}^n u_{i,j} \otimes v_{i} \otimes w_{i,j} + \sum_{i=1}^{r_3} \sum_{j=1}^n u_{i,j} \otimes v_{i,j} \otimes w_{i}$ of slice rank at most $r_1+r_2+r_3=r$.
In total we have $rn$ many triads.
Each triad has a \emph{parent vector}, which is either $u_i$, $v_i$, or $w_i$, depending of the triad's layer.

A map $J:[D]\to[r n]$ corresponds to a placement of triads on the $D$ vertices.
If two triads with the same parent vector share the hyperedge of their layer, then the determinant corresponding to this hyperedge vanishes (because determinants of matrices with a repeating column are zero), see \eqref{eq:sumofproductofdets}.
Hence we do not have to consider these summands $J$ in \eqref{eq:sumofproductofdets}.
Therefore in the remaining placements $J$, for each parent vector there can only be at most $n$ many placed triads with this parent vector.
Thus the evaluation at $T$ is zero if the number of parent vectors is less than $D/n$.
If the slice rank of $T$ at most $r$, then we can write $T$ using only $r$ many parent vectors.
Therefore $f_H$ vanishes on all points of slice rank less than $D/n$.
Since $H$ was arbitrary and since the vector space of all highest weight vectors of type $(\la,\mu,\nu)$ is generated by the $f_H$, Theorem~\ref{thm:equations} follows.
\end{proof}

\subsection{Equations from multiplicities}\label{subsec:slicerankeqmult}
In this section we search for equations for the slice rank variety by using the symmetry group to study the representation theoretic multiplicities in the orbits of $S_{N,r_1,r_2,r_3}$.
It turns out that we obtain precisely the same equations as in \ref{subsec:slicerankeqdesign}, but without the explicit construction of highest weight functions.

We consider the space $\bbC^{s_1}\otimes\bbC^{s_2}\otimes\bbC^{s_3}$ and recall $\bbC^{s_1} = \bbC^{r_1} \oplus (\bbC^{N})^{r_2} \oplus (\bbC^{N})^{r_3}$, $\bbC^{s_2} = (\bbC^{N})^{r_1} \oplus \bbC^{r_2} \oplus (\bbC^{N})^{r_3}$, $\bbC^{s_3} = (\bbC^{N})^{r_1} \oplus (\bbC^{N})^{r_2} \oplus \bbC^{r_3}$.
Let $G := \GL_{s_1} \times \GL_{s_2} \times \GL_{s_3}$.
Let $H$ denote the continuous part of the stabilizer of $S_{N,r_1,r_2,r_3}$ (i.e., ignoring the symmetric groups).
Let $(\GL_1 \times \GL_1 \times \GL_1)/\bbC^*$ := $\{(\alpha,\beta,\gamma)\mid \alpha\beta\gamma=1\}$.
Then $H$ is generated by $H_{\textsf{large}}$ and $H_{\textsf{small}}$,
where $H_{\textsf{large}} = \GL_n^{r_1} \times \GL_n^{r_2} \times \GL_n^{r_3}$ embedded into $G$ via
\[
(g_1^1,\ldots,g_1^{r_1};g_2^1,\ldots,g_2^{r_2};g_3^1,\ldots,g_3^{r_3}) \mapsto
\]
\begin{eqnarray*}
\big(
\diag(\textrm{Id}_{r_1},g_2^1,\ldots,g_2^{r_2},(g_3^1)^{-T},\ldots,(g_3^{r_3})^{-T})
\big);
\big(
\diag((g_1^1)^{-T},\ldots,(g_1^{r_1})^{-T},\textrm{Id}_{r_2},g_3^1,\ldots,g_3^{r_3}
\big);\\
\big(
\diag(g_1^1,\ldots,g_1^{r_1},(g_2^1)^{-T},\ldots,(g_2^{r_2})^{-T},\textrm{Id}_{r_3}
\big),
\end{eqnarray*}
and $H_{\textsf{small}}$ is $\big((\GL_1 \times \GL_1 \times \GL_1)/\bbC^*\big)^{r_1+r_2+r_3}$. The first factor $(\GL_1 \times \GL_1 \times \GL_1)/\bbC^*$ is embedded in $G$ via $(\alpha,\beta,\gamma) \mapsto$
\[
\Big(\diag(\alpha,1,\ldots,1,\textrm{Id}_{r_2},\textrm{Id}_{r_3});\diag(\underbrace{\beta,\beta,\ldots,\beta}_{n \text{ times}},1,\ldots,1,\textrm{Id}_{r_2},\textrm{Id}_{r_3});\diag(\underbrace{\gamma,\gamma,\ldots,\gamma}_{n \text{ times}},1,\ldots,1,\textrm{Id}_{r_2},\textrm{Id}_{r_3})\Big)
,
\]
while the other factors are embedded analogously.

$(\{\la\}_{s_1} \otimes \{\mu\}_{s_2} \otimes \{\nu\}_{s_3})^H = (\{\la\}_{s_1} \otimes \{\mu\}_{s_2} \otimes \{\nu\}_{s_3})^{H_{\textsf{small}}} \cap (\{\la\}_{s_1} \otimes \{\mu\}_{s_2} \otimes \{\nu\}_{s_3})^{H_{\textsf{large}}}$

We will treat $H_{\textsf{large}}$ and $H_{\textsf{small}}$ independently.

First, we decompose $(\{\la\}_{s_1} \otimes \{\mu\}_{s_2} \otimes \{\nu\}_{s_3})$ into irreducibles of $\GL_1^{r_1}\times(\GL_n)^{r_1}\times\GL_1^{r_2}\times(\GL_n)^{r_2}\times\GL_1^{r_3}\times(\GL_n)^{r_3}$ as follows.

We use the multi-Littlewood-Richardson rule (note that while $\la_1$ is the length of the first row of $\la$, we have that $\la_1''$ is a partition. We will not need to refer to its row lengths):
\begin{multline*}
\{\la\}_{s_1} = \bigoplus_{\substack{\ell_1,\ldots,\ell_{r_1} \\ \la_1'',\ldots,\la_{r_2}''\vdash_n\\\la_1''',\ldots,\la_{r_3}''' \vdash_n}} c_{(\ell_1),\ldots,(\ell_{r_1}),\la_1'',\ldots,\la_{r_2}'',\la_1''',\ldots,\la_{r_3}'''}^\la\\
\{(\ell_1)\} \otimes \cdots \otimes \{(\ell_{r_1})\} \otimes \{\la_1''\} \otimes \cdots \otimes \{\la_{r_2}''\} \otimes \{\la_1'''\} \otimes \cdots \otimes \{\la_{r_3}'''\}
\end{multline*}
\begin{multline*}
\{\mu\}_{s_2} = \bigoplus_{\substack{\mu_1',\ldots,\mu_{r_1}'\vdash_n\\
m_1,\ldots,m_{r_2} \\ \mu_1''',\ldots,\mu_{r_3}''' \vdash_n}} c_{\mu_1',\ldots,\mu_{r_1}',(m_1),\ldots,(m_{r_2}),\mu_1''',\ldots,\mu_{r_3}'''}^\mu \\
\{\mu_1'\} \otimes \cdots \otimes \{\mu_{r_1}'\} \otimes \{(m_1)\} \otimes \cdots \otimes \{(m_{r_2})\} \otimes \{\mu_1'''\} \otimes \cdots \otimes \{\mu_{r_3}'''\}
\end{multline*}
\begin{multline*}
\{\nu\}_{s_3} = \bigoplus_{\substack{\nu_1',\ldots,\nu_{r_1}'\vdash_n\\
\nu_1'',\ldots,\nu_{r_2}'' \vdash_n\\ n_1,\ldots,n_{r_3} }} c_{\nu_1',\ldots,\nu_{r_1}',\nu_1'',\ldots,\nu_{r_2}'',(n_1),\ldots,(n_{r_3})}^\nu \\
\{\nu_1'\} \otimes \cdots \otimes \{\nu_{r_1}'\} \otimes \{\nu_1''\} \otimes \cdots \otimes \{\nu_{r_2}''\} \otimes \{(n_1)\} \otimes \cdots \otimes \{(n_{r_3})\}
\end{multline*}
Each summand is a representation of $H_{\textsf{large}}$ and $H_{\textsf{small}}$.

The summands that are invariant under $H_{\textsf{large}}$ are the ones for which $\mu_i'=\nu_i'$ and $\nu_i''=\la_i''$ and $\la_i'''=\mu_i'''$. This follows from the fact that $\dim(\{\la\}\otimes\{\mu^*\})^{\GL_n}=1$ if $\la=\mu$ and 0 otherwise.

The summands that are invariant under $H_{\textsf{small}}$ are the ones
for which $|\mu_i'|=|\nu_i'|=\ell_i$ and $|\nu_i''|=|\la_i'|=m_i$ and $|\la_i'''|=|\mu_i'''|=n_i$.

Hence the dimension of the $H$-invariant space in $\{\la\}_{s_1} \otimes \{\mu\}_{s_2} \otimes \{\nu\}_{s_3}$ is
\[
\sum_{\substack{\la_1',\ldots,\la_{r_1}'\\\mu_1'',\ldots,\mu_{r_2}''\\\nu_1''',\ldots,\nu_{r_3}'''}}
c_{(|\la_1'|),\ldots,(|\la_{r_1}'|),\mu_1'',\ldots,\mu_{r_2}'',\nu_1''',\ldots,\nu_{r_3}'''}^\la
\cdot
c_{\la_1',\ldots,\la_{r_1}',(|\mu_1''|),\ldots,(|\mu_{r_2}''|),\nu_1''',\ldots,\nu_{r_3}'''}^\mu
\cdot
c_{\la_1',\ldots,\la_{r_1}',\mu_1'',\ldots,\mu_{r_2}'',(|\nu_1'''|),\ldots,(|\nu_{r_3}'''|)}^\nu
\]

We are interested in the case in which all summands vanish.
Note that a Littlewood-Richardson coefficient is zero if at least one of its lower partition parameters is not a subpartition of the upper partition parameter. In particular, for nonzeroness we require
$|\la_1'|,\ldots,|\la_{r_1}'| \leq \la_1$ and
$|\mu_1''|,\ldots,|\mu_{r_2}''| \leq \mu_1$ and
$|\nu_1'''|,\ldots,|\nu_{r_3}'''| \leq \nu_1$.
Let $\la_1,\mu_1,\nu_1 \leq k$.

Let $|\la|=|\mu|=|\nu|=D$ be the degree. (Clearly we have $D \leq kn$, because otherwise there are no such $\la$, $\mu$, $\nu$.)
Note that
\[
|\la_1'|+\cdots+|\la_{r_1}'|=|\mu_1''|+\cdots+|\mu_{r_2}''|=|\nu_1'''|+\ldots+|\nu_{r_3}'''| = D.
\]
Hence for nonzeroness we require $D = |\la_1'|+\cdots+|\la_{r_1}'| \leq r_1 k$. Analogously $D \leq r_2 k$ and $d \leq r_3 k$.

Hence if $\la_1,\mu_1,\nu_1 \leq k$, then the irreducible $G$-representation $\{\la,\mu,\nu\}$ does not occur in 
$\bbC[G S_{N,r_1,r_2,r_3}]$ (and hence not in $\bbC[\overline{G S_{N,r_1,r_2,r_3}}]$) if $r_1>D/k$ or $r_2>D/k$ or $r_3>D/k$, in particular if $r_1+r_2+r_3>D/k$.

For the sake of comparing these equations to the equations found in Section~\ref{subsec:slicerankeqdesign}, let $N=n^2$ and $k=n$.
Then we get a degree $D$ equation vanishing on $\overline{G S_{N,r_1,r_2,r_3}}$ if $r_1 > D/n$ or $r_2 > D/n$ or $r_3 > D/n$.
In particular, $(\la,\mu,\nu)$ does not occur in
$\bbC[G S_{N,r_1,r_2,r_3}]$
(and hence also not in $\bbC[\overline{G S_{N,r_1,r_2,r_3}}]$)
if $r_1+r_2+r_3 > D/n$,
which is precisely what we found in Section~\ref{subsec:slicerankeqdesign}, where we constructed the equations explicitly.

\section{Homogeneous minrank problem}\label{sec:homogminrank}

We consider the following problem:
given a tuple of matrices $A_1, \dots, A_k$ of the same size $m \times n$ and a number $r$, does there exist a nonzero linear combination $x_1 A_1 + \dots + x_k A_k$ with rank at most $r$?
This is a homogeneous variant of the MinRank problem, where instead of a linear combination we have an affine expression $A_0 + x_1 A_1 + \dots + x_k A_k$.
A restricted variant of this problem was first considered in~\cite{DBLP:journals/jcss/BussFS99}, where it is proven that the problem is $\NP$-hard.
The related problem of low rank matrix completion is widely studied in optimization. 

Clearly, the answer depends on the field from which we take the coefficients of the linear combination. For example, the pair of matrices
\[
  A_1 = \begin{bmatrix} 1 & 0 \\ 0 & 1 \end{bmatrix}, \,\, A_2 = \begin{bmatrix} 0 & 1 \\ -1 & 0 \end{bmatrix}
\]
has no nontrivial linear combinations of rank $1$ over $\bbR$, but over $\bbC$ we have $\rk (A_1 + i A_2) = 1$.
We will mostly work over algebraically closed fields such as $\bbC$, but many results are also true over other fields.

Let $F$ be a field.
Instead of talking about matrices $A_i, \dots, A_k \in F^{m \times n}$,
we can also phrase the homogeneous minrank problem in terms of a linear subspace $\left< A_1, \dots, A_k \right>$, a matrix of linear forms $A \colon F^k \to F^{m \times n}$ where $A(x) = \sum_{i = 1}^k x_i A_i$ or a tensor $T \in F^k \otimes F^m \otimes F^n$ such that $T = \sum e_i \otimes A_i$.
We will use the tensor language.

Recall the definition of minrank.
\begin{definition}
  Let $U, V, W$ be finite-dimensional vector spaces over some field $F$. The minrank of a tensor $T \in U \otimes V \otimes W$ is the minimal number $r$ such that there exists a nonzero $x \in U^*$ with $\rk (Tx) = r$.
\end{definition}

Let $S$ be a finite or countable subset of $F$.
\begin{problem}{$\HMinRk_{S, F}$}
  Given a tensor $T \in F^{k \times m \times n}$ with all components in $S$ and a number $r$,
  decide if the minrank of $T$ is at most $r$.
\end{problem}

In section \ref{sec:hard} we will prove that this problem is $\NP$-hard. Moreover, it is hard even if we look for rank $1$ slices.

\begin{problem}{$\HMinRkU_{S, F}$}
  Given a tensor $T \in F^{k \times m \times n}$ with all components in $S$,
  decide if the minrank of $T$ is at most $1$.
\end{problem}

\section{Geometric description of Minrank varieties} \label{sec:geo}

Over algebraically closed fields, the answer to the homogeneous minrank problem is determined by membership in a certain affine variety.

\begin{theorem} \label{thm:closed}
  Let $U$, $V$, $W$ be vector spaces over an algebraically closed field $F$.
  The set of all tensors $T \in U \otimes V \otimes W$ with minrank at most $r$ is Zariski closed.
\end{theorem}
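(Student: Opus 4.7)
The plan is to realize the set of tensors of minrank at most $r$ as the image of a closed subset of a product with projective space under projection to the first factor, and then invoke the completeness of projective space.

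First I would introduce the incidence variety
\[
  Z = \{(T,[x]) \in (U \otimes V \otimes W) \times \bbP(U^*) \mid \rk(Tx) \le r\}.
\]
This is well-defined on $\bbP(U^*)$ because $T(\lambda x) = \lambda \cdot Tx$ has the same rank as $Tx$ for any nonzero $\lambda \in F$, so the condition ``$\rk(Tx)\le r$'' is independent of the representative $x$ of $[x]$. To check that $Z$ is Zariski closed in $(U\otimes V\otimes W)\times \bbP(U^*)$, I would pick a basis of $U$ and write $T = \sum_i e_i \otimes A_i$ with $A_i\in V\otimes W$; then the entries of the $m\times n$ matrix $Tx = \sum_i x_i A_i$ are linear in the coordinates of $T$ and of $x$. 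The $(r+1)\times(r+1)$ minors of $Tx$ are therefore bihomogeneous polynomials in $T$ and $x$ (of degree $r+1$ in $x$), and their simultaneous vanishing cuts out $Z$ as a closed subvariety of the product.

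Next I would project on the first factor, $\pi \colon (U\otimes V\otimes W)\times \bbP(U^*) \to U\otimes V\otimes W$. By definition of minrank,
\[
  \pi(Z) = \{T\in U\otimes V\otimes W \mid \exists\, [x]\in \bbP(U^*),\ \rk(Tx)\le r\} = \{T\mid \mathrm{minrank}(T)\le r\},
\]
which is exactly the set we want to show is closed. Since $F$ is algebraically closed, projective space $\bbP(U^*)$ is complete, so the projection $\pi$ is a closed map. Applied to the closed subset $Z$, this gives that $\pi(Z)$ is Zariski closed in $U\otimes V\otimes W$, completing the proof.

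The main conceptual step is the use of completeness of $\bbP(U^*)$ (the ``main theorem of elimination theory''), which is what makes the existential quantifier over the nonzero $x$ behave well; without projectivizing, the image of a closed set in the product with the affine space $U^*\setminus\{0\}$ need not be closed. All other steps are routine: verifying that rank conditions are defined by vanishing of minors, and that those minors are bihomogeneous so they define a subvariety of the mixed affine--projective product.
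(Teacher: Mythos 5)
Your proof is correct and takes essentially the same route as the paper: form the incidence set cut out by the $(r+1)\times(r+1)$ minors of $Tx$, projectivize in $x$, and project. The only difference is a small technical simplification — the paper projectivizes both factors (using scale-invariance in $T$ as well as in $x$) and then passes to the affine cone, while you keep $T$ affine and invoke directly that $\bbP(U^*)$ is complete, so the projection $(U\otimes V\otimes W)\times\bbP(U^*)\to U\otimes V\otimes W$ is a closed map; this avoids the final affine-cone step.
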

\begin{proof}
Define an affine variety
\[
  \cX_{U \otimes V \otimes W, r} = \{ (T, x) \in (U \otimes V \otimes W) \times U^* \mid \rk (Tx) \leq r \}.
\]
Since the condition $\rk(Tx) \leq r$ is scale-invariant with respect to both $T$ and $x$, we can define the corresponding projective variety 
\[
\begin{aligned}
  \cPX_{U \otimes V \otimes W, r} = \{ ([T], [x]) \in \bbP(U \otimes V \otimes W) \times \bbP U^* \mid \rk (Tx) \leq r \} \subset \bbP(U \otimes V \otimes W) \times \bbP U^*
\end{aligned}
\]
Let $\pi \colon \bbP(U \otimes V \otimes W) \times \bbP U^* \to \bbP(U \otimes V \otimes W)$ be the projection onto the first component of the product.
Consider the image of $\cPX_{U \otimes V \otimes W, r}$ under $\pi$:
\[
    \pi \cPX_{U \otimes V \otimes W, r} = \{[T] \in \bbP(U \otimes V \otimes W) \mid  \exists x \neq 0 \colon \rk(Tx) \leq r\}
\]
As an image of a projective variety, it is a closed subvariety of $\bbP(U \otimes V \otimes W)$ (see e.~g.~\cite[Thm.~1.10]{Shafarevich1994}).
The affine cone over this subvariety is therefore also closed. This affine cone is exactly the set of tensors of minrank at most $r$.
\end{proof}

\begin{definition}
  We call the projective variety
  \[
    \cPM_{U \otimes V \otimes W, r} = \{[T] \in \bbP(U \otimes V \otimes W) \mid  \exists x \neq 0 \colon \rk(Tx) \leq r\}
  \]
  the \emph{projective minrank variety}, and the corresponding affine cone
  \[
    \cM_{U \otimes V \otimes W, r} = \{ T \in U \otimes V \otimes W \mid \exists x \neq 0 \colon \rk(Tx) \leq r \}
  \]
  the \emph{affine minrank variety}, or just the \emph{minrank variety}.
  We omit the index $U \otimes V \otimes W$ if it is clear from context.
\end{definition}

Some simple properties of minrank varieties follow directly from the definition:

\begin{lemma}
  \label{lemma-inheritance-varieties}
  Let $V'$ and $W'$ be subspaces of $V$ and $W$ respectively. Then 
  \[
    \cM_{U \otimes V' \otimes W', r} = \cM_{U \otimes V \otimes W, r} \cap (U \otimes V' \otimes W').
  \]
\end{lemma}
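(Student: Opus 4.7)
The plan is a straightforward double inclusion argument, where the only substantive point is that for a tensor living in a subspace, the rank of its contraction with a linear form on $U$ is the same whether one views the contraction as a matrix in $V' \otimes W'$ or in the larger space $V \otimes W$.

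First I would unfold the definitions. Recall that $T \in \cM_{U \otimes V \otimes W, r}$ means $T \in U \otimes V \otimes W$ and there exists nonzero $x \in U^*$ with $\rk(Tx) \leq r$, where $Tx$ is the contraction along the first factor. The key observation is that the inclusion $V' \otimes W' \hookrightarrow V \otimes W$ preserves rank: a tensor $M \in V' \otimes W'$ has the same rank viewed in $V' \otimes W'$ as it does viewed in the larger space $V \otimes W$, since choosing any bases of $V'$ and $W'$ and extending them yields a matrix representation that differs only by appended zero rows and zero columns.

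For the forward inclusion ($\subseteq$), I would take $T \in \cM_{U \otimes V' \otimes W', r}$. Then $T \in U \otimes V' \otimes W' \subseteq U \otimes V \otimes W$, and there is nonzero $x \in U^*$ with $\rk(Tx) \leq r$ where $Tx \in V' \otimes W'$. By the rank-preservation observation, the same $Tx$ has rank $\leq r$ when viewed in $V \otimes W$, so $T \in \cM_{U \otimes V \otimes W, r}$. For the reverse inclusion ($\supseteq$), I would take $T \in \cM_{U \otimes V \otimes W, r} \cap (U \otimes V' \otimes W')$. Since $T \in U \otimes V' \otimes W'$, for any $x \in U^*$ the contraction $Tx$ already lies in $V' \otimes W'$. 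The hypothesis gives a nonzero $x$ with $\rk(Tx) \leq r$ in $V \otimes W$, which by the rank-preservation observation is the same as the rank of $Tx$ in $V' \otimes W'$, so $T \in \cM_{U \otimes V' \otimes W', r}$.

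There is no real obstacle here: the only thing to verify carefully is the rank-preservation under the embedding $V' \otimes W' \hookrightarrow V \otimes W$, which I would dispatch in one sentence by noting that this embedding corresponds to padding a matrix with zero rows and columns, an operation which manifestly preserves rank. The lemma is then immediate.
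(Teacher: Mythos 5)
Your proof is correct and matches the paper's (one-line) argument; the paper simply calls this "trivial," while you spell out the rank-preservation under the embedding $V' \otimes W' \hookrightarrow V \otimes W$ that justifies it. No discrepancy.
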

\begin{proof}
  Trivial. A tensor lies in $\cM_{U \otimes V' \otimes W', r}$ iff it is an element of the space $U \otimes V' \otimes W'$ and has minrank at most $r$, i.~e., lies in $\cM_{U \otimes V \otimes W, r}$.
\end{proof}

\begin{lemma}
  Let $\dim U = k$, $\dim V = n$ and $\dim W > s = n(k - 1) + r$. Then
  \[
    \cM_{U \otimes V \otimes W, r} = \bigcup_{\substack{W' \subset W \\ \dim W' = s}} \cM_{U \otimes V \otimes W', r}.
  \]
\end{lemma}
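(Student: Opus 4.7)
The plan is to prove both inclusions separately, with the nontrivial direction being a dimension-counting argument on the slices of $T$.

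For the inclusion $\supseteq$, I would invoke Lemma~\ref{lemma-inheritance-varieties} directly: if $T \in \cM_{U \otimes V \otimes W', r}$ for some subspace $W' \subset W$ of dimension $s$, then in particular $T \in U \otimes V \otimes W'$ and has minrank at most $r$ as a tensor in the larger space, so $T \in \cM_{U \otimes V \otimes W, r}$.

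For the inclusion $\subseteq$, suppose $T \in \cM_{U \otimes V \otimes W, r}$, so there is a nonzero $x_1 \in U^*$ with $\rk(T x_1) \leq r$. Extend $x_1$ to a basis $x_1, \dots, x_k$ of $U^*$ with dual basis $e_1, \dots, e_k$ of $U$, so that $T = \sum_{i=1}^k e_i \otimes T x_i$ with each slice $T x_i \in V \otimes W$. For each $i$, view $T x_i$ as a linear map $V^* \to W$ and let $W_i \subset W$ be its image, which has dimension $\rk(T x_i)$. By choice of $x_1$ we have $\dim W_1 \leq r$, and for $i \geq 2$ we trivially have $\dim W_i \leq \dim V = n$. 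Thus the subspace $W'' := W_1 + W_2 + \cdots + W_k$ satisfies $\dim W'' \leq r + (k-1)n = s$ and $T x_i \in V \otimes W''$ for every $i$, so $T \in U \otimes V \otimes W''$.

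Since $\dim W > s \geq \dim W''$, we can enlarge $W''$ to a subspace $W' \subset W$ of dimension exactly $s$. Then $T \in U \otimes V \otimes W'$, and by Lemma~\ref{lemma-inheritance-varieties} together with $T \in \cM_{U \otimes V \otimes W, r}$ we conclude $T \in \cM_{U \otimes V \otimes W', r}$, placing $T$ in the asserted union. There is no real obstacle here; the only point requiring a moment of care is that one must use the hypothesis $\dim W > s$ (strict, or at least $\geq$) to guarantee that $W''$ can be enlarged to a subspace of the prescribed dimension $s$ inside $W$.
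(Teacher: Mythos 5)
Your proof is correct and takes essentially the same route as the paper: choose a dual basis starting from $x_1$ with $\rk(Tx_1)\leq r$, bound each slice's column space by $r$ (first slice) or $n$ (remaining slices), sum these to a subspace of dimension at most $s$, and enlarge it to dimension exactly $s$ using $\dim W > s$. The only cosmetic difference is that you are a bit more explicit about the dual basis and about viewing slices as maps $V^*\to W$, and you spell out the easy $\supseteq$ inclusion which the paper leaves implicit.
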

\begin{proof}
  Let $T$ be a tensor in $\cM_{U \otimes V \otimes W, r}$ and $x_1$ be a nonzero vector in $U^*$ such that $\rk(Tx_1) \leq r$.
  Choose $x_2, \dots, x_k$ such that $\{x_i\}$ is a basis of $U^*$ and set $A_i = Tx_i \in V \otimes W$.
  Since $\rk A_1 \leq r$, there exists a subspace $W_1 \subset W$ of dimension at most $r$ such that $A_1 \in V \otimes W_1$.
  Analogously, for $i > 1$ we have $A_i \in V \otimes W_i$ for some subspace $W_i \subset W$ of dimension at most $n$.
  The sum $W'$ of all $W_i$ is a subspace of dimension at most $s$. We extend it to dimension $s$ in arbitrary way if needed.
  The tensor $T$ lies in $U \otimes V \otimes W'$ and, therefore, in $\cM_{U \otimes V \otimes W', r}$.
\end{proof}

\begin{lemma}
  The variety $\cM_{U \otimes V \otimes W, r}$ is invariant under the standard action of $\GL(U) \times \GL(V) \times \GL(W)$ on $U \otimes V \otimes W$.
\end{lemma}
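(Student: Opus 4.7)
The plan is to verify directly from the definition that the minrank property is preserved by the group action, by tracking how the contraction $T \mapsto Tx$ transforms.

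First, I would recall the key formula: for $T = \sum_i e_i \otimes A_i \in U \otimes V \otimes W$ and $x \in U^*$, the contraction $Tx = \sum_i x(e_i) A_i$ is an element of $V \otimes W$, which we view as a matrix. For $(A,B,C) \in \GL(U) \times \GL(V) \times \GL(W)$, a direct computation (using that $\GL(U)$ acts on $U^*$ by the inverse transpose) shows the covariance relation
\[
  ((A \otimes B \otimes C) T)(x') \;=\; B \cdot (T(A^{\trans} x')) \cdot C^{\trans},
\]
as elements of $V \otimes W$.

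Now suppose $T \in \cM_{U \otimes V \otimes W, r}$, so there exists a nonzero $x \in U^*$ with $\rk(Tx) \leq r$. Given $(A,B,C) \in \GL(U) \times \GL(V) \times \GL(W)$, set $T' = (A \otimes B \otimes C) T$ and choose $x' = (A^{\trans})^{-1} x \in U^*$. Since $A$ is invertible, $x' \neq 0$, and by the covariance formula,
\[
  T' x' \;=\; B \cdot (Tx) \cdot C^{\trans}.
\]
Since $B$ and $C^{\trans}$ are invertible, multiplication by them preserves matrix rank, so $\rk(T' x') = \rk(Tx) \leq r$. Hence $T' \in \cM_{U \otimes V \otimes W, r}$, which proves invariance.

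There is no real obstacle here; the only point requiring a moment of care is fixing the convention under which $\GL(U)$ acts on $U^*$ (namely by $(A \cdot x)(u) = x(A^{-1}u)$, so the action on coordinate vectors is by $A^{-\trans}$), which is what makes $x' = (A^{\trans})^{-1}x$ the correct choice. Everything else is immediate from the scale-invariance and bilinearity of the contraction map.
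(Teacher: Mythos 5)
Your proof is correct and follows essentially the same route as the paper: both observe that contracting the transformed tensor against the dual-transformed covector gives $(G\otimes H)(Tx)$, which has the same rank since $G$ and $H$ are invertible. You have simply written out the covariance formula and the choice $x' = A^{-\trans}x$ more explicitly than the paper's one-line argument.
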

\begin{proof}
  Straightforward. If $\rk(Tx) \leq r$, then $(F \otimes G \otimes H) T \cdot (F x) = (G \otimes H) (Tx)$ also has rank at most $r$ (here $F x$ denotes the dual action of $GL(U)$ on $U^*$).
\end{proof}

\subsection{Minrank varieties and orbit closures}

The minrank varieties are related to orbit closures of some tensors.
Let $L = (F^n)^{\oplus (k - 1)} \oplus F^r$ be a vector space of dimension $s = n(k - 1) + r$ decomposed into $k$ summands of dimension $n$ each, except the first one, which is of dimension $r$.
Let $L_i$ be the $i$-th summand and denote the standard basis of $L_i$ by $e_{ij}$, $1 \leq j \leq \dim L_i$.
Let $U = F^k$ be a $k$-dimensional space with a standard basis $e_i$. Define the tensor $T_{k, n, r} \in U \otimes L \otimes L$ as
\[
  T_{k, n, r} = e_1 \otimes (\sum_{j = 1}^r e_{1j} \otimes e_{1j}) + \sum_{i = 2}^{k} e_i \otimes (\sum_{j = 1}^n e_{ij} \otimes e_{ij}),
\]
that is, $i$-th layer of $T$ is a block matrix with the only nonzero block being a diagonal matrix in $L_i \otimes L_i$.

The group $\GL_k \times \GL_s \times \GL_s$ acts in a usual way on $U \otimes L \otimes L$.
The minrank variety $\cM_{r}$ can be defined using the orbit closure of $T_{k, n, r}$:
\begin{theorem}
  Let $V$ be an $n$-dimensional subspace of $L$.
  Then \[\cM_{U \otimes V \otimes L, r} = \overline{(\GL_k \times \GL_s \times \GL_s) T_{k, n, r}} \cap (U \otimes V \otimes L).\]
\end{theorem}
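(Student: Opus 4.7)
The plan is to prove the two inclusions separately, using Theorem~\ref{thm:closed} for one direction and an explicit decomposition argument for the other.

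For the inclusion $\overline{(\GL_k \times \GL_s \times \GL_s) T_{k,n,r}} \cap (U \otimes V \otimes L) \subseteq \cM_{U \otimes V \otimes L, r}$, the first observation is that $T_{k,n,r}$ itself has minrank at most $r$ in $U \otimes L \otimes L$: contracting with $e_1^* \in U^*$ yields a rank-$r$ diagonal matrix supported on $L_1 \otimes L_1$. By Theorem~\ref{thm:closed}, $\cM_{U \otimes L \otimes L, r}$ is Zariski closed, and it is $\GL_k \times \GL_s \times \GL_s$-invariant, so it contains the entire orbit closure of $T_{k,n,r}$. Intersecting with $U \otimes V \otimes L$ and invoking Lemma~\ref{lemma-inheritance-varieties} (applied to $V \subseteq L$ and $L \subseteq L$) gives the desired inclusion.

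For the reverse inclusion, let $T \in \cM_{U \otimes V \otimes L, r}$, so there is a nonzero $x \in U^*$ with $\rk(Tx) \leq r$. Using a transformation in $\GL(U)$, I would move $x$ to $e_1^*$, so that the slices $A_1, \dots, A_k \in V \otimes L$ of $T$ satisfy $\rk(A_1) \leq r$, while $\rk(A_i) \leq n = \dim V$ automatically for $i \geq 2$. Writing rank decompositions
\[
  A_1 = \sum_{j=1}^{r} v_{1j} \otimes w_{1j}, \qquad A_i = \sum_{j=1}^{n} v_{ij} \otimes w_{ij} \text{ for } i \geq 2,
\]
with $v_{ij} \in V \subseteq L$ and $w_{ij} \in L$, I would define linear maps $Q, R \in \End(L)$ on the basis of $L$ by $Q(e_{ij}) = v_{ij}$ and $R(e_{ij}) = w_{ij}$ for all relevant $(i,j)$. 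A direct computation then shows $(\mathrm{id}_U \otimes Q \otimes R)\, T_{k,n,r} = T$.

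The maps $Q$ and $R$ need not be invertible, but this is not an obstacle: since $\GL(L)$ is Zariski dense in $\End(L)$, every element of $(\mathrm{id} \otimes \End(L) \otimes \End(L))\, T_{k,n,r}$ lies in the closure $\overline{(\GL_k \times \GL_s \times \GL_s) T_{k,n,r}}$. Hence $T$ belongs to this orbit closure, and since $T \in U \otimes V \otimes L$ by assumption, it lies in the intersection. The only subtlety worth double-checking is the initial reduction: applying a $\GL(U)$ transformation to send $x$ to $e_1^*$ changes $T$ to a $\GL$-equivalent tensor, but the two statements we want to relate (minrank $\leq r$ and membership in the orbit closure) are both $\GL_k \times \GL_s \times \GL_s$-invariant, so this reduction is harmless.
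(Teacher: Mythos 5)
Your proof is correct and follows essentially the same strategy as the paper: the forward inclusion via closedness and invariance of the minrank variety combined with the inheritance lemma, and the reverse inclusion via rank decompositions of the slices and density of $\GL(L)$ in $\End(L)$. The only cosmetic difference is that you normalize the covector $x$ to $e_1^*$ via a preliminary $\GL(U)$-move and then use $\id_U$ in the first tensor slot, whereas the paper absorbs this step directly into the map $R\colon U\to U$ sending the standard basis to the chosen basis $\{u_i\}$; both are equivalent.
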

\begin{proof}
  We have $T_{k, n, r} \in \cM_{U \otimes L \otimes L, r}$. Since the minrank variety is invariant, the entire orbit $(\GL_k \times \GL_s \times \GL_s) T_{k, n, r}$ lies in it. Since the minrank variety is Zariski closed, it also contains the orbit closure.
  By Lemma~\ref{lemma-inheritance-varieties} we have $\overline{(\GL_k \times \GL_s \times \GL_s) T_{k, n, r}} \cap (U \otimes V \otimes L) \subset \cM_{U \otimes V \otimes L, r}$.
  
  Conversely, let $T \in \cM_{U \otimes V \otimes L, r}$. We can write $T$ as $\sum_{i = 1}^k u_i \otimes A_i$ where $\{u_i\}$ is some basis of $U$ and $A_1$ is a slice with $\rk(A_1) \leq r$.

  Since $\rk(A_1) \leq r$, it can be presented as $(P_1 \otimes Q_1)(\sum_{j = 1}^r e_{1j} \otimes e_{1j})$ where $P_1 \colon L_1 \to V$ and $Q_1 \colon L_1 \to L$ are some linear maps.
  Analogously, for $i > 1$ we have $\rk(A_i) \leq \dim V = n$ and $A_i = (P_i \otimes Q_i)(\sum_{j = 1}^n e_{ij} \otimes e_{ij})$ for some $P_i \colon L_i \to V$ and $Q_i \colon L_i \to L$.
  Let $P \colon L \to V$ and $Q \colon L \to L$ be the linear maps which are equal to $P_i$ and $Q_i$ respectively when restricted to $L_i$.
  Let $R \colon U \to U$ be the map sending each $e_i$ to the corresponding $u_i$.
  Then $T = (R \otimes P \otimes Q) T_{k, n, r}$.
  The closure of $\GL(L)$ consists of all linear endomorphisms of $L$ and thus contains $P$ and $Q$.
  Therefore, $T$ lies in the closure $\overline{(\GL_k \times \GL_s \times \GL_s) T_{k, n, r}}$.
\end{proof}

\begin{corollary} \label{cor:orbit}
  Let $\dim U = k$ and $\dim V = n$. Suppose $V$ and $W$ are subspaces of a vector space $L$ of dimension $s = (k - 1)n + r$.
  Then \[\cM_{U \otimes V \otimes W, r} = \overline{(\GL(U) \times \GL(L) \times \GL(L)) T_{k, n, r}} \cap (U \otimes V \otimes W).\]
\end{corollary}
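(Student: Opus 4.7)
The plan is to derive the corollary essentially formally from the preceding theorem together with Lemma~\ref{lemma-inheritance-varieties}, since the only new feature of the corollary is that $W$ is now an arbitrary subspace of $L$ rather than being $L$ itself.

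First I would invoke Lemma~\ref{lemma-inheritance-varieties} applied to the pair of subspaces $V \subseteq V$ (trivially) and $W \subseteq L$. This gives
\[
  \cM_{U \otimes V \otimes W, r} = \cM_{U \otimes V \otimes L, r} \cap (U \otimes V \otimes W),
\]
so the problem is reduced to the case in which the third tensor factor is the full space $L$. Next I would apply the preceding theorem to the $n$-dimensional subspace $V \subseteq L$, which yields
\[
  \cM_{U \otimes V \otimes L, r} = \overline{(\GL(U) \times \GL(L) \times \GL(L)) \, T_{k,n,r}} \cap (U \otimes V \otimes L).
\]

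Combining these two identities gives
\[
  \cM_{U \otimes V \otimes W, r} = \overline{(\GL(U) \times \GL(L) \times \GL(L)) \, T_{k,n,r}} \cap (U \otimes V \otimes L) \cap (U \otimes V \otimes W).
\]
Finally, since $W \subseteq L$ we have $U \otimes V \otimes W \subseteq U \otimes V \otimes L$, so the double intersection collapses to $U \otimes V \otimes W$, yielding the statement of the corollary. There is no real obstacle here; the proof is a purely formal chain of inclusions. The content has already been done in the theorem (which produces the orbit-closure description when the third factor is $L$) and in Lemma~\ref{lemma-inheritance-varieties} (which guarantees that restricting a minrank variety to a subspace of the ambient tensor product amounts to intersecting with that subspace).
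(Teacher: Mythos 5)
Your proposal is correct and is exactly the argument the paper intends: the corollary is stated without a separate proof precisely because it is a formal consequence of the preceding theorem (handling the case where the third factor is all of $L$) combined with Lemma~\ref{lemma-inheritance-varieties} (which converts restriction to a subspace into intersection with that subspace). The final collapse $(U \otimes V \otimes L) \cap (U \otimes V \otimes W) = U \otimes V \otimes W$ is immediate from $W \subseteq L$, so nothing is missing.
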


\begin{theorem}\label{thm:stab}
  If $r < n$, then the stabilizer of $T_{k, n, r}$ in $\GL_k \times \GL_s \times \GL_s$ is isomorphic to $(\GL_r \times \GL_1) \times (\GL_n \times \GL_1)^{k - 1} \rtimes \mathfrak{S}_{k - 1}$. The element 
  \[
    (Z_1, z_1, \dots, Z_k, z_k) \in (\GL_r \times \GL_1) \times (\GL_n \times \GL_1)^{k - 1}
  \]
  is included into $\GL_k \times \GL_s \times \GL_s$ via 
  \[
    (\diag(z_1, \dots, z_k), \diag(Z_1, \dots, Z_k), \diag((z_1 Z_1)^{-\trans}, \dots, (z_k Z_k)^{-\trans}))
  \]
  and the $\mathfrak{S}_{k - 1}$ factor permutes the last $k - 1$ coordinates of $F^k$ and the last $k - 1$ summands of $W$ simultaneously.
\end{theorem}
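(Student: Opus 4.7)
The plan is to mirror the structure of the proof of Theorem~\ref{slicerank:thm:stab}, but exploiting the fact that the slices of $T_{k,n,r}$ have two distinct ranks rather than three distinct types. Let $(A,B,C)\in\stab T_{k,n,r}$ and write the slices as $T_1,\dots,T_k\in L\otimes L$, so that $T_i$ is supported on the diagonal block $L_i\otimes L_i$ and $\rk T_1=r$ while $\rk T_i=n$ for $i\geq 2$. The stabilizer condition gives, for each $i$,
\[
(B\otimes C)\Bigl(\sum_{j=1}^k a_{ij}T_j\Bigr)=T_i,
\]
and since $B\otimes C$ preserves ranks and the slices $T_j$ have pairwise disjoint supports, $\rk(\sum_j a_{ij}T_j)=\sum_{j:a_{ij}\neq 0}\rk T_j$.

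The first step is to constrain $A$. Using $r<n$, the only way to write $r$ as a sum from the multiset $\{r,n,n,\dots,n\}$ is $r$ itself, so the first row of $A$ has $a_{11}\neq 0$ and $a_{1j}=0$ for $j\geq 2$. Similarly, the only way to write $n$ is a single copy of some $n$, so for $i\geq 2$ we get $a_{i1}=0$ and exactly one $a_{ij}$ with $j\geq 2$ is nonzero. Hence $A$ is a product of a diagonal matrix $\diag(z_1,\dots,z_k)$ with a permutation that fixes $1$ and permutes $\{2,\dots,k\}$, giving the $\mathfrak{S}_{k-1}$ factor. After conjugating by this permutation (which lifts to $G$ by simultaneously permuting the last $k-1$ summands of the second and third tensor factors), we may assume $A=\diag(z_1,\dots,z_k)$, and the stabilizer equation becomes $(B\otimes C)T_i=z_i^{-1}T_i$ for every $i$.

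The second step is to deduce the block structure of $B$ and $C$ with respect to the decomposition $L=L_1\oplus\cdots\oplus L_k$. View $T_i$ as the matrix $M_i$ with the $(i,i)$ block equal to the identity and all other blocks zero; then $(B\otimes C)T_i$ corresponds to $BM_iC^{\trans}$, whose $(j,l)$ block is $B_{ji}C_{li}^{\trans}$. Matching with $z_i^{-1}M_i$ forces $B_{ii}C_{ii}^{\trans}$ to be a nonzero scalar multiple of the identity (hence $B_{ii}$ invertible), and then $B_{ji}=0$ for $j\neq i$ and $C_{li}=0$ for $l\neq i$. Doing this for all $i$ shows $B$ and $C$ are block diagonal with respect to $L=L_1\oplus\cdots\oplus L_k$.

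Finally, setting $Z_i:=B_{ii}$, the equation $B_{ii}C_{ii}^{\trans}=z_i^{-1}I$ gives $C_{ii}=(z_iZ_i)^{-\trans}$, which is exactly the form claimed in the statement. This produces the embedding of $(\GL_r\times\GL_1)\times(\GL_n\times\GL_1)^{k-1}$ into $\GL_k\times\GL_s\times\GL_s$ described in the theorem, and we verified above that the $\mathfrak{S}_{k-1}$ action by simultaneous permutation of the last $k-1$ coordinates is a genuine symmetry. Conversely, a direct check shows every element of the claimed group stabilizes $T_{k,n,r}$, using Lemma~\ref{lem:stabidentity} block by block. The semidirect product structure arises because conjugation by the permutation in $\mathfrak{S}_{k-1}$ acts on the $(Z_i,z_i)$ indices $i\geq 2$ by permutation while fixing the index $1$. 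The main obstacle, and the place where the hypothesis $r<n$ is essential, is the rank argument in the first step that singles out the first slice; without $r<n$ the full symmetric group $\mathfrak{S}_k$ could act and the first block would not be distinguished.
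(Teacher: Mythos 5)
Your proof is correct and follows the same two-stage structure as the paper's argument: first a rank count on $\sum_j a_{ij}T_j$ (using disjoint supports and $r<n$) to force $A$ to be a monomial matrix fixing the first coordinate, then a block-by-block analysis of $B$ and $C$. The variation is in the second stage. The paper first kills the diagonal of $A$ entirely by multiplying with a second stabilizer element $(\hat A^{-1},\id,\hat A')$, so that it can apply Lemma~\ref{lem:stabidentity} once to the sum of all slices (the identity on $L$) to conclude $\check C=\hat B^{-\trans}$ globally, and only afterwards argues block-diagonality of $\hat B$ from the shape of $(\hat B\otimes\hat B^{-\trans})T_j$. You instead keep $A=\diag(z_1,\dots,z_k)$ and read off everything at once from the block equations $(BM_iC^{\trans})_{jl}=B_{ji}C_{li}^{\trans}=z_i^{-1}\delta_{jl}\delta_{li}I$: the $(i,i)$ block gives invertibility of $B_{ii}$ and $C_{ii}$ and the relation $C_{ii}=(z_iZ_i)^{-\trans}$, and the off-$(i,i)$ blocks then force $B_{ji}=0$ and $C_{li}=0$. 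This is slightly more direct and avoids the auxiliary stabilizer element; the paper's version is a touch more modular since it isolates the identity-stabilizer lemma as the one place where $C$ is tied to $B$. Both are correct, and your observation about where $r<n$ enters and why it breaks the $\mathfrak{S}_k$-symmetry to $\mathfrak{S}_{k-1}$ matches the paper exactly.
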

\begin{proof}
  Let $(A, B, C) \in \stab T_{k, n, r}$, so that $(A \otimes B \otimes C) T_{k, n, r} = T_{k, n, r}$.

  Let $T_i = \sum_{j = 1}^{\dim W_i} e_{1j} \otimes e_{1j}$ be the slices of $T_{k, n, r}$, so $T = \sum_{i = 1}^k e_i \otimes T_i$ and
  \[(A \otimes B \otimes C) T = \sum_{i = 1}^k (Ae_i) \otimes (B \otimes C)T_i = \sum_{i = 1}^k e_i \otimes (B \otimes C)(\sum_{j = 1}^k {a_{ij}} T_j).\]
  Note that the rank of $\sum_{j = 1}^k {a_{ij}} T_j$ and, consequently, of the $i$-th slice of $(A \otimes B \otimes C) T_{k, n, r}$,
  is equal to $sr + qn$, where $s = 0$ if $a_{i1} = 0$ and $s = 1$ otherwise, and $q$ is the number of nonzero entries among $a_{i2}, \dots, a_{ik}$.
  Therefore, $A$ contains only one nonzero entry in each row, and in the first row the nonzero entry is in the first column,
  otherwise the ranks of slices of $T_{k, n, r}$ and $(A \otimes B \otimes C) T_{k, n, r}$ do not match.
  Thus, $A$ is a product of a diagonal matrix and a permutation matrix corresponding to some permutation $\sigma$ of the last $k - 1$ coordinates of $F^k$.
  
  Let $P_{\sigma}$ be an element of $\GL_k \times \GL_s \times GL_s$ which permutes last $k - 1$ coordinates of $F^k$ and last $k - 1$ summands of $W$ according to the permutation $\sigma$. It is easy to see that $P_{\sigma} \in \stab T_{k, n, r}$.
  Thus, $(A, B, C) P_{\sigma}^{-1} = (\hat{A}, \hat{B}, \hat{C})$ is also in $\stab T_{k, n, r}$.
  From the previous discussion, $\hat{A}$ is a diagonal matrix $\diag(z_1, \dots, z_k)$.
  Let $\hat{A'} \in \GL_s$ be the linear map which scales elements of $W_i$ by $z_i$ for each $i$.
  $(\hat{A}^{-1}, \id, \hat{A'})$ also preserves $T_{k, n, r}$.
  Therefore, $(\hat{A}, \hat{B}, \hat{C}) \cdot (\hat{A}^{-1}, \id, \hat{A'}) = (\id, \hat{B}, \check{C})$ is in $\stab T_{k, n, r}$ 

  Now, since the first component of $(\id, \hat{B}, \check{C})$ is the identity, it preserves $T_{k, n, r}$ if and only if $\hat{B} \otimes \check{C}$ preserves each slice $T_i$.
  If it preserves each slice, it also preserves its sum $\sum_{i = 1}^k T_i = \sum_{i = 1}^k \sum_{j = 1}^{\dim W_i} e_{ij} \otimes e_{ij}$, the full rank diagonal matrix.
  Therefore, by the Lemma~\ref{lem:stabidentity}, $\check{C} = \hat{B}^{-\trans}$.
  Consider $\hat{B}$ as a block matrix $(B_{ij})$ according to the decomposition of $W$ into $W_i$.
  If $\hat{B}$ has a nonzero off-diagonal block $B_{ij}$, then $(\hat{B} \otimes \hat{B}^{-\trans}) T_j$ has nonzero elements in the $i$-th block of rows, and thus is not equal to $T_j$.
  Therefore, $\hat{B}$ is a block diagonal matrix $\diag(Z_1, \dots, Z_k)$.
  Using the previous lemma, we see that any such $\hat{B}$ gives rise to $(\id, \hat{B}, \hat{B}^{-\trans}) \in \stab T_{k, n, r}$.

  We decomposed an arbitrary element $A \otimes B \otimes C \in \stab T_{k, n, r}$ into a product of three special elements $(\id, \diag(Z_1, \dots, Z_k), \diag(Z_1, \dots, Z_k)^{-\trans})$, $(\diag(z_1, \dots, z_k), \id, \diag(z_1 \id, \dots, z_k \id)^{-1})$ and $P_{\sigma}$ for some permutation $\sigma \in \mathfrak{S}_{k - 1}$.
  These three types of elements correspond to three subgroups of $\stab T_{k, n, r}$.
  The subgroups intersect only by identity; elements of the first two types commute, and the conjugation with $P_{\sigma}$ permutes $z_i$ and $Z_i$ according to $\sigma$, so the product of the first subgroups is direct, and the last product is semidirect.
\end{proof}

\begin{theorem} \label{thm:sym}
  Suppose $T$ is a tensor in $F^k \otimes W \otimes W$. If $\stab T = \stab T_{k, n, r}$, then $T$ lies in the orbit $(\GL_k \times \GL_s \times \GL_s) T_{k, n, r}$. If $\stab T \supset \stab T_{k, n, r}$, then $T \in \overline{(\GL_k \times \GL_s \times \GL_s) T_{k, n, r}}$
\end{theorem}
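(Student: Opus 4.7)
The plan mirrors the argument for Theorem~\ref{thm:srk:sym}: I will apply the three kinds of stabilizer elements described in Theorem~\ref{thm:stab} one after another to pin down the shape of $T$, and then argue about nonvanishing of the two resulting scalar parameters.

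Decompose $W = W_1 \oplus W_2 \oplus \cdots \oplus W_k$ with $\dim W_1 = r$ and $\dim W_i = n$ for $i \ge 2$, and write $T = \sum_{i,p,q} e_i \otimes T^{(i,p,q)}$ with $T^{(i,p,q)} \in W_p \otimes W_q$. The first step uses the subtorus of $\stab T_{k,n,r}$ consisting of scalar-block stabilizer elements
\[
\bigl(\diag(z_1,\ldots,z_k),\, \diag(t_1 \id_{W_1},\ldots,t_k \id_{W_k}),\, \diag((z_1 t_1)^{-1} \id_{W_1},\ldots,(z_k t_k)^{-1} \id_{W_k})\bigr),
\]
under which each component $T^{(i,p,q)}$ is scaled by the character $z_i z_q^{-1}\, t_p t_q^{-1}$. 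Invariance for all $z_i, t_i \in F^*$ forces $i = p = q$, so $T = \sum_{i=1}^k e_i \otimes T^{(i)}$ with each $T^{(i)} \in W_i \otimes W_i$. The second step uses the elements $(\id, \diag(Z_1,\ldots,Z_k), \diag(Z_1^{-\trans},\ldots,Z_k^{-\trans}))$ of $\stab T_{k,n,r}$. A short index calculation shows these act on each $T^{(i)}$, viewed as an endomorphism of $W_i$, by conjugation $T^{(i)} \mapsto Z_i T^{(i)} Z_i^{-1}$; by Schur's lemma, invariance for all $Z_i \in \GL(W_i)$ gives $T^{(i)} = \alpha_i \sum_j e_{ij} \otimes e_{ij}$. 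Applying the $\aS_{k-1}$-permutation of the last $k-1$ blocks then forces $\alpha_2 = \cdots = \alpha_k =: \alpha$, so
\[
T = \alpha_1\, e_1 \otimes \sum_{j=1}^r e_{1j} \otimes e_{1j} + \alpha \sum_{i=2}^k e_i \otimes \sum_{j=1}^n e_{ij} \otimes e_{ij}.
\]

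For the conclusion, observe that whenever both $\alpha_1$ and $\alpha$ are nonzero the element $(\diag(\alpha_1,\alpha,\ldots,\alpha), \id, \id) \in \GL_k \times \GL_s \times \GL_s$ sends $T_{k,n,r}$ to $T$, placing $T$ in the orbit; if $\alpha_1$ or $\alpha$ vanishes, $T$ is the limit as $\epsilon \to 0$ of such orbit points with the vanishing scalar replaced by $\epsilon$, placing $T$ in the orbit closure. The remaining task, which I expect to be the only delicate point, is to show that the hypothesis $\stab T = \stab T_{k,n,r}$ rules out $\alpha_1 = 0$ and $\alpha = 0$: if either parameter is zero the corresponding summand of $T$ disappears, which allows transformations acting arbitrarily on that summand's legs (for example, arbitrary action on $e_1$, $W_1$, $W_1$ when $\alpha_1 = 0$, and analogously on the last $k-1$ factors when $\alpha = 0$) to fix $T$, producing a stabilizer strictly larger than $\stab T_{k,n,r}$ and contradicting the assumption. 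This closes the first case; the second case $\stab T \supset \stab T_{k,n,r}$ then follows immediately from the closure argument above, since the derivation of the form of $T$ used only the inclusion $\stab T \supseteq \stab T_{k,n,r}$.
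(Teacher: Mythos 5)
Your proof is correct and follows essentially the same strategy as the paper's: reduce to the block-diagonal form via the torus in the stabilizer (the paper uses one-parameter subgroups $(A_i(\lambda^{-2}), B_i(\lambda), B_i(\lambda))$, you use the full two-parameter-per-block torus, but the conclusion is identical), then apply the $\GL_n$-block elements and Schur's lemma to get $T^{(i)} = \alpha_i\,\mathrm{Id}$, and finally the $\aS_{k-1}$ factor to force $\alpha_2 = \cdots = \alpha_k$. Your handling of the two conclusions — nonvanishing of $\alpha_1$ and $\alpha$ via the observation that a vanishing parameter introduces extra symmetries (hence strict containment of stabilizers), and the $\epsilon\to 0$ limit argument for the closure statement — matches the paper's treatment at the same level of detail.
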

\begin{proof}
  Suppose $T$ is stabilised by $\stab T_{k, n, r}$.
  Let $T_1, \dots, T_k$ be the slices of $T$. Decompose them into blocks $T_i = (T_{ijk})$ according to the decomposition of $W$ into $W_i$.
  
  Let $A_i(\lambda) \colon F^k \to F^k$ be the map which scales the $i$-th coordinate by $\lambda$, leaving other in place, and $B_i(\lambda) \colon W \to W$ be the map which scales $W_i$ by $\lambda$ and acts like identity on the other summands.
  Applying to $T$ the transformation $(A_i(\lambda^{-2}), B_i(\lambda), B_i(\lambda)) \in \stab T_{k, n, r}$,
  we see that all blocks of $T_i$ except $T_{iii}$ are zero, as they are multiplied by a coefficient $\lambda^{-2}$ or $\lambda^{-1}$ in this transformation.

  Applying $(\id, \diag(Z_1, \dots, Z_k), \diag(Z_1, \dots, Z_k)^{-\trans})$ with arbitrary $Z_i$ to $T$, we obtain that each $T_{iii}$ has the form $a_i \sum_{j = 1}^{\dim W_i} e_{ij} \otimes e_{ij}$.
  Applying permutations of the last $m-1$ blocks $T_{iii}$, we see that $a_2 = \cdots = a_k$.

  Therefore
  \[
    T = a_1 \sum_{j = 1}^{n} e_1 \otimes e_{1j} \otimes e_{1j} + a_2 \sum_{i = 2}^{k} \sum_{j = 1}^{n} e_i \otimes e_{ij} \otimes e_{ij}.
  \]
  If both $a_1$ and $a_2$ are nonzero, then $T = (\diag(a_1, a_2, \dots, a_2) \otimes \id \otimes \id) T_{k, n, r}$ lies in the orbit of $T_{k, n, r}$.
  In this case $\stab T = \stab T_{k, n, r}$.
  The closure of the set of tensors of this form with $a_1 \neq 0$ and $a_2 \neq 0$ includes the cases when $a_1$ or $a_2$ are zero.
  In these border cases, $T$ has more symmetries than $T_{m, n, r}$, for example, multiplication of the zero blocks by an arbitrary matrix.
\end{proof}

\section{Ideals of minrank varieties} \label{sec:ideal}

In this section we will consider tensors over $\bbC$ (or over algebraically closed field of characteristic 0).
For an affine variety $\cA$, we denote its ideal by $I(\cA)$ and its coordinate ring by $\bbC[\cA]$.
It is convenient to work with a tensor space $U^* \otimes V^* \otimes W^*$. The algebra of polynomials on this space is $\bbC[U \otimes V \otimes W]$.
Since minrank varieties $\cM_{U^* \otimes V^* \otimes W^*, r}$ are scale-invariant, their ideals and coordinate rings inherit grading from $\bbC[U \otimes V \otimes W]$.
Since they are invariant under the action of $\GL(U) \times \GL(V) \times \GL(W)$, this group also acts on $I(\cM_r)$ and $\bbC[\cM_r]$.
We study representation-theoretic properties of specific equations in $I(\cM_r)$ and multiplicities of irreducible representations in $I(\cM_r)$ and $\bbC[\cM_r]$.
Irreducible representations in $\bbC[U \otimes V \otimes W]_d$ are indexed by triples $(\lambda,\mu,\nu)$ of partitions of $D$,
and the multiplicity of the type $(\lambda,\mu,\nu)$ in $\bbC[U \otimes V \otimes W]_D$ is called the \emph{Kronecker coefficient}.
Finding a combinatorial interpretation for the Kronecker coefficient is Problem~10 in \cite{sta:00}.

Here we describe some polynomials in the ideals of minrank varieties and study their representation-theoretic properties.
The methods used in Section~\ref{sec:eq-rect} are similar to the ones used in Section~\ref{subsec:slicerankeqdesign}; and the methods used in Section~\ref{sec:eq-mult} are similar to the ones used in  Section~\ref{subsec:slicerankeqmult}.

Throughout the section we assume $\dim U = k$, $\dim V = m$, $\dim W = n$.

\subsection{Basic equations}
\label{sec:eq-basic}
For a vector space $V$ let $S^p V$ denote its $p$th symmetric power, which corresponds to the vector space of homogeneous degree $p$ polynomials in $\dim V$ variables.
Moreover, let $\wedge^p V$ denote the $p$th exterior power of $V$.

Let $T \in U^* \otimes V^* \otimes W^*$ be a tensor.
The condition $\rk (Tx) \leq r$ is equivalent to the vanishing of all $(r + 1) \times (r + 1)$ minors of $Tx$, or, equivalently, of the matrix $(Tx)^{\wedge (r + 1)} \in \Lambda^{r + 1}V^* \otimes \Lambda^{r + 1}W^*$, entries of which are multiples of the minors in question.
All the minors are homogeneous polynomials of degree $r + 1$ with respect to $x$, so the polynomial map sending $x$ to $(Tx)^{\wedge (r + 1)}$ extends to a linear map $M_{T, r} \colon S^{r + 1} U \to \Lambda^{r + 1}V^* \otimes \Lambda^{r + 1}W^*$ such that $M_{T, r}(x^{\otimes (r + 1)}) = (Tx)^{\wedge (r + 1)}$.

\begin{theorem}
  \label{eq:simple-kernel}
  Let $s = \dim S^{r + 1}U = \binom{k + r}{r + 1}$. Size $s$ minors of $M_{T, r}$ lie in the ideal $I(\cM_r)$.
\end{theorem}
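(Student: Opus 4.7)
The plan is to unpack what it means for an $s \times s$ minor of $M_{T,r}$ to vanish, and then to produce, for every $T \in \cM_r$, an explicit nonzero element of the kernel of $M_{T,r}$. Since the codomain $\Lambda^{r+1}V^* \otimes \Lambda^{r+1}W^*$ has dimension $\binom{m}{r+1}\binom{n}{r+1}$, which we may assume is at least $s$ (otherwise all $s \times s$ minors are trivially zero), the linear map $M_{T,r}$ may be viewed as an $s \times N$ matrix, and all of its $s \times s$ minors vanish if and only if its rank is strictly less than $s$, i.e., if and only if $\ker M_{T,r} \neq 0$.

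So the core reduction is: I want to show that if $T \in \cM_r$, then $M_{T,r}$ has a nontrivial kernel. By definition of $\cM_r$, there exists a nonzero $x \in U^*$ with $\rk(Tx) \le r$. The key observation is then two-fold. First, the element $x^{\otimes(r+1)} \in S^{r+1}U$ (more precisely, the image of $x^{\otimes(r+1)}$ under the projection from $U^{\otimes(r+1)}$ to $S^{r+1}U$, regarding $U^* = U$ after a choice of basis, or dually) is nonzero, because the $(r+1)$-fold Veronese embedding is injective on nonzero vectors. Second, $M_{T,r}(x^{\otimes(r+1)}) = (Tx)^{\wedge(r+1)} = 0$, since every entry of $(Tx)^{\wedge(r+1)}$ is (up to sign) an $(r+1) \times (r+1)$ minor of $Tx$, and these all vanish when $\rk(Tx) \le r$. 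Therefore $x^{\otimes(r+1)}$ is a nonzero element of $\ker M_{T,r}$, so $\rk M_{T,r} < s$ and every $s \times s$ minor of $M_{T,r}$ vanishes on $T$.

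It remains to check that these minors are genuinely polynomials in the entries of $T$, and to identify them as elements of $I(\cM_r)$. This is immediate from the construction: the matrix coefficients of $M_{T,r}$, read off in the standard monomial bases of $S^{r+1}U$ and $\Lambda^{r+1}V^* \otimes \Lambda^{r+1}W^*$, are homogeneous polynomials of degree $r+1$ in the entries of $T$ (because $(Tx)^{\wedge(r+1)}$ is bihomogeneous of degree $r+1$ in $T$ and $r+1$ in $x$, and we are extracting coefficients in $x$). Hence each $s \times s$ minor is a homogeneous polynomial of degree $s(r+1)$ in the entries of $T$, and by the argument above it vanishes identically on $\cM_r$, which is exactly the statement that it lies in $I(\cM_r)$.

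There is essentially no technical obstacle once one sees the dimension count $\dim S^{r+1}U = s$ and the Veronese nonvanishing of $x^{\otimes(r+1)}$; the only mild care needed is the remark that if $N < s$ then the statement is vacuous (all $s \times s$ minors are zero a priori), so one only needs to handle the case $N \ge s$. I would also briefly note that since the construction is $\GL(U) \times \GL(V) \times \GL(W)$-equivariant, the resulting space of equations is a subrepresentation of $I(\cM_r)_{s(r+1)}$, which is consistent with the representation-theoretic perspective taken in the rest of the section.
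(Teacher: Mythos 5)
Your proof is correct and follows exactly the same route as the paper: observe that $\rk(Tx)\le r$ forces $(Tx)^{\wedge(r+1)}=0$, so the nonzero symmetric tensor $x^{\otimes(r+1)}$ lies in $\ker M_{T,r}$, hence $\rk M_{T,r}<\dim S^{r+1}U=s$ and all size-$s$ minors vanish. The extra remarks you add (the vacuous case when the target has dimension $<s$, the degree bookkeeping, and the equivariance of the construction) are fine but not new to the argument.
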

\begin{proof}
  If $T \in \cM_r$, then there exists a nonzero rank 1 symmetric tensor $x^{\otimes (r + 1)}$ on which $M_{T, r}$ vanishes.
  In particular, it means that the rank of $M_{T, r}$ is less than the dimension of the source space $S^{r + 1}U$, so the $s \times s$ minors of $M_{T, r}$ vanish.
\end{proof}

The map $M_{T, r}$ is represented by a tensor in $S^{r + 1} U^* \otimes (\Lambda^{r + 1}V^* \otimes \Lambda^{r + 1}W^*)$.
Therefore, the $\GL(U) \times \GL(V) \times \GL(V)$-representation generated by size $s$ minors of $M_{T, r}$ is the image of $\Lambda^s S^{r + 1} U \otimes \Lambda^s (\Lambda^{r + 1} V \otimes \Lambda^{r + 1} W) \subset S^s S^r (U \otimes V \otimes W)$ under the symmetrization map from $S^s S^{r + 1} (U \otimes V \otimes W)$ to $S^{s(r + 1)} (U \otimes V \otimes W)$.

In the special case $n = m + k - 1$, $r = m - 1$ the dimension $\dim (\Lambda^{r + 1}V^* \otimes \Lambda^{r + 1}W^*)$ coincides with $s$ and the resulting polynomial in $I(\cM_r)$ is the determinant of the square matrix $M_{T,r}$.
This polynomial is the \emph{hyperdeterminant of boundary format}, a very special case of hyperdeterminant polynomials considered in~\cite{Gelfand1994}.
It is a $\SL(U) \times \SL(V) \times \SL(W)$-invariant of degree $sm$.

\subsection{Koszul flattenings}
\label{sec:eq-koszul}

Another family of equations can be constructed using so called Koszul flattenings, a special case of Young flattenings introduced in~\cite{Landsberg2013} in relation to secant varieties.

For any integer $p$ the antisymmetrization map $U^* \otimes \Lambda^p U^* \to \Lambda^{p + 1} U^*$ gives rise to a linear map $I_p \colon U^* \to \Lambda^p U \otimes \Lambda^{p + 1} U^*$ sending each $u \in U$ to a tensor representing the map $x \mapsto u \wedge x$.
Applying this map to the first multiplicand of the tensor $T \in U^* \otimes V^* \otimes W^*$, we get a tensor $I_p T \in \Lambda^p U \otimes \Lambda^{p + 1} U^* \otimes V^* \otimes W^*$. Rearranging tensor multiplicands, we get the Koszul flattening $F_{T, p} \colon \Lambda^p U^* \otimes V \to \Lambda^{p + 1} U^* \otimes W^*$. If $T = \sum_i x_i \otimes A_i$, then $F_{T, p}$ sends $y \otimes v$ to $\sum_i (x_i \wedge y) \otimes A_i v$.

\begin{theorem} \label{thm:koszul}
  Let $0 < p < k$. If $T \in \cM_{U^* \otimes V^* \otimes W^*, r}$, then
  $$\rk F_{T, p} \leq \binom{k - 1}{p}(r + \min(m, \frac{k - p - 1}{p + 1} n) + \min(n, \frac{p}{k - p}m)).$$
\end{theorem}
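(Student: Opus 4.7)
The plan is to exploit the hypothesis that some slice of $T$ has low rank to choose a basis of $U^*$ that block-decomposes the Koszul flattening, and then to combine a rank bound coming from the small slice with the trivial (``dimension'') bounds on the Koszul flattenings of the remaining $(k-1)$-dimensional subproblem.

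\textbf{Step 1: Normalize the low-rank slice.} Since $T \in \cM_r$, there is a nonzero $x \in U$ with $\rk(Tx) \leq r$. Because $\rk F_{T,p}$ is a $\GL(U) \times \GL(V) \times \GL(W)$-equivariant invariant (it transforms covariantly under the action), we may act by $\GL(U)$ and assume that $x$ is a coordinate vector. Pick a basis $x_1, \dots, x_k$ of $U^*$ dual to a basis of $U$ whose first element is $x$, so that writing $T = \sum_i x_i \otimes A_i$ gives $A_1 = Tx$ and hence $\rk A_1 \leq r$. Let $U'^* = \langle x_2, \dots, x_k\rangle$ and $T' = \sum_{i \geq 2} x_i \otimes A_i \in U'^* \otimes V^* \otimes W^*$.

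\textbf{Step 2: Block-decompose $F_{T,p}$.} Using the direct sum decompositions
\[
\Lambda^p U^* = (x_1 \wedge \Lambda^{p-1} U'^*) \oplus \Lambda^p U'^*, \qquad \Lambda^{p+1} U^* = (x_1 \wedge \Lambda^p U'^*) \oplus \Lambda^{p+1} U'^*,
\]
a direct computation from $F_{T,p}(y \otimes v) = \sum_i (x_i \wedge y) \otimes A_i v$ yields
\[
F_{T,p} = \begin{pmatrix} -F_{T', p-1} & \id_{\Lambda^p U'^*} \otimes A_1 \\ 0 & F_{T', p} \end{pmatrix},
\]
where in the top-left block we have identified $x_1 \wedge \Lambda^p U'^* \cong \Lambda^p U'^*$. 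The vanishing of the bottom-left block uses $x_1 \wedge x_1 = 0$.

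\textbf{Step 3: Rank estimate from the block form.} The standard inequality for upper block triangular matrices gives
\[
\rk F_{T,p} \leq \rk\begin{pmatrix} -F_{T',p-1} & \id \otimes A_1 \end{pmatrix} + \rk F_{T',p} \leq \rk F_{T',p-1} + \binom{k-1}{p} \rk A_1 + \rk F_{T',p},
\]
since $\rk(\id_{\Lambda^p U'^*} \otimes A_1) = (\dim \Lambda^p U'^*) \cdot \rk A_1 = \binom{k-1}{p} \rk A_1$. By Step~1, $\rk A_1 \leq r$.

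\textbf{Step 4: Generic flattening bounds on $T'$.} For $T' \in U'^* \otimes V^* \otimes W^*$ with $\dim U' = k-1$, the trivial dimension bound on each Koszul flattening gives
\[
\rk F_{T',p} \leq \min\bigl(\tbinom{k-1}{p} m,\; \tbinom{k-1}{p+1} n\bigr) = \tbinom{k-1}{p}\min\bigl(m,\tfrac{k-p-1}{p+1} n\bigr),
\]
\[
\rk F_{T',p-1} \leq \min\bigl(\tbinom{k-1}{p-1} m,\; \tbinom{k-1}{p} n\bigr) = \tbinom{k-1}{p}\min\bigl(\tfrac{p}{k-p} m,\, n\bigr),
\]
using the identity $\binom{k-1}{p-1} = \binom{k-1}{p}\cdot \frac{p}{k-p}$. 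Plugging these into Step~3 yields exactly the claimed bound
\[
\rk F_{T,p} \leq \tbinom{k-1}{p}\Bigl(r + \min\bigl(m,\tfrac{k-p-1}{p+1} n\bigr) + \min\bigl(n,\tfrac{p}{k-p} m\bigr)\Bigr).
\]

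The genuinely content-bearing step is Step~2, i.e.\ verifying that the Koszul flattening splits in this lower-triangular form with respect to the distinguished direction $x_1$; once that is in place, the rest is a combination of the two elementary rank inequalities in Steps~3 and~4. No obstacle is anticipated, only careful bookkeeping of the sign from the wedge $x_i \wedge x_1 = -x_1 \wedge x_i$ and of the identification of $x_1 \wedge \Lambda^p U'^*$ with $\Lambda^p U'^*$.
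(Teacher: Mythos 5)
Your proof is correct and takes essentially the same route as the paper: split off the low-rank slice $A_1 = Tx$, decompose $\Lambda^p U^*$ and $\Lambda^{p+1} U^*$ along the hyperplane complementary to $x$, observe that $F_{T,p}$ becomes block triangular with diagonal blocks $\pm F_{T',p-1}$ and $F_{T',p}$ and off-diagonal block $\id_{\Lambda^p U'^*}\otimes A_1$, and then bound the rank by the sum of block ranks (using the dimension bound on the two Koszul flattenings of the $(k-1)$-dimensional subtensor and $\rk A_1 \leq r$ on the off-diagonal block). The only cosmetic difference is that you order the summands so the matrix is block upper-triangular, whereas the paper orders them the other way and gets the lower-triangular version; the blocks, their ranks, and the final bookkeeping are identical.
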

\begin{proof}
  If $T \in \cM_r$, then it can be written as $x \otimes A + T'$ where $x \in U^*$, $\rk(A) \leq r$ and $T' \in H \otimes V^* \otimes W^*$ for some hyperplane $H \subset U^*$ which does not contain $x$.
  The space $\Lambda^p U^*$ can be decomposed as $\Lambda^p U^* = \Lambda^p H \oplus (u \wedge \Lambda^{p - 1} H)$.
  Similarly, $\Lambda^{p + 1} U^* = \Lambda^{p + 1} H \oplus (u \wedge \Lambda^{p} H)$.
  Using these decomposition, the map $F_{T, p}$ is given by the $2 \times 2$ block matrix
  \[
    \raisebox{-2.5em}{$
    \begin{bmatrix}
      & \smash{\raisebox{2em}{$\Lambda^p H \otimes V$}} & \smash{\raisebox{2em}{$(u \wedge \Lambda^{p - 1} H) \otimes V$}} \\[-1.2em]
      \makebox[0pt][r]{$\Lambda^{p + 1} H \otimes W^*$\hspace{2em}} & F_{T', p} & 0 \\
      \makebox[0pt][r]{$(u \wedge \Lambda^{p} H) \otimes W^*$\hspace{2em}} & (I_p u) \otimes A & F_{T', p-1} \\
    \end{bmatrix}
    $}
  \]
  The rank of this matrix is at most the sum of the ranks of the three blocks.
  The block $I_p u \otimes A$ has rank $\dim \Lambda^p H \cdot \rk A = \binom{k - 1}{p}r$.
  The ranks of other two blocks are bounded by their sizes: $\binom{k - 1}{p} m \times \binom{k - 1}{p + 1} n$ for the top left block and $\binom{k - 1}{p - 1} m \times \binom{k - 1}{p} n$ for the bottom right.
  Factoring out the $\binom{k - 1}{p}$, we get the expression from the theorem statement.
\end{proof}
\begin{corollary}
  If $n = \frac{p + 1}{k - p}m$, and $r < \frac{m}{k - p}$, then $\rk F_{T, p} < \dim (\Lambda^p U^* \otimes V)$.
\end{corollary}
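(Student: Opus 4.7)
The plan is to prove this as a direct calculation from Theorem~\ref{thm:koszul}, substituting the given relation between $n$, $m$, $p$, $k$ into the rank bound and comparing against the dimension of the source space $\Lambda^p U^* \otimes V$, whose dimension is $\binom{k}{p}m$.

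First I would simplify the two $\min$ terms appearing in the bound of Theorem~\ref{thm:koszul} under the hypothesis $n = \frac{p+1}{k-p}m$. For the first term, $\frac{k-p-1}{p+1}n = \frac{k-p-1}{k-p}m$, which is at most $m$, so $\min(m,\frac{k-p-1}{p+1}n) = \frac{k-p-1}{k-p}m$. For the second term, $\frac{p}{k-p}m$ is at most $\frac{p+1}{k-p}m = n$, so $\min(n,\frac{p}{k-p}m) = \frac{p}{k-p}m$. Adding these two simplified quantities gives $\frac{k-1}{k-p}m$, leading to the bound
\[
\rk F_{T,p} \;\le\; \binom{k-1}{p}\!\left(r + \tfrac{k-1}{k-p}m\right).
\]

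Next I would use the identity $\binom{k}{p} = \frac{k}{k-p}\binom{k-1}{p}$ to rewrite the target dimension as $\dim(\Lambda^p U^* \otimes V) = \binom{k-1}{p}\cdot \frac{k}{k-p}m$. Dividing through by the common factor $\binom{k-1}{p}$, the desired strict inequality $\rk F_{T,p} < \dim(\Lambda^p U^* \otimes V)$ reduces to
\[
r + \tfrac{k-1}{k-p}m \;<\; \tfrac{k}{k-p}m,
\]
which, after subtracting $\frac{k-1}{k-p}m$ from both sides, becomes exactly $r < \frac{m}{k-p}$, i.e.\ the hypothesis of the corollary.

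Since every step is a straightforward algebraic manipulation, there is no real obstacle; the only thing to check carefully is that the two $\min$ expressions simplify as claimed, which amounts to verifying two elementary inequalities. The resulting statement is then immediate from Theorem~\ref{thm:koszul}, and it exhibits a concrete range of parameters $(k,m,n,r,p)$ in which the size-$(\rk F_{T,p}+1)$ minors of the Koszul flattening $F_{T,p}$ yield nontrivial equations cutting out the minrank variety $\cM_r$.
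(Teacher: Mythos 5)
Your proof is correct and follows essentially the same calculation as the paper's: substitute $n = \frac{p+1}{k-p}m$ into the bound from Theorem~\ref{thm:koszul}, simplify, and compare with $\dim(\Lambda^p U^* \otimes V) = \binom{k}{p}m = \frac{k}{k-p}\binom{k-1}{p}m$. The one small stylistic difference is that you explicitly verify that each $\min$ equals its second argument, whereas the paper simply uses $\min(a,b) \le b$ to drop the $\min$'s; both are fine, and your extra check costs nothing.
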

\begin{proof}
  In this case we have $\dim \Lambda^p U^* \otimes V = \binom{k}{p} m = \binom{k - 1}{p} \frac{k}{k - p} m$ and
  \[\rk F_{T, p} \leq \binom{k - 1}{p}(r + \frac{k - p - 1}{p + 1} n + \frac{p}{k - p} m) = \binom{k - 1}{p} (r + \frac{k - 1}{k - p} m) < \binom{k - 1}{p} \frac{k}{k - p} m.\qedhere\]
\end{proof}

In particular, this construction works in the case $k = 2p + 1$, $n = m$.
Landsberg~\cite{LANDSBERG20153677} showed that for a generic $(2p + 1) \times m \times m$ tensor $T$ the flattening $F_{T, p}$ has maximal possible rank $\binom{k}{p} m$, so the corollary implies that the minors of $F_{T, p}$ of corresponding size give nontrivial equations for $\cM_r$.

\subsection{Equations from rectangular designs}
\label{sec:eq-rect}

For $\alpha, \beta \in \mathbb N$, fix $\max(\alpha,\beta)$ many vectors $\mathcal U := \{u_1,\ldots,u_{\max(\alpha,\beta)}\}$ in $\bbC^{\alpha}$.
An $\alpha \times \beta$ Latin Rectangle for $\mathcal U$ is an $\alpha \times \beta$ matrix $(A_{i,j})_{i,j}$, where in each row and in each column we have each entry from $\mathcal V$ at most once.
Note that if $\alpha\geq\beta$, then each column contains each vector exactly once.
For each column $A_{.,j}$ we define its determinant $\det(A_{.,j})$ as the determinant of the $\alpha \times \alpha$ matrix whose columns are given by
the list of vectors $(A_{1,j},\ldots,A_{\alpha,j})$.
The column-determinant of a Latin rectangle $A$ is defined as
\[
\coldet(A) := \prod_{j=1}^\beta \det(A_{.,j}).
\]

For every $\alpha \in \mathbb N$ and every even $\beta \in \mathbb N$ such that $\alpha \leq \beta$, the Latin Rectangle Conjecture $\LatRect(\alpha,\beta)$ can be stated as follows.
\begin{conjecture}
Choose a set $\mathcal U$ of $\beta$ many vectors in $\bbC^{\alpha}$ generically.
Then $\sum_L \coldet(L) \neq 0$, where the sum is over all Latin Rectangles for $\mathcal U$.
\end{conjecture}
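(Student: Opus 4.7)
The plan is to view $P(\mathcal{U}) := \sum_L \coldet(L)$ as a polynomial in the $\alpha\beta$ scalar coordinates of $u_1, \ldots, u_\beta$ and prove that $P$ is not the zero polynomial; generic non-vanishing will then follow. As a preliminary I would collect the symmetries of $P$. Each column determinant is $\SL_\alpha$-invariant, so $P$ is $\SL_\alpha$-invariant and under $g \in \GL_\alpha$ transforms as $P \mapsto (\det g)^{\beta} P$. Relabeling the $u_i$ by $\tau \in \aS_\beta$ just permutes the Latin rectangles, so $P$ is $\aS_\beta$-symmetric. Finally, permuting the $\alpha$ rows of a fixed rectangle by $\tau \in \aS_\alpha$ multiplies each of the $\beta$ columns' determinants by $\sgn(\tau)$, producing a total factor $\sgn(\tau)^\beta = 1$ because $\beta$ is even. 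Hence $P$ lies in the finite-dimensional semi-invariant space $\bigl(S^{\beta}(S^{\alpha}(\bbC^{\alpha})^*)\bigr)^{\GL_\alpha,(\det)^\beta}$, whose dimension is computable by standard character theory.

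The next step is to handle the square case $\alpha = \beta$ separately. Here every column of a Latin square is a full permutation of $\mathcal{U}$, so letting $\sigma_j(i) := \pi_i(j)$ denote the $j$-th column as a permutation one obtains
\[
 P(\mathcal{U}) \;=\; \det(u_1, \ldots, u_\alpha)^{\alpha} \cdot \sum_L \prod_{j=1}^{\alpha} \sgn(\sigma_j).
\]
For generic $\mathcal{U}$ the first factor is nonzero, so $\LatRect(\alpha, \alpha)$ is equivalent to the second factor, the Alon--Tarsi signed count of Latin squares of order $\alpha$, being nonzero. This is known for $\alpha = p \pm 1$ with $p$ an odd prime (Drisko, Glynn) but open in general.

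For the strictly rectangular regime $\alpha < \beta$ I would pursue two complementary attacks. The first is an explicit evaluation on Vandermonde vectors $u_i := (1, t_i, t_i^2, \ldots, t_i^{\alpha - 1})$, so that each column determinant becomes $\prod_{p < q}(t_{L_{q,j}} - t_{L_{p,j}})$ and $P$ becomes a symmetric polynomial in $t_1, \ldots, t_\beta$. One then tries to isolate a single monomial in the $t_i$ whose coefficient can be recognized, via a sign-reversing involution on Latin rectangles that pairs off most summands, as a provably nonzero combinatorial count. The second route is a degeneration: specialize the extra vectors $u_{\alpha+1}, \ldots, u_\beta$ to generic linear combinations of $u_1, \ldots, u_\alpha$ and track how $P$ factors, using the extra columns to split off known nonzero $\SL_\alpha$-invariants until one either reaches a small base case checkable by hand or connects directly to the highest-weight-vector construction in Section~\ref{sec:eq-rect}.

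The main obstacle is immediate from the second step: the square case coincides exactly with the Alon--Tarsi Conjecture, open since 1992. Any proof of $\LatRect(\alpha, \beta)$ in full generality must therefore either settle Alon--Tarsi or confine itself to the strictly rectangular range, and even there the naive specializations I sketched tend to collapse onto Alon--Tarsi-type alternating sums once enough columns become redundant. The hard part is thus to design a degeneration, an evaluation, or a representation-theoretic comparison that genuinely exploits $\beta > \alpha$ to expose a monomial or summand whose sign is provably non-cancelling, sidestepping the very sign-cancellation phenomenon that has kept Alon--Tarsi itself elusive.
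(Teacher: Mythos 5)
The statement you were asked to prove is labeled a \emph{conjecture} in the paper, and the paper supplies no proof of it: it is precisely the Latin Rectangle Conjecture $\LatRect(\alpha,\beta)$, stated as an open problem alongside a short survey of known special cases. Your proposal correctly recognizes this, and your central diagnosis is right: reducing to non-vanishing of the polynomial $P$, the symmetry analysis, and above all the identification of the square case $\alpha = \beta$ with the Alon--Tarsi conjecture (your factorization $P = \det(u_1,\dots,u_\alpha)^{\alpha}\sum_L\prod_j\sgn(\sigma_j)$ is exactly the right reduction, and you correctly cite the Drisko and Glynn cases $\beta = p\pm 1$). Since the paper itself only asserts the conjecture and lists partial results, there is no ``paper proof'' to compare against; your honest acknowledgement that the statement remains open is therefore the correct answer.

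One structural point you missed is worth adding, because the paper states it explicitly: by a result of S.~Kumar, $\LatRect(\beta,\beta)$ (i.e., Alon--Tarsi for order $\beta$) \emph{implies} $\LatRect(\alpha,\beta)$ for all $\alpha \le \beta$. This clarifies the logical dependency — the square case is the genuine bottleneck, and the strictly rectangular cases fall out for free from it — so your instinct to design a degeneration that ``exploits $\beta > \alpha$ to sidestep Alon--Tarsi'' is somewhat misdirected unless $\alpha$ is held fixed and small. In that bounded-$\alpha$ regime there \emph{are} Alon--Tarsi-independent results, which the paper also cites: $\alpha=1$ is trivial, $\alpha=2$ follows from the classical proof of Hermite reciprocity, and $\alpha\le 5$ follows from work on Foulkes' conjecture. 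These are precisely the cases where your ``genuinely exploit $\beta>\alpha$'' idea has traction, and they illustrate that the bounded-$\alpha$ version is a Foulkes-type plethysm statement rather than a signed Latin square count.
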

$\LatRect(1,\beta)$ is trivially true for all $\beta$.
The fact that $\LatRect(2,\beta)$ is true for all even $\beta$ follows from the classical proof of Hermite's reciprocity theorem in representation theory \cite{Her:1854}.
The fact that $\LatRect(\alpha,\beta)$ is true for all even $\beta$ and $\alpha \leq 5$ follows from recent work on Foulkes' conjecture, see \cite{MN:05,McK:08,CIM:15}.
$\LatRect(\beta,\beta)$ can be easily seen to be equivalent to the Alon-Tarsi conjecture \cite{AT:92}, which is equivalent to the Huang-Rota conjecture \cite{HR:94}, and is known to be true for $\beta\in\{p+1,p-1\mid p \text{ an odd prime number}\}$, in particular for all even $2 \leq \beta \leq 24$ \cite{Dri:98,Gly:10}.
$\LatRect(\beta,\beta)$ implies $\LatRect(\alpha,\beta)$ for $\alpha \leq \beta$, as was shown by S.\ Kumar
as part of his work on geometric complexity theory \cite{Kum:15}.


\begin{theorem}\label{thm:equationsfromrect}
For the sake of notational simplicity let $m \leq n$.
If $m,n > kr$ and if $\LatRect(k,m)$ holds, then there exists an irreducible representation of nontrivial equations for $\cM_r$ in degree $km$ of type $((k \times m),(m \times k),(m\times k))$.
Note that for $m=n$ this means that the equation is an $\SL(U) \times \SL(V) \times \SL(W)$-invariant polynomial.
\end{theorem}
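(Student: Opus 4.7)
The plan is to mirror the design-based construction of Section~\ref{subsec:slicerankeqdesign}, adapted to the minrank setting. I arrange $D = km$ vertices as a $k \times m$ grid $R$ and define a three-layered hypergraph $H$ whose layer-$1$ hyperedges are the $m$ columns of $R$ (each of size $k$), and whose layer-$2$ and layer-$3$ hyperedges are the $k$ rows (each of size $m$). The recipe of Section~\ref{subsec:slicerankeqdesign} applied to $H$ produces a highest weight vector $f_H$ of type $\lambda = (m^k)$, $\mu = \nu = (k^m)$, i.e., $((k\times m),(m\times k),(m\times k))$, and yields a homogeneous degree-$km$ polynomial on $U \otimes V \otimes W$ after projection onto the $\aS_{km}$-invariants.

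For vanishing on $\cM_r$, let $T \in \cM_r$; by $\GL(U)$-invariance I may assume $Te_1 = A_1$ has rank at most $r$ and write
\[
T \;=\; \sum_{j=1}^r e_1 \otimes v_{1,j} \otimes w_{1,j} \;+\; \sum_{i=2}^k \sum_{j=1}^{\min(m,n)} e_i \otimes v_{i,j} \otimes w_{i,j}.
\]
Formula~\eqref{eq:sumofproductofdets} then expresses $f_H(T)$ as a sum over assignments $J$ of rank-one triads to the $km$ vertices. A layer-$1$ column determinant vanishes unless the $k$ tokens in that column carry pairwise distinct $U$-indices, so every surviving $J$ places exactly one token of type $i=1$ in each column, i.e., exactly $m$ such tokens distributed among the $k$ rows of $R$. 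By pigeonhole and the hypothesis $m > kr$ some row then contains at least $r+1$ tokens of type $i=1$; their $V$-components lie in the $r$-dimensional subspace $\mathrm{span}(v_{1,1}, \dots, v_{1,r})$ and are therefore linearly dependent, killing the layer-$2$ row determinant for that row. Hence every term of~\eqref{eq:sumofproductofdets} vanishes and $f_H(T)=0$; the hypothesis $n > kr$ yields the symmetric statement on the $W$-side via layer $3$.

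For nontriviality, I would evaluate $f_H$ at a specifically constructed test tensor $T_{\mathcal U}$ built from a generic set $\mathcal U$ of $m$ vectors in $\bbC^k$, placing these vectors in the $U$-slot and using matching Vandermonde-type data in the $V$- and $W$-slots. With the right choice the surviving assignments $J$ in~\eqref{eq:sumofproductofdets} are in bijection with $k \times m$ Latin Rectangles for $\mathcal U$, the layer-$1$ column determinants reproduce precisely the column determinants appearing in the definition of $\coldet$, and the layer-$2$ and layer-$3$ row factors collapse into an overall nonzero combinatorial constant independent of $\mathcal U$; consequently $f_H(T_{\mathcal U})$ is proportional to $\sum_L \coldet(L)$, which is nonzero for generic $\mathcal U$ under the hypothesis $\LatRect(k,m)$. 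The vanishing step is a direct generalization of the proof of Theorem~\ref{thm:equations}, so the main obstacle is the nontriviality argument: constructing $T_{\mathcal U}$ so that the three-layered determinant sum of~\eqref{eq:sumofproductofdets} degenerates exactly to the Latin Rectangle sum of the conjecture, with all signs tracked correctly and no unintended cancellations between placements differing only in their layer-$2$/layer-$3$ assignments.
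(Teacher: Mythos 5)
Your construction and your vanishing argument match the paper's proof. The paper also uses the design-based highest weight vector $h_{k\times m}\otimes\tau(h_{m\times k})\otimes\tau(h_{m\times k})$ (giving type $((k\times m),(m\times k),(m\times k))$), arranges the $km$ tensor indices as a $k\times m$ grid with columns as the $U$-layer hyperedges and rows as the $V$- and $W$-layer hyperedges, and proves vanishing on $\cM_r$ exactly as you do: each column determinant forces exactly one token from the rank-$\leq r$ slice per column, pigeonhole with $m>kr$ overloads some row with more than $r$ such tokens, and the $V$-row determinant in that row then vanishes (with $n>kr$ covering the $W$-side).

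The step you explicitly leave open --- nontriviality --- is where the paper's concrete choice of test tensor resolves precisely the concerns you raise. It takes $T=\sum_{i=1}^m u_i\otimes e_i\otimes e_i$ with $u_1,\ldots,u_m\in U^*$ generic and $e_i$ the $i$-th standard basis vector used \emph{simultaneously} in the $V$- and $W$-slots, rather than arbitrary ``Vandermonde-type data.'' Two things then become automatic. First, each nonzero summand of \eqref{eq:sumofproductofdets} is indexed by a placement $J:[km]\to[m]$ that is uniquely determined by the $U$-labels it puts on the grid; the matrix of $U$-labels must be a $k\times m$ Latin Rectangle for $\mathcal U$, and $J\leftrightarrow L$ is a bijection. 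So there are no two placements realizing the same $L$ that could cancel: the $V$- and $W$-data ride along with the $U$-index of each rank-one term, and there is nothing ``differing only in layer-$2$/layer-$3$ assignments.'' Second, because the $V$- and $W$-slots carry identical standard basis vectors, the layer-$2$ and layer-$3$ determinants for each row are literally equal, each $\pm 1$; their product is $+1$, so the summand equals $\coldet(L)$ on the nose, and the total evaluation is $\sum_L\coldet(L)\ne 0$ under $\LatRect(k,m)$. (Your Vandermonde variant can be made to work too --- the paired $V$- and $W$-row determinants both equal $\sgn(\sigma)\det(v_1,\ldots,v_m)$, so across all rows one gets the fixed positive factor $\det(v_1,\ldots,v_m)^{2k}$ --- but the standard-basis choice removes all the sign bookkeeping you were worried about.)
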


The Kronecker coefficients for the type $((k \times m),(m \times k),(m\times k))$ are still not well understood.
The Kronecker coefficients for the slightly more general type $(\la,(m \times k),(m\times k))$ appear as and upper bound to the multiplicities of $\la$ in the coordinate ring of the orbit of the determinant, see \cite{BLMW:11}. Recent progress on rectangular Kronecker coefficients has been made in \cite{Man:11} and \cite{IP:17}.

The rest of this subsection is devoted to prove Theorem~\ref{thm:equationsfromrect}.
Like before, we start by establishing a construction principle.
\subsubsection*{Construction of highest weight vectors}

We use the same setup as in the paragraph ``Construction 
of highest weight vectors'' in Section~\ref{subsec:slicerankeqdesign}.
We are mostly interested in one specific permutation:
For $i,j\in\mathbb N$ let $\tau \in \aS_{ij}$ denote the \emph{transpose} permutation, i.e., $a + j (b-1) \mapsto b + j (a-1)$
for $1 \leq a \leq i$, $1 \leq b \leq j$.

For the proof of Theorem~\ref{thm:equationsfromrect}
we define $h := h_{k\times m} \otimes \tau(h_{m \times k}) \otimes \tau(h_{m \times k})$.
In both cases let $f$ denote the projection of $\varrho(h)\in(U\otimes V\otimes W)^{\otimes D}$ onto the $\aS_D$-invariant subspace.

\subsubsection*{Evaluation via products of determinants}
Let $T \in U^* \otimes V^* \otimes W^*$ and let $D:= km$.
Considering eq.~\eqref{slicerank:eq:evaluationcontraction}, we aim to understand the tensor contraction
\begin{equation}\label{eq:tensorcontractionfI}
\langle f^{(1)}, T^{\otimes D}\rangle.
\end{equation}
Since $T^{\otimes D}$ is symmetric under $\aS_D$, it follows
\[
\langle f^{(1)}, T^{\otimes D}\rangle = \langle \varrho(h^{(1)}), T^{\otimes D}\rangle.
\]
Our goal is to prove its vanishing for tensors from $\cM_r$, but its nonzeroness for at least one tensor.
In general, let $t = \sum_{j=q}^r a_j \otimes b_j \otimes c_j$. We expand
\[
T^{\otimes D} = \sum_{J:[D]\to[r]} a_{J(1)} \otimes b_{J(1)} \otimes c_{J(1)} \otimes \cdots \otimes a_{J(D)} \otimes b_{J(D)} \otimes c_{J(D)}.
\]
Note that
\begin{equation}\label{eq:prodofdets}
\langle h_{k \times m}, v_1 \otimes \cdots \otimes v_{mk}\rangle = \det(v_1,\ldots,v_k) \det(v_{k+1},\ldots,v_{2k}) \cdots \det(v_{(m-1)k+1},\ldots,v_{mk}),
\end{equation}
if each $v_i \in \bbC^k$.
But in our analysis the vectors $v_i$ will not always come from a $k$-dimensional vector space.
If there is $j>k$ such that each $v_i \in \bbC^j$,
then eq.~\eqref{eq:prodofdets} still holds if for vectors $w_1,\ldots,w_k \in \bbC^j$ we define $\det(w_1,\ldots,w_k)$
to be the determinant of the top $k \times k$ matrix of the $j\times k$ matrix given by $w_1,\ldots,w_k$.

Recalling that $h = h_{k \times m} \otimes \tau(h_{m \times k}) \otimes \tau(h_{m \times k})$, we see that
\begin{eqnarray} 
\lefteqn{\langle\varrho(h),x_1 \otimes y_1 \otimes z_1 \otimes \cdots \otimes x_{mk} \otimes y_{mk} \otimes z_{mk}\rangle} && \notag\\
&=& \det(x_1,\ldots,x_k)\det(x_{k+1},\ldots,x_{2k})\cdots\det(x_{(m-1)k+1},\ldots,x_{mk})\label{eq:tripleprodofdets}\\
& & {} \cdot  \det(y_1,y_{k+1},\ldots,y_{(m-1)k+1})\det(y_2,y_{k+2},\ldots,y_{(m-1)k+2})\cdots\det(y_k,y_{2k},\ldots,y_{mk}) \notag \\
& & {}\cdot  \det(z_1,z_{k+1},\ldots,z_{(m-1)k+1})\det(z_2,z_{k+2},\ldots,z_{(m-1)k+2})\cdots\det(z_k,z_{2k},\ldots,z_{mk}). \notag
\end{eqnarray}
The indices in \eqref{eq:tripleprodofdets} correspond to the rows and columns of the matrix
\[
\begin{pmatrix}
1 & k+1 & \hdots & (m-1)k+1 \\
2 & k+2 & \hdots & (m-1)k+2 \\
\vdots & \vdots & \ddots & \vdots \\
k & 2k & \hdots & mk
\end{pmatrix}.
\]
For notational convenience,
let $C_\beta:=\{k(\beta-1)+1,\ldots,k(\beta-1)+k\}$ denote the set of entries in column~$\beta$, $1 \leq \beta \leq m$,
and let 
$R_\alpha:=\{\alpha,k+\alpha,\ldots,k(m-1)+\alpha\}$ denote the set of entries in row~$\alpha$, $1 \leq \alpha \leq k$.

\subsubsection*{The equations vanish on $\cM_r$}
Let $T \in \cM_r$ and write
\[
T = \sum_{\ell=1}^{k-1} \Big(a_\ell \otimes \sum_{i\in[m]\atop j \in [n]} b_i^{(\ell)} \otimes c_j^{(\ell)}\Big)
+ a_k \otimes \sum_{i=1}^r b_i^{(k)} \otimes c_i^{(k)}.
\]
We now show that for this $T$, \eqref{eq:tensorcontractionfI} vanishes.
We expand $T$ in the straightforward (and not very efficient) way into a sum of $(k-1)mn+r$ many rank 1 tensors.
We expand $T^{\otimes km}$ into summands of the form
\[
x_1 \otimes y_1 \otimes z_1 \otimes \cdots \otimes x_{mk} \otimes y_{mk} \otimes z_{mk}
\]
and analyze \eqref{eq:tripleprodofdets} for each of the summands separately.
First, we observe that if \eqref{eq:tripleprodofdets} is nonzero, then in each of the sets $C_\beta$, $1 \leq \beta \leq m$,
there exists exactly one $\beta$ such that $x_\beta = a_k$.
Moreover, if \eqref{eq:tripleprodofdets} is nonzero, then the determinants for the $y$-variables (and independently also those for the $z$-variables) imply that in each of the sets $R_\alpha$
there are at most $r$ many $\alpha$ such that $x_\alpha = a_k$.
Therefore, since $m>kr$, the pigeonhole principle implies that \eqref{eq:tripleprodofdets} is zero.

The whole contraction \eqref{eq:tensorcontractionfI} vanishes, because \eqref{eq:tripleprodofdets} is zero for each summand in the expansion independently.

\subsubsection*{Nontriviality of the equations}
In this section we show that if $\LatRect(k,m)$, then there exists a tensor $t$ for which \eqref{eq:tensorcontractionfI} is nonzero,
which proves that our equations are not just the zero function.
Let $u_i \in U^*$, $1 \leq i \leq m$, be chosen generically.
Let $v_i \in V^*$, $1 \leq i \leq m$, form a basis of $V$ and let 
$w_i \in W^*$, $1 \leq i \leq n$, form a basis of $W$.
For the sake of simplicity, we assume that for $1 \leq i \leq m$ we have $v_i = w_i = e_i$ is the $i$-th standard basis vector.
We define
\[
T :=  \sum_{i=1}^m u_i \otimes v_i \otimes w_i
\]
We write \eqref{eq:tensorcontractionfI} as a sum of $m^{km}$ many summands of the form \eqref{eq:tripleprodofdets} by expanding $T^{\otimes km}$ into a sum of rank 1 tensors
\[
x_1 \otimes y_1 \otimes z_1 \otimes \cdots \otimes x_{mk} \otimes y_{mk} \otimes z_{mk}
\]
as we did in the last section.
By inspection of \eqref{eq:tripleprodofdets}
we observe that if there exist $i$ and $i'$ in $R_\alpha$ with $y_i=y_{i'}$, then \eqref{eq:tripleprodofdets} vanishes.
Moreover, if there exist $j$ and $j'$ in $C_\beta$ with $x_j=x_{j'}$, then \eqref{eq:tripleprodofdets} also vanishes.
Thus for each nonzero summand in \eqref{eq:tensorcontractionfI}, the matrix
\[
L := \begin{pmatrix}
u_1 & u_{k+1} & \hdots & u_{(m-1)k+1} \\
u_2 & u_{k+2} & \hdots & u_{(m-1)k+2} \\
\vdots & \vdots & \ddots & \vdots \\
u_k & u_{2k} & \hdots & u_{mk}
\end{pmatrix}
\]
forms a Latin Rectangle for $\mathcal U = \{u_1,\ldots,u_m\}$.
Moreover, each determinant of $y$-variables has value $\pm 1$,
and each determinant of $z$-variables also has value $\pm 1$,
and the signs of the $i$-th determinant of $y$-values and the $i$-th determinant of $z$-values coincide.
Since a product of an even number of $-1$s equals 1,
the value of each nonzero summand in \eqref{eq:tensorcontractionfI} equals $\coldet(L)$.
Thus the contraction \eqref{eq:tensorcontractionfI} equals $\sum_L \coldet(L)$,
where the sum if over all Latin Rectangles for $\mathcal U$.
This proves Theorem~\ref{thm:equationsfromrect}.

\subsection{Equations from multiplicities}
\label{sec:eq-mult}

The homogeneous part of the coordinate ring of $U \otimes V \otimes W$ in degree $d$
decomposes into two $\GL(U) \times \GL(V) \times \GL(W)$ representations:
$\bbC[U\otimes V\otimes W]_d = I(\cM_r)_d \oplus \bbC[\cM_r]_d$,
where $I(\cM_r)$ is the vanishing ideal of $\cM_r$, i.e., the subset of all polynomials on $U\otimes V\otimes W$ that vanish identically on $\cM_r$,
and $\bbC[\cM_r]_d := \bbC[U\otimes V \otimes W]/I(\cM_r)$ is the coordinate ring of $\cM_r$, whose elements can be interpreted as all restrictions of polynomials on
$U\otimes V\otimes W$ to $\cM_r$.

Determining the multiplicities of irreducible representations in $\bbC[U\otimes V\otimes W]$ can be done using classical character theory:
the multiplicities are the Kronecker coefficients.
To find equations, we prove a lower bound on multiplicities in $I(\cM_r)_d$ by proving an upper bound on multiplicities in $\bbC[\cM_r]_d$.
This is done by considering all regular functions on the orbit $(\GL_k \times \GL_s \times \GL_s) T_{k,m,r}$,
which we denote by $\bbC[(\GL_k \times \GL_s \times \GL_s) T_{k,m,r}]$:
These multiplicities are bounded from below by the multiplicities in $\bbC[\overline{(\GL_k \times \GL_s \times \GL_s) T_{k,m,r}}]_d$,
but they can be computed using branching rules in representation theory, \emph{without actually performing any calculations on tensors}.
We explain this method in this section.

Let $\{\la\}_k$ to denote the irreducible $\GL_k$-representation to the partition $\la$. We occasionally omit the subscript if the group is clear.
We write $\la \vdash_k$ to denote that $\la$ is a partition of some number into at most $k$ parts.

Let $G := \GL_k \times \GL_s \times \GL_s$.
The algebraic Peter-Weyl theorem can be used to describe the multiplicities in the coordinate ring of the orbit of $T_{k,m,r}$:
\[
\bbC[G T_{k,m,r}] = \bbC[G/H] = \bbC[G]^H = \bigoplus_{\la,\mu,\nu} \{\la,\mu,\nu\} \otimes \{\la,\mu,\nu\}^H.
\]
where $H \subseteq G$ is the stabilizer of $T_{k,n,r}$.
In particular
\[
\mult_{(\la,\mu,\nu)} \bbC[G T_{k,m,r}] = \dim \{\la,\mu,\nu\}^H.
\]
The rest of this section is devoted to determine $\dim \{\la,\mu,\nu\}^H$.
\[
\{\la,\mu,\nu\} = \{\la\}_k \otimes \{\mu\}_s \otimes \{\nu\}_s.
\]
Splitting $\{\mu\}_s$ into $\GL_r \times \GL_m^{\times k-1}$-irreducibles via the multi-Littlewood-Richardson rule yields:
\[
\{\mu\}_s = \bigoplus_{\substack{\mu^1 \vdash_r \\ \mu^2,\ldots,\mu^k \vdash_m}} c_{\mu^1,\ldots,\mu^k}^{\mu} \{\mu^1\}_r \otimes \{\mu^2\}_m \otimes \cdots \otimes \{\mu^k\}_m.
\]
Using the analogous equality for $\nu$ we obtain $\{\la,\mu,\nu\}=$
\[
\bigoplus_{\substack{\mu^1 \vdash_r \\ \mu^2,\ldots,\mu^k \vdash_m \\ \nu^1 \vdash_r \\ \nu^2,\ldots,\nu^k \vdash_m}}
c_{\mu^1,\ldots,\mu^k}^{\mu}
c_{\nu^1,\ldots,\nu^k}^{\nu}
\{\la\}_k \otimes
\{\mu^1\}_r \otimes \{\mu^2\}_m \otimes \cdots \otimes \{\mu^k\}_m
\otimes \{\nu^1\}_r \otimes \{\nu^2\}_m \otimes \cdots \otimes \{\nu^k\}_m
\]
For a partition $\xi$ we write $\xi \trianglelefteq \la$ when $\xi$ arises from $\la$ by removing boxes, at most one in each column.
Splitting $\{\la\}_k$ into irreducible $\GL_1 \times \GL_{k-1}$-representations via Pieri's rule yields
\[
\{\la\}_k = \bigoplus_{\substack{\xi \trianglelefteq \la\\\xi\vdash_{k-1}}} \{(|\la|-|\xi|)\}_1 \otimes \{\xi\}_{k-1},
\]
where $(|\la|-|\xi|)$ is the partitition to the one-row Young diagram with $|\la|-|\xi|$ many boxes.
In total, $\{\la,\mu,\nu\}=$
\[
\bigoplus_{\substack{\mu^1 \vdash_r \\ \mu^2,\ldots,\mu^k \vdash_m \\ \nu^1 \vdash_r \\ \nu^2,\ldots,\nu^k \vdash_m\\\xi \trianglelefteq \la\\\xi\vdash_{k-1}}}
c_{\mu^1,\ldots,\mu^k}^{\mu}
c_{\nu^1,\ldots,\nu^k}^{\nu}
\{(|\la|-|\xi|)\}_1 \otimes \{\xi\}_{k-1} \otimes
\{\mu^1\}_r \otimes \{\mu^2\}_m \otimes \cdots \otimes \{\mu^k\}_m
\otimes \{\nu^1\}_r \otimes \{\nu^2\}_m \otimes \cdots \otimes \{\nu^k\}_m
\]
Let $\{(a_1),\ldots,(a_k)\}$ denote the 1-dimensional irreducible $\GL_1^k$-representation to the 1-row partitions $(a_i)$.
Taking $\GL_r\times\GL_m^{k-1}$-invariants
in $\{\la,\mu,\nu\}$
and using that
\[
\dim(\{\mu^i\}\otimes\{\nu^i\}^*)^{\GL_m}=
\begin{cases}
1 & \text{ if } \mu^i=\nu^i \\
0 & \text{ otherwise }
\end{cases}
\]
yields $\{\la,\mu,\nu\}^{\GL_r\times\GL_m^{k-1}}=$
\[
\bigoplus_{\substack{\mu^1 \vdash_r \\ \mu^2,\ldots,\mu^k \vdash_m\\\xi \trianglelefteq \la\\\xi\vdash_{k-1}}}
c_{\mu^1,\ldots,\mu^k}^{\mu}
c_{\mu^1,\ldots,\mu^k}^{\nu}
\{(|\la|-|\xi|)\}_1 \otimes \{\xi\}_{k-1}
\otimes
\{(|\mu^1|),\ldots,(|\mu^k|)\}
\]
Taking $\GL_1 \times \GL_1^{k-1}$-invariants yields $\{\la,\mu,\nu\}^{\GL_1 \times \GL_1^{k-1} \times \GL_r\times\GL_m^{k-1}}=$
\[
\bigoplus_{\substack{\mu^1 \vdash_r |\la|-|\xi|\\ \mu^2,\ldots,\mu^k \vdash_m\\\xi \trianglelefteq \la\\\xi\vdash_{k-1}}}
c_{\mu^1,\ldots,\mu^k}^{\mu}
c_{\mu^1,\ldots,\mu^k}^{\nu}
\{\xi\}^{|\mu^2|,\ldots,|\mu^k|},
\]
where $\{\xi\}^{b_2,\ldots,b_k}$ is the weight space to $(b_1,\ldots,b_{k-1})$ in $\{\xi\}$.
To obtain $\dim\{\la,\mu,\nu\}^H$ we have to determine the dimension of the space of $\aS_{k-1}$-invariants in 
$\{\la,\mu,\nu\}^{\GL_1 \times \GL_1^{k-1} \times \GL_r\times\GL_m^{k-1}}$.
Observe that $\aS_{k-1}$ permutes the weight spaces, so we write $\{\la,\mu,\nu\}^H=$
\[
\bigoplus_{\substack{\mu^1 \vdash_r |\la|-|\xi|\\ \mu^2,\ldots,\mu^k \vdash_m\\\xi \trianglelefteq \la\\\xi\vdash_{k-1} \\ |\mu^2|\leq|\mu^3|\leq\cdots\leq|\mu^k|}}
c_{\mu^1,\ldots,\mu^k}^{\mu}
c_{\mu^1,\ldots,\mu^k}^{\nu}
\big(
\bigoplus_{\gamma \in \aS_{k-1}\cdot(|\mu^2|,|\mu^3|,\cdots,|\mu^k|)}
\{\xi\}^{\gamma_1,\ldots,\gamma_{k-1}},
\big)
\]
Fortunately, the dimension of $\aS_{k-1}$-invariants in the term in parentheses has been studied before in the context of geometric complexity theory and tensor rank \cite{BI:11}:
\[
\big(
\bigoplus_{\gamma \in \aS_{k-1}\cdot(|\mu^2|,|\mu^3|,\cdots,|\mu^k|)}
\{\xi\}^{\gamma_1,\ldots,\gamma_{k-1}}
\big)^{\aS_{k-1}} = \dim (\{\xi\}^{|\mu^2|,|\mu^3|,\cdots,|\mu^k|})^{\stab_{\aS_{k-1}} (|\mu^2|,|\mu^3|,\cdots,|\mu^k|)}.
\]
Let $J := (|\mu^2|,|\mu^3|,\cdots,|\mu^k|)$ and set $\aS_J := \aS_{|\mu^2|} \times \cdots \times \aS_{|\mu^k|}$.
Gay's theorem says that $\{\xi\}^{|\mu^2|,|\mu^3|,\cdots,|\mu^k|} = [\xi]^{\aS_J}$.
Let $S :=\stab_{\aS_{k-1}} (|\mu^2|,|\mu^3|,\cdots,|\mu^k|)$.
We want to determine $\dim [\xi]^{\aS_J \rtimes S}$. We calculate
\begin{equation}\label{eq:schurweyl}
\dim [\xi]^{\aS_J \rtimes S} = \dim \HWV_{\xi} \{\xi\} \otimes [\xi]^{\aS_J \rtimes S} \stackrel{\text{Schur-Weyl duality}}{=} \dim \HWV_\xi \big({\textstyle\bigotimes}^{|\xi|} V\big)^{\aS_J \rtimes S}.
\end{equation}
\begin{equation}\label{eq:prekappa}
\big({\textstyle\bigotimes}^{\xi}V\big)^{\aS_J \rtimes S} = (\Sym^{|\mu^2|}V \otimes \cdots \otimes \Sym^{|\mu^{k-1}|})^{S}
\end{equation}
Let $\kappa_i$ denote the number of times that $i$ occurs in the $J$.
Then \eqref{eq:prekappa} can be grouped as follows:
\begin{eqnarray*}
\text{\eqref{eq:prekappa}} &=& \big( {\textstyle\bigotimes}^{\kappa_1}\Sym^1 V \otimes \cdots \otimes {\textstyle\bigotimes}^{\kappa_{k-1}}\Sym^{k-1} V \big)^S \\
&=& \underbrace{\Sym^{\kappa_1}\Sym^1 V}_{= \bigoplus_{\delta^1}a_{\delta^1}(\kappa_1,1)\{\delta^1\}} \otimes \cdots \otimes \underbrace{\Sym^{\kappa_{k-1}}\Sym^{k-1} V}_{{= \bigoplus_{\delta^{k-1}}a_{\delta^{k-1}}(\kappa_{k-1},k-1)\{\delta^{k-1}\}}}
\end{eqnarray*}
Using the multi-Littlewood-Richardson rule we obtain
\[
\text{\eqref{eq:schurweyl}} = \sum_{\delta^1,\ldots,\delta^{k-1} \vdash_{\ell(\xi)} i\kappa_i} c^\xi_{\delta^1,\ldots,\delta^{k-1}} \prod_{i=1}^{|\xi|} a_{\delta^i}(\kappa_i,i).
\]
Altogether, $\dim\{\la,\mu,\nu\}^H=$
\[
\sum_{\substack{\mu^1 \vdash_r |\la|-|\xi|\\ \mu^2,\ldots,\mu^k \vdash_m\\\xi \trianglelefteq \la\\\xi\vdash_{k-1} \\ |\mu^2|\leq|\mu^3|\leq\cdots\leq|\mu^k|\leq |\xi|}}
\sum_{\delta^1,\ldots,\delta^{k-1} \vdash_{\ell(\xi)} \atop |\delta^i| = i\kappa_i}
c_{\mu^1,\ldots,\mu^k}^{\mu}
c_{\mu^1,\ldots,\mu^k}^{\nu}
c^\xi_{\delta^1,\ldots,\delta^{k-1}} \prod_{i=1}^{|\xi|} a_{\delta^i}(\kappa_i,i),
\]
where the $\kappa_i$ denotes the number of times $i$ occurs in $(|\mu^2|,\ldots,|\mu^k|)$.

Implementing this formula, we see that it indeed yields equations!
For example,
\[
\mult_{\la,\mu,\nu}\bbC[G T_{3,3,1}]_6 = 0 < 1 = k(\la,\mu,\nu)
\]
where $k(\la,\mu,\nu)$ denotes the Kronecker coefficient and $(\la,\mu,\nu)$ is one of the following cases:
\begin{itemize}
\item $((3,3),(2,2,2),(3,3))$
\item $((3,3),(3,3),(2,2,2))$
\item $((3,3),(2,2,2),(4,1,1))$
\item $((3,3),(4,1,1),(2,2,2))$
\end{itemize}
Numerous other partition triples can be readily generated.
Restricting the first partition to two rows and the second and third to three rows, we checked with the software \textsc{Macaulay2} combined
with methods from \cite{BI:13}
that
this method only misses one triple: $((3,3),(3,2,1),(3,2,1))$, where the multiplicities on the left hand side and the right hand side are both 2.


\section{Complexity-theoretic properties}
Here we show the $\NP$-hardness of the slice rank and the minrank problems.

\subsection{Hardness of Slice Rank} \label{sec:hardslices}

In this section, we show that the problem of testing if a given 3-tensor has slice rank at most $r$ is $\NP$-hard. We do this by showing that a variant of hypergraph vertex cover testing is $\NP$-hard. Tao and Sawin \cite{ts16} showed the equivalence of the slice rank problem to this variant of hypergraph vertex cover testing.

We fix a field $\mathbb{F}$. Given a $3$-uniform, $3$-partite hypergraph $H$ with $3$ partitions $U,V $ and $W$ with $|U| = n_1$, $|V| = n_2$, and $ |W| = n_3$, $n_i \in \mathbb{N}$, $i \in [3]$, with edge set being $E \subseteq U \times V \times W$, we can define a $3$-tensor $T_H (\mathbf{x_1}, \mathbf{x_2}, \mathbf{x_3})$ corresponding to $H$, where $\mathbf{x_i}$ is a tuple of $[n_i]$ variables in the following way.
\[ T_H (\mathbf{x_1}, \mathbf{x_2}, \mathbf{x_3}) = \sum_{(u_{i_1},v_{i_2}, w_{i_3}) \in E}x_{1,i_1} \cdot x_{1,i_2} \cdot  x_{3,i_3} \]

We label the nodes in $U, V$ and $W$ from the set of integers. For two hyperedges $e_1 := (u_{a_1}, v_{b_1}, w_{c_1})$ and $e_2 :=(u_{a_2}, v_{b_2}, w_{c_2})$, we say that  $  e_1 \leq e_2$ iff $(a_1 \leq a_2) \land (b_1 \leq b_2) \land (c_1 \leq c_2)$. If neither $e_1 \leq e_2$ nor $e_2 \leq e_1$ holds, we say that $e_1$ and $e_2$ are incomparable. In $E$, if every pair of hyperedges is incomparable to each other, we say that $E$ is an antichain. 

Tao and Sawin (see \cite[Proposition 4]{ts16}) showed the following.

\begin{lemma} If the hyperedge set $E$ is an antichain, then the slice rank of $T_H$ is the same as the size of the minimum vertex cover of the hypergraph $H$.
\end{lemma}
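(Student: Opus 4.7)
The plan is to prove both inequalities $\srk(T_H) \le \mu$ and $\srk(T_H) \ge \mu$, where $\mu$ denotes the size of a minimum vertex cover of $H$; only the second direction will use the antichain hypothesis. For the upper bound, take a minimum vertex cover $C = C_U \sqcup C_V \sqcup C_W$ and any choice function $\sigma\colon E \to C$ with $\sigma(e) \in e$. Grouping edges by their image under $\sigma$ yields, for each $u \in C_U$, a single $\mathbf{x}_1$-type slice $x_{1,u}\cdot\sum_{\sigma(e)=u} x_{2, v(e)} x_{3, w(e)}$, and analogously for $v \in C_V$ and $w \in C_W$. Summing the resulting $|C| = \mu$ slices recovers $T_H$ because each edge is counted exactly once (through its assigned covering vertex), so $\srk(T_H) \le \mu$.

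For the lower bound, begin with an arbitrary slice rank decomposition $T_H = \sum_i \alpha_i P_i + \sum_j \beta_j Q_j + \sum_l \gamma_l R_l$ of size $k$, where $\alpha_i, \beta_j, \gamma_l$ are linear forms. I would first apply Gauss--Jordan reduction to the $\alpha_i$'s by invertible row operations, dually updating the $P_i$'s (which leaves $\sum_i \alpha_i P_i$ unchanged), choosing the pivot of each $\alpha_i$ to be the \emph{minimum-index} nonzero coordinate $a_i$. This produces a pivot set $A = \{a_i\} \subseteq U$ of size $\le k_1$ with the echelon properties $\alpha_{i,u}=0$ for $u < a_i$, $\alpha_{i,a_i}=1$, and $\alpha_{j,a_i}=0$ for $j \ne i$. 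Analogous reductions on the $\beta_j$'s and $\gamma_l$'s yield pivot sets $B \subseteq V$ and $C \subseteq W$ with $|A| + |B| + |C| \le k$. The key claim is that $A \cup B \cup C$ is a vertex cover of $H$, which gives $\mu \le k$ and completes the proof.

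Suppose for contradiction that this claim fails: then some edge $e^* = (u^*, v^*, w^*) \in E$ has $u^* \notin A$, $v^* \notin B$, $w^* \notin C$. I would then construct $x \in \F^{n_1}$, $y \in \F^{n_2}$, $z \in \F^{n_3}$ satisfying (a) $x_u = 0$ for $u > u^*$ and $x_{u^*} = 1$, with analogues for $y, z$; and (b) $\alpha_i(x) = 0$, $\beta_j(y) = 0$, $\gamma_l(z) = 0$ for all $i, j, l$. Existence of such $x$ reduces, after fixing coordinates above $u^*$ to zero, to showing that the coordinate functional at $u^*$ is not in the linear span of the restrictions $\alpha_i|_{U'}$ with $U' = \{u \le u^*\}$; any putative relation $x_{u^*} = \sum_i c_i\,\alpha_i|_{U'}$ forces every $c_i = 0$ by evaluating at each pivot position $a_i \in A \cap U'$ (using $u^* \notin A$ together with the echelon structure), and hence $x_{u^*} = 0$, a contradiction. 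With such $(x, y, z)$ in hand, substitution into the decomposition gives the value $0$, while $T_H(x, y, z) = \sum_{(u, v, w) \in E,\; u \le u^*,\, v \le v^*,\, w \le w^*} x_u y_v z_w$ collapses by the antichain hypothesis to the single surviving term $x_{u^*} y_{v^*} z_{w^*} = 1$ (any other contributing edge would lie coordinatewise below $e^*$, and the antichain property would force it to equal $e^*$). This contradiction completes the argument.

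The main obstacle is choosing the direction of Gauss--Jordan reduction correctly: minimum-index pivots survive restriction to $U'$ intact, which is precisely what keeps $x_{u^*}$ out of the restricted span; maximum-index pivots would not, since pivot positions can fall outside $U'$ and the crucial echelon structure gets destroyed after restriction. The antichain condition itself enters only once, in the last step, to ensure that truncating the $\mathbf{x}_1, \mathbf{x}_2, \mathbf{x}_3$-supports at $u^*, v^*, w^*$ isolates $e^*$ as the unique surviving edge in $T_H(x, y, z)$.
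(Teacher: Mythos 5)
Your proof is correct, and since the paper itself gives no argument --- it simply cites Tao--Sawin, Prop.~4 --- your proposal is in fact a faithful reconstruction of their proof: the choice-function upper bound, the Gauss--Jordan reduction of the slice coefficients, the pivot sets as a candidate vertex cover, and the antichain-isolated witness $(x,y,z)$ on which the decomposition evaluates to $0$ while $T_H$ evaluates to $1$. The only cosmetic difference is your pivot orientation (minimum-index pivots paired with the initial-segment restriction $\{u \le u^*\}$, rather than maximum-index ``leading'' entries paired with the upward segment $\{u \ge u^*\}$); these are mirror images of one another, so your closing remark that maximum pivots ``would not'' work holds only for the particular restriction direction you chose, not in general.
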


Thus, in order to show that computing slice rank of 3-tensors is NP-hard, we show that the hypergraph  minimum vertex cover problem for a 3-partite, 3-uniform graph, where the edge set is an antichain, is NP-hard.

Our reduction is inspired by \cite{gottlob10} where they show the NP-hardness of the hypergraph vertex cover problem for 3-uniform 3-partite graphs. Their reduction involved reducing 3-SAT to this problem. Here we need to show the hardness under the extra condition that the hyperedge set of the graph is an antichain.
This makes the reduction far more involved, and we also change the hard problem that we reduce to our problem.\\
The NP-hard problem that we use for our reduction is a bounded occurrence mixed SAT problem (bom-SAT), where we have 3-clauses and 2-clauses, such that every variable appears exactly thrice, once in a 3-clause, while the other two occurrences are in 2-clauses (note that the number of variables, $n = 3t$, for some $t$, where $t$ is the number of 3-clauses).

\begin{remark}
It is easy to see that the above mentioned bom-SAT is $\NP$-hard. For this, start with any 3-SAT instance. Now assume that a variable $Z$ appears $m$ times. Introduce $m$ copies $Z_1,...,Z_m$ of $X$. Replace every occurrence of $Z$ by one $Z_i$. We do this for all the variables. Now every variables appears only once. However, we have to ensure consistency, that is, $Z_1,...Z_m$ should have the same value. So we add the 2-clauses: $(Z_1 \lor \neg Z_2) \land (Z_2 \lor \neg Z_3) \land \cdots \land (Z_m \lor \neg Z_1)$.
These 2-clauses can only be satisfied if we set all the $Z_i$'s to $0$ or all the  $Z_i$'s to $1$. The resulting formula is a bom-SAT instance as described above.
\end{remark}

In the reduction, given a bom-SAT formula $\phi$ in $n$ variables $X_1, \ldots X_n$ with $t$ 3-clauses and $m$ 2-clauses, the construction of a 3-uniform 3-partite hypergraph $G^{\phi}$ with 3 vertex partitions $U, V$ and $W$ proceeds as follows.
First of all we sort all the clauses such that all the 3-clauses precede all the 2-clauses. Next we rename all the variables such that the variables in the $r$-th 3-clause ($r \in t$) are $Y_{3(r-1) +1} , Y_{3(r-1) +2}$ and $Y_{3(r-1)+3}$ corresponding to the first, second and the third position of the clause respectively. We also say that  $Y_{3(r-1) +1} , Y_{3(r-1) +2}$ and $Y_{3(r-1)+3}$ belong to the same triple of variables.\\
Now, we have a gadget $G^{\phi}_k$ corresponding to each variable $Y_k$, $k \in[n]$. $G^{\phi}_k$ consists of nodes $(i,j)^k$ and  $\overline{(i,j)}^k, i, j \in \{1,2,3\}$. Here $(i,j)^k$ refers to the node corresponding to the $i$-th occurrence of the variable $Y_k$, and it occurs at the $j$-th position in the clause in which it appears. $\overline{(i,j)}^k$ refers to the negation of $Y_k$ in its $i$-th occurrence at the $j$-th position in the clause. We will drop the superscript $k$, when it is clear from the context. Clearly, there are $18$ such \textit{literal-nodes} in a gadget $G^{\phi}_k$, which are ordered along a circle (see the outer circle in Figure \ref{fig:partition}). Since $Y_k$ appears exactly thrice in $\phi$, exactly 3 out of these 18 nodes will correspond to some occurrence of $Y_k$ in $\phi$. $G^{\phi}_k$ also consists of $18$ other nodes, which we call \textit{free-nodes} (as they do not correspond to any literal), that are useful in the construction (see the inner circle in Figure \ref{fig:partition}). 
We have hyperedges connecting two literal-nodes and a free-node. There are total $18$ hyperedges in $G^{\phi}_k$ each consisting of three vertices that form a triangle in Figure~\ref{fig:partition}.
Note that every literal-node appears in exactly $2$ hyperedges, while a free-node appears in exactly one of them. We partition the set of nodes in 3 parts, as illustrated in the figure. Among the literal-nodes, the nodes corresponding to the first-occurrences ($j=1$) go to the set $U$, the ones corresponding to the second-occurrences ($j=2$) go to the set $V$, while the ones corresponding to third occurrences ($j=3$) go to the set $W$. We distribute the free-nodes equally among the three sets, while maintaining the property of being 3-partite (see Figure \ref{fig:partition}).

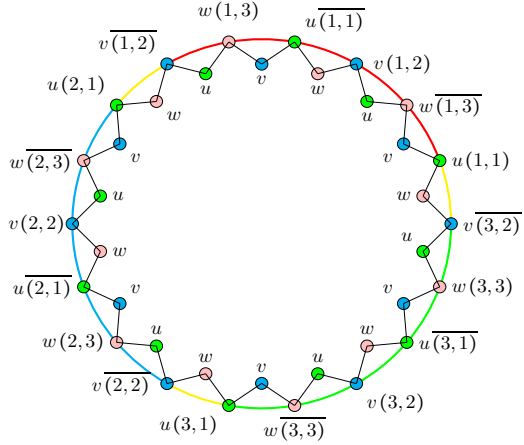
\begin{figure} 
\centering
\resizebox{7cm}{6cm}{
\begin{tikzpicture}
	\draw [very thick] [yellow] (0:3.7) arc [radius=3.7 cm, start angle=0, end angle=20];
	  \draw[very thick] [red] (20:3.7) arc [radius=3.7 cm, start angle=20, end angle=120];
	  \draw[very thick] [yellow] (120:3.7) arc [radius=3.7 cm, start angle=120, end angle=140];
  \draw [very thick] [cyan] (140:3.7) arc [radius=3.7 cm, start angle=140, end angle=240];
  \draw [very thick] [yellow] (240:3.7) arc [radius=3.7 cm, start angle=240, end angle=260];
  \draw [very thick] [green] (260:3.7) arc [radius=3.7 cm, start angle=260, end angle=360];
	

  \foreach \phi in {1,...,18}{
    \coordinate (A\phi) at (360/18*\phi:3.7cm);
    \coordinate (B\phi) at (360/18*\phi +30:3.2cm);
  }
  \foreach \phi in {1,4,7,10,13,16}{
      \draw[fill = green] (A\phi)circle(1. 2 mm);
		}
 \foreach \phi in {2,5,8,11,14,17}{
      \draw[fill = pink] (A\phi)circle(1. 2 mm);
		}
 \foreach \phi in {3,6,9,12,15,18}{
      \draw[fill = cyan] (A\phi)circle(1. 2 mm);
		}

  \foreach \phi in {1,4,7,10,13,16}{
      \draw[fill = green] (B\phi)circle(1. 2 mm);
		}
 \foreach \phi in {2,5,8,11,14,17}{
      \draw[fill = pink] (B\phi)circle(1. 2 mm);
		}
 \foreach \phi in {3,6,9,12,15,18}{
      \draw[fill = cyan] (B\phi)circle(1. 2 mm);	
   }

  \foreach \phi [count = \s from 2] in {1,...,16}
  {
  \draw[-] (B\phi) -- (A\s);
    } 
    \foreach \phi [count = \s from 3] in {1,...,16}
  {
  \draw[-] (B\phi) -- (A\s);
    }   
    \draw[-] (B17) -- (A18);
    \draw[-] (B17) -- (A1);
    \draw[-] (B18) -- (A1);
    \draw[-] (B18) -- (A2);

	\node at (A1)[right= 1mm]{$u \hspace{0.4mm} (1,1)$};
	\node at (A2)[right= 1mm]{$w\hspace{0.4mm}\overline{(1,3)}$};
	\node at (A3)[right= 2mm]{$v\hspace{0.4mm}(1,2)$};
	\node at (A4)[above right= 1mm]{$u\hspace{0.4mm}\overline{(1,1)}$};
	\node at (A5)[above = 3mm]{$w\hspace{0.4mm} (1,3)$};
	\node at (A6)[above left= 1mm]{$v\hspace{0.4mm} \overline{(1,2)}$};
	\node at (A7)[above left= 1mm]{$u\hspace{0.4mm}(2,1)$};
	\node at (A8)[left= 1mm]{$w\hspace{0.4mm}\overline{(2,3)}$};
	\node at (A9)[left= 0mm]{$v\hspace{0.2mm}(2,2)$};
	\node at (A10)[left= 1mm]{$u\hspace{0.4mm}\overline{(2,1)}$};
	\node at (A11)[left= 1mm]{$w\hspace{0.4mm}(2,3)$};
	\node at (A12)[left= 2mm]{$v\hspace{0.4mm}\overline{(2,2)}$};
	\node at (A13)[below left= 1mm]{$u\hspace{0.4mm}(3,1)$};
	\node at (A14)[below= 1mm]{$w\hspace{0.4mm}\overline{(3,3)}$};
	\node at (A15)[below right= 1mm]{$v\hspace{0.4mm}(3,2)$};
	\node at (A16)[right= 1mm]{$u\hspace{0.4mm}\overline{(3,1)}$};
	\node at (A17)[right= 1mm]{$w\hspace{0.4mm}(3,3)$};
	\node at (A18)[right= 1mm]{$v\hspace{0.4mm}\overline{(3,2)}$};

	\node at (B1)[below= 1mm]{$u$};
	\node at (B2)[below= 1mm]{$w$};
	\node at (B3)[below= 1mm]{$v$};
	\node at (B4)[below= 1mm]{$u$};
	\node at (B5)[below right= 1mm]{$w$};
	\node at (B6)[below right= 1mm]{$v$};
	\node at (B7)[right= 1mm]{$u$};
	\node at (B8)[right= 1mm]{$w$};
	\node at (B9)[above right= 1mm]{$v$};
	\node at (B10)[above = 1mm]{$u$};
	\node at (B11)[above= 1mm]{$w$};
	\node at (B12)[above= 1mm]{$v$};
	\node at (B13)[above= 1mm]{$u$};
	\node at (B14)[above= 1mm]{$w$};
	\node at (B15)[above left = 1mm]{$v$};
	\node at (B16)[above left= 1mm]{$u$};
	\node at (B17)[left= 1mm]{$w$};
	\node at (B18)[below left= 1mm]{$v$};
\end{tikzpicture}}
\caption{ A variable gadget $G^{\phi}_{k}$ corresponding to the variable $Y_k$ in $\phi$. Nodes sharing the red, cyan and green arcs correspond to the first, second, and third occurrence of $Y_k$ in a clause respectively. Exactly 3 out of 18 literal-nodes are used in clause hyperedges. Nodes with an overline indicate that the negation of $Y_k$ appeared in the corresponding clause. Nodes in the inner circle correspond to the free-nodes. \label{fig:partition}}
\label{fig:gadget}
\end{figure}

Additionally, we have clause hyperedges, which for a 3-clause, connect the nodes corresponding to the three literals present in it. For every 2-clause, we first introduce another free-node to the graph, added to set $W$ (as there are no literals at the third position in a 2-clause). Now, there is an hyperedge for every 2-clause as well, connecting the two nodes corresponding to its literals and a free-node. We refer to the hyperedges in a variable gadget either as \textit{variable hyperedges} or \textit{local hyperedges}. We refer to the hyperedges corresponding to the clauses as \textit{clause hyperedges} or \textit{global hyperedges}.
 We illustrate the set up with an example. See Figure \ref{fig:fullgadget}.\\
 
 The following two lemmas finishes the reduction.

\begin{lemma}
The size of the minimum vertex cover of the hypergraph $G^{\phi}$ is at most $9n$ if and only the \emph{bom-SAT} instance $\phi$ is satisfiable.
\end{lemma}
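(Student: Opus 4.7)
The plan is first to analyze a single variable gadget $G_k^\phi$ in isolation, establishing a tight lower bound of $9$ on the number of gadget vertices any cover must use, and then to characterize which $9$-element sets achieve this bound. For the lower bound, note that every one of the $18$ local hyperedges in $G_k^\phi$ must be hit; each literal-node lies in exactly $2$ local hyperedges while each free-node lies in only $1$, so a cover using $f$ free-nodes and $\ell$ literal-nodes in the gadget satisfies $f + 2\ell \geq 18$ and hence $f + \ell \geq 9$. For the characterization, a $9$-element cover of $18$ hyperedges must have $f = 0$ and must cover each local hyperedge exactly once; equivalently, no two chosen literal-nodes may be ``consecutive'' on the cyclic arrangement $A_1, A_2, \ldots, A_{18}$ (two literal-nodes share a local hyperedge iff they are cyclically adjacent, by inspection of the triangulation in Figure~\ref{fig:gadget}). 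On an even cycle of length $18$, the only $9$-element independent sets are the two alternations, which by the labelling in Figure~\ref{fig:gadget} consist of exactly the nine positive-literal nodes and exactly the nine negative-literal nodes of $Y_k$.

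\textbf{Forward direction.} Given a satisfying assignment $\alpha$ of $\phi$, construct a cover $C$ by including, in each gadget $G_k^\phi$, all nine positive-literal nodes of $Y_k$ if $\alpha(Y_k) = 1$ and all nine negative-literal nodes otherwise. Then $|C| = 9n$ and all $18n$ local hyperedges are covered. Any clause hyperedge (from a $3$-clause, or a $2$-clause together with its extra free-node in $W$) contains at least one literal-node whose literal is true under $\alpha$; by construction that node lies in $C$, so the clause hyperedge is covered.

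\textbf{Backward direction.} Suppose $C$ is a vertex cover of $G^\phi$ with $|C| \leq 9n$. The per-gadget lower bound forces $|C \cap G_k^\phi| = 9$ for every $k$ and forbids any vertex of $C$ outside the $n$ gadgets; in particular none of the free-nodes introduced for $2$-clauses lie in $C$. By the characterization, $C \cap G_k^\phi$ is either the set of all positive-literal nodes or the set of all negative-literal nodes of $Y_k$, and we define $\alpha(Y_k)$ to be $1$ or $0$ accordingly. Every clause hyperedge must be covered by some vertex of $C$, which must be a literal-node (the only free-node in a $2$-clause hyperedge is excluded, and $3$-clause hyperedges contain only literal-nodes); that node corresponds to a literal which is true under $\alpha$ by construction, so the clause is satisfied.

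\textbf{Main obstacle.} The delicate step is the characterization of the tight $9$-covers: ruling out mixtures of free-nodes with literal-nodes, ruling out two cyclically consecutive literal-nodes, and then matching the two alternating sets to the two truth values. This is where the antichain-compatible labelling of the gadget from Section~\ref{sec:hardslices} is essential, because it is precisely the cyclic alternation of positive and negative literals around $A_1, \ldots, A_{18}$ that yields the clean bijection between tight local covers and truth assignments to $Y_k$.
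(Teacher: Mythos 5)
Your proof is correct and follows essentially the same route as the paper's: per-gadget analysis showing each gadget demands 9 cover vertices, identifying the two tight 9-element local covers as the all-positive and all-negative literal sets, and translating between these and truth assignments. The one presentational difference is that where the paper invokes a WLOG replacement of free-nodes by literal-nodes and then asserts without detail that $|S_i| \geq 9$ and that only two size-9 local covers exist, you derive these facts directly via the incidence count $f + 2\ell \geq 18 \Rightarrow f + \ell \geq 9$ and the maximum-independent-set argument on the even cycle $C_{18}$; this is cleaner and fills genuine gaps in the paper's exposition. One small inaccuracy worth flagging: your ``main obstacle'' paragraph attributes the cyclic alternation of positive and negative literals to the antichain labelling, but those are independent design choices --- the integer labelling in Lemma~\ref{lem:slicelabel} enforces the antichain property and plays no role in the vertex-cover correspondence, while the alternation you actually use is baked into the gadget's triangulation (Figure~\ref{fig:partition}) and would hold under any labelling.
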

The proof of this lemma follows very closely the proof of hardness of hypergraph minimum vertex cover problem (see \cite[Lemma 5.3]{gottlob10}), which was itself inspired by the proof of NP-hardness of 3-dimensional matching given in Garey and Johnson \cite{garey79}. We give a sketch here.

\begin{proof}
Let $\phi$ be satisfiable with $\nu$ being a satisfying assignment on the variables $Y_1, \ldots, Y_n$.
Now, we construct the vertex cover set $S$ for $G^{\phi}$ of size $9n$ as follows. If $\nu(Y_k) = 0$, we add all the $9$ overlined nodes from $G^{\phi}_{k}$ to $S$, otherwise we add the other $9$ nodes to $S$. Note that $S$ covers all the local hyperedges. Since $\nu$ is a satisfying assignment, all the clause hyperedges are also covered by $S$ as well.

Conversely, assume there is a minimum vertex cover $S$ of $G^{\phi}$ of size at most $9n$.
Now, since all the free-nodes appear in only one hyperedge each, we can assume that $S$ does not contain any free-node, since we can always replace them by a literal-node of the same hyperedge. Now, for $i \in \{1, \ldots, n \}$ if $S_i$ is the subset of $S$ such that $S_i$ only contains the vertices corresponding to the variable gadget $G^{\phi}_i$, it can be easily seen that $|S_i| \geq 9$ for all the variable hyperedges to be covered. This implies that $|S_i| = 9$ since we assumed that $|S| = |\cup_{i=1}^{n}S_i| \leq 9n$.
Thus $S_i$ forms a vertex cover corresponding to the local gadget $G^{\phi}_i$ and hence covers the hyperedges in $G^{\phi}_i$. However, there are only two vertex covers of $G^{\phi}_i$ of size $9$, namely the one set containing all the overlined nodes, i.e., they correspond to $\neg Y_k$, and the other set where none of the nodes are overlined, i.e., they correspond to $Y_k$. In the first case, we assign the value $0$ to $Y_k$, and we assign $1$ in the second case. Thus we construct the assignment $\nu$ for $Y_1, \ldots, Y_n$.  Now, since $S$ is a vertex cover and hence span all the hyperedges including the clause hyperedges, $\nu$ satisfies all the clauses of $\phi$.
\end{proof}

The following lemma ensures that the edge set $E$ of the above constructed graph $G^{\phi}$ is indeed an antichain under some labelling.

\begin{lemma}\label{lem:slicelabel}
For every formula $\phi$, there exists a way of labelling of the nodes in hypergraph $G^{\phi}$ such that the hyperedge set of $G^{\phi}$ is an antichain.
\end{lemma}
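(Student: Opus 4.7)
The plan is to produce an integer labelling $\ell_U, \ell_V, \ell_W$ of the three partition classes (each injective) under which every hyperedge satisfies $\ell_U(u) + \ell_V(v) + \ell_W(w) = S$ for one global constant $S$. The antichain property is then immediate from the following elementary observation: if $(a_1,b_1,c_1) \leq (a_2,b_2,c_2)$ componentwise and their sums coincide, then the nonnegative quantities $a_2 - a_1$, $b_2 - b_1$, $c_2 - c_1$ all vanish, so the two triples agree and the two hyperedges share their three vertices, hence are equal.

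We construct the labelling in two stages. First, within each variable gadget $G^\phi_k$ (which contains $18$ literal-nodes of degree $2$ and $18$ free-nodes of degree $1$), we select arbitrary pairwise distinct integer labels for the $18$ literal-nodes and a target local constant $S_k$, and then define the label of each free-node to be the unique integer that makes the sum on its one incident local hyperedge equal to $S_k$. Because each free-node lies in only one local hyperedge, this assignment is well-defined, and by choosing the literal-node labels and $S_k$ in sufficiently general position we ensure the $36$ resulting labels are pairwise distinct within each color class of the gadget. Second, to combine gadgets and to handle clause hyperedges, we add color- and gadget-dependent offsets $\alpha_k, \beta_k, \gamma_k$ to the $U$-, $V$- and $W$-labels in gadget $G^\phi_k$: since every local hyperedge contains exactly one vertex of each color, this shifts the sum on every local hyperedge of $G^\phi_k$ by the same amount $\alpha_k + \beta_k + \gamma_k$, preserving local constancy. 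For a $3$-clause, whose three literal-vertices lie in three distinct gadgets, equality of the clause-hyperedge sum with the intended global constant $S$ imposes a single linear equation on the three relevant offsets. For a $2$-clause, the newly introduced free-node, which belongs only to that one clause hyperedge, serves as a slack variable and imposes no constraint on the offsets at all.

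The main obstacle is to arrange the $3n$ offsets together with the auxiliary free-node labels so that (i) every clause hyperedge attains the common sum $S$, (ii) labels remain pairwise distinct within each of $U$, $V$, $W$, and (iii) the local constant-sum property is preserved. The $t = n/3$ linear equations coming from the $3$-clauses impose far fewer constraints than we have degrees of freedom among the $3n$ offsets, so a rational solution to the system exists; clearing denominators and then placing the labels of each gadget into a sufficiently wide, well-separated numerical block ensures injectivity of $\ell_U$, $\ell_V$ and $\ell_W$ without disturbing any of the sum conditions (since a uniform shift of all vertices of one color in one gadget is absorbed into the corresponding offset). With these choices every hyperedge of $G^\phi$, local or global, has label sum exactly $S$, so by the observation above the edge set of $G^\phi$ forms an antichain.
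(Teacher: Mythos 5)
Your constant-sum observation is a genuinely different and considerably cleaner route than the paper's. If every hyperedge $\{u,v,w\}$ satisfies $\ell_U(u)+\ell_V(v)+\ell_W(w)=S$ with $\ell_U,\ell_V,\ell_W$ injective, then componentwise comparability together with equal sums forces equality, so distinct hyperedges are automatically incomparable; this replaces the paper's explicit labelling and long case analysis by a single linear-algebra argument. You also correctly identify why the gadget design is amenable: the eighteen degree-$1$ free-nodes per gadget, plus one per $2$-clause, turn most sum conditions into definitions rather than constraints, and after eliminating them the only surviving constraints are the $t$ equations coming from the $3$-clause hyperedges, which fall on pairwise-disjoint triples of literal-nodes and are therefore jointly solvable.

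However, the step that secures cross-gadget injectivity has a genuine gap. You assert that the labels of each gadget can be "placed into a sufficiently wide, well-separated numerical block \dots\ without disturbing any of the sum conditions, since a uniform shift of all vertices of one color in one gadget is absorbed into the corresponding offset." This is not correct as stated: the offsets $(\alpha_k,\beta_k,\gamma_k)$ are themselves pinned by the sum conditions — $\alpha_k+\beta_k+\gamma_k=S-S_k$ for each gadget and $\alpha_{k_1}+\beta_{k_2}+\gamma_{k_3}=\cdots$ for each $3$-clause — so shifting, say, all $U$-labels of gadget $k_1$ by $\delta$ does change both the local sums of that gadget and the sum on every $3$-clause hyperedge through it. Such a shift is free only along the kernel of the offset constraint matrix, and you give no argument that this kernel permits independently separating all three colors in all $n$ gadgets. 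The cleanest repair, fully in the spirit of your argument, is to drop the block construction and appeal to generic position directly: the sum conditions cut out an affine subspace of codimension $t$ in the $18n$-dimensional space of literal-node labels, and every same-color coincidence $\ell(v_1)=\ell(v_2)$ (after expressing each free-node label as its determined linear form) is a proper hyperplane not containing that subspace, since the remaining constraints relate exactly one vertex of each color and no two same-color vertices share a hyperedge in a $3$-partite $3$-uniform hypergraph. A generic rational point of the subspace therefore has all labels distinct within each color class; clearing denominators gives the desired integral labelling. With this replacement your proof goes through and is both shorter and more conceptual than the one in the paper.
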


\begin{proof}
We first give the labelling used.
We have literal-nodes and free-nodes.
The literal-nodes either correspond to the first occurrence, the second occurrence or the third occurrence of a variable. In every gadget, we have $6$ nodes corresponding to each occurrence, $2$ from each partition $U, V$ and $W$. The free-nodes although do not correspond to any occurrences, we say that they correspond to first occurrence if the two literal-nodes that they connect both correspond to the first occurrence. In every gadget, there are $5$ such nodes, $2$ each belonging to $U$ and $V$, while one belonging to $W$.  If a free-node does not correspond to the first occurrence, we say that it corresponds to the second or third occurrence (we do not make distinction within them as it is not needed). 

\begin{figure} 
\centering
\resizebox{7cm}{6cm}{
\begin{tikzpicture}
	\draw [very thick][green] (0:3.7) arc [radius=3.7 cm, start angle=0, end angle=20];
	  \draw [very thick][red] (20:3.7) arc [radius=3.7 cm, start angle=20, end angle=120];
  \draw [very thick][green] (120:3.7) arc [radius=3.7 cm, start angle=120, end angle=360];
	

  \foreach \phi in {1,...,18}{
    \coordinate (A\phi) at (360/18*\phi:3.7cm);
    \coordinate (B\phi) at (360/18*\phi +30:3.2cm);
  }
  \foreach \phi in {1,4,7,10,13,16}{
      \draw[fill = green] (A\phi)circle(1. 2 mm);
		}
 \foreach \phi in {2,5,8,11,14,17}{
      \draw[fill = pink] (A\phi)circle(1. 2 mm);
		}
 \foreach \phi in {3,6,9,12,15,18}{
      \draw[fill = cyan] (A\phi)circle(1. 2 mm);
		}

  \foreach \phi in {1,4,7,10,13,16}{
      \draw[fill = green] (B\phi)circle(1. 2 mm);
		}
 \foreach \phi in {2,5,8,11,14,17}{
      \draw[fill = pink] (B\phi)circle(1. 2 mm);
		}
 \foreach \phi in {3,6,9,12,15,18}{
      \draw[fill = cyan] (B\phi)circle(1. 2 mm);	
   }

  \foreach \phi [count = \s from 2] in {1,...,16}
  {
  \draw[-] (B\phi) -- (A\s);
    } 
    \foreach \phi [count = \s from 3] in {1,...,16}
  {
  \draw[-] (B\phi) -- (A\s);
    }   
    \draw[-] (B17) -- (A18);
    \draw[-] (B17) -- (A1);
    \draw[-] (B18) -- (A1);
    \draw[-] (B18) -- (A2);

	\node at (A1)[right= 1mm]{$u_{1} \hspace{0.4mm} (1,1)$};
	\node at (A2)[right= 1mm]{$w_{42}\hspace{0.4mm}\overline{(1,3)}$};
	\node at (A3)[right= 2mm]{$v_{1}\hspace{0.4mm}(1,2)$};
	\node at (A4)[above right= 1mm]{$u_{2} \hspace{0.4mm}\overline{(1,1)}$};
	\node at (A5)[above = 3mm]{$w_{41}\hspace{0.4mm} (1,3)$};
	\node at (A6)[above left= 1mm]{$v_{2}\hspace{0.4mm} \overline{(1,2)}$};
	\node at (A7)[above left= 1mm]{$u_{13}\hspace{0.4mm}(2,1)$};
	\node at (A11)[left= 1mm]{$w_{1}\hspace{0.4mm}(2,3)$};
	\node at (A9)[left= 0mm]{$v_{13}\hspace{0.2mm}(2,2)$};
	\node at (A10)[left= 1mm]{$u_{14}\hspace{0.4mm}\overline{(2,1)}$};
	\node at (A8)[left= 1mm]{$w_{2}\hspace{0.4mm}\overline{(2,3)}$};
	\node at (A12)[left= 2mm]{$v_{14}\hspace{0.4mm}\overline{(2,2)}$};
	\node at (A13)[below left= 1mm]{$u_{15}\hspace{0.4mm}(3,1)$};
	\node at (A17)[right= 1mm]{$w_{3}\hspace{0.4mm}(3,3)$};
	\node at (A15)[below right= 1mm]{$v_{15}\hspace{0.4mm}(3,2)$};
	\node at (A16)[right= 1mm]{$u_{16}\hspace{0.4mm}\overline{(3,1)}$};
	\node at (A14)[below= 1mm]{$w_{4}\hspace{0.4mm}\overline{(3,3)}$};
	\node at (A18)[right= 1mm]{$v_{16}\hspace{0.4mm}\overline{(3,2)}$};

	\node at (B1)[below= 1mm]{$u_{\mhyphen1}$};
	\node at (B2)[below= 1mm]{$w_{40}$};
	\node at (B3)[below= 1mm]{$v_{\mhyphen2}$};
	\node at (B4)[below= 1mm]{$u_{\mhyphen2}$};
	\node at (B5)[below right= 1mm]{$w_{\mhyphen1}$};
	\node at (B6)[below right= 1mm]{$v_{\mhyphen13}$};
	\node at (B7)[right= 1mm]{$u_{\mhyphen13}$};
	\node at (B8)[right= 1mm]{$w_{\mhyphen2}$};
	\node at (B9)[above right= 1mm]{$v_{\mhyphen14}$};
	\node at (B10)[above = 1mm]{$u_{\mhyphen14}$};
	\node at (B11)[above= 1mm]{$w_{\mhyphen3}$};
	\node at (B12)[above= 1mm]{$v_{\mhyphen15}$};
	\node at (B13)[above= 1mm]{$u_{\mhyphen15}$};
	\node at (B14)[above= 1mm]{$w_{\mhyphen4}$};
	\node at (B15)[above left = 1mm]{$v_{\mhyphen16}$};
	\node at (B16)[above left= 1mm]{$u_{\mhyphen16}$};
	\node at (B17)[left= 1mm]{$w_{\mhyphen5}$};
	\node at (B18)[below left= 1mm]{$v_{\mhyphen1}$};

\end{tikzpicture}}
\caption{The labelling of variable gadgets $G^{\phi}_1$ for $n=6$. The hyperedges with a red arc correspond to the first occurrence of variables. Notice the difference in labelling of $W$ nodes. Literal-nodes are all labelled positive. Free-nodes are all labelled negative except the $W$ node connecting the two first occurrence literal-nodes. \label{fig:labels}}
\end{figure}
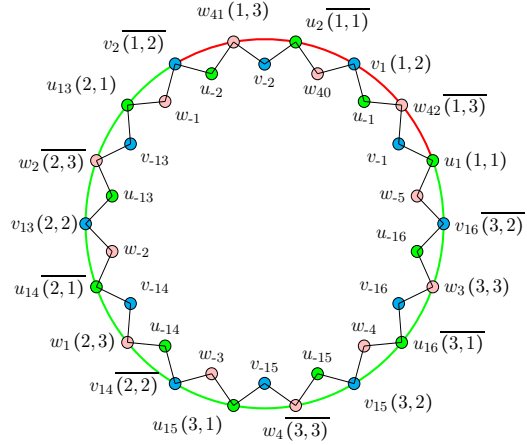

We first give the labelling corresponding to the nodes corresponding to the second and the third occurrences of variables:
\begin{itemize}
\item The position 1 literal-nodes $(i,1)^k$ and $\overline{(i,1)}^k$ in $G^{\phi}_k$ are labelled\\ $u_{2n+ 2(i-2) + 4(k-1)+1}$ and $u_{2n + 2(i-2) + 4(k-1) +2}$, respectively, $\forall k$, for $i =2,3$.

\item Similarly, the position 2 literal-nodes $(i,2)^k$ and $\overline{(i,2)}^k$ are labelled\\ $v_{2n + 2(i-2) + 4(k-1)+1}$ and $v_{2n + 2(i-2) + 2(k-1) +2}$, respectively, $\forall k$, for $i =2,3$.

\item Likewise, the position 3 literal-nodes $(i,3)^k$ and $\overline{(i,3)}^k$ are labelled\\ $w_{2n + 2(i-2) + 4(k-1)+1}$ and $w_{2n + 2(i-2) + 4(k-1) +2}$ respectively, $\forall k$, for $i =2,3$.

\item The 4 free $U$ nodes in $G^{\phi}_k$ corresponding to the second or third occurrence are labelled $u_{-2n -4(k-1)-\ell}$, $\ell \in [4]$ (see Figure \ref{fig:labels} to see which ones exactly).

\item Similarly, the 4 such free $V$ nodes in $G^{\phi}_k$ are labelled $v_{-2n-4(k-1)-\ell}$, $\ell \in [4]$.

\item Finally, the 5 such free $W$ nodes in $G^{\phi}_k$ are labelled $w_{-5(k-1)-\ell}$, $\ell \in [5]$.

\item All the 2-clauses also correspond to the second and third occurrence of variables. Each such 2-clause will have a corresponding hyperedge. Here we have a freedom to choose the position for the free node. We invariably choose it to be at the third position. Thus the first two nodes of the hyperedges will take the relevant literals as per the clause, while the $W$ nodes will be free ones. For the $s-$th 2-clause (under an arbitrary order), $s \in [m]$  label the $W$ nodes as $w_{-5n-s}$. 

\item We take all the hyperedges that include all the above labelled free $W$ nodes. This will include all the 2-clause hyperedges along with 5 hyperedges per variable gadget. Now the tuple of $U$ and $V$ coordinates $(u_a, v_b)$ of these hyperedges will have a partial order among themselves. We shuffle their $W$ coordinates so that the order of the $W$ coordinates becomes the reverse of the order of the tuple $(u_a,v_b)$. We can do this without disturbing other hyperedges because these $W$ nodes are all free and are used in only one hyperedge each.

\end{itemize}

Now it remains to label the literal-nodes corresponding to the first occurrences and the free nodes pertaining to them. They are labelled differently so as to ensure that the antichain property indeed holds when the hyperedges connecting these would be compared with the 3-clause hyperedges. One key difference is that the labels of $W$ nodes for $G^{\phi}_k$ in this case also depend on whether $k \equiv 1, 2$ or $0 \mod 3$.

\begin{itemize}

\item The position 1 literal-nodes $(1,1)^k$ and $\overline{(1,1)}^k$ in $G^{\phi}_k$ are labelled $u_{2(k-1)+1}$ and  $u_{2(k-1)+2}$, respectively, $\forall k$.

\item The position 2 literal-nodes $(1,2)^k$ and $\overline{(1,2)}^k$ are labelled $v_{2(k-1)+1}$ and $v_{2(k-1)+2}$, respectively, $ \forall k$.

\item The position 3 literal-nodes $(1,3)^k$ and $\overline{(1,3)}^k$ get the labels 
$w_{7n -9(q-1)}$ and $w_{7n -9(q-1)-1}$, respectively, for $k = 3(q-1) +1$, whereas
 $w_{7n - 9(q-1) -3}$ and $w_{7n -9(q-1)-4}$, respectively, for $k = 3(q-1) +2$, and
 $w_{7n -9(q-1)-5}$ and $w_{7n -9(q-1)-6}$, respectively, for $k = 3(q-1) +3$

\item The 2 free $U$ nodes corresponding to the first occurrence of the variable get the labels $u_{-2(k-1) -1}$ and $u_{- 2(k-1)-2}$, respectively. Similarly such free $V$ nodes get the labels $v_{-2(k-1)-1}$ and $v_{-2(k-1)-2}$ respectively, whereas the such free $W$ nodes (1 per gadget) get  the labels 
$w_{7n - 9(q-1)-2}$ for $k = 3(q-1)+1$ and $w_{7n - 9(q-1)-7}$ for $k=3(q-1)+2$, and  $w_{7n - 9(q-1)-8}$ for $k = 3(q-1)+3$.

\end{itemize}

Figure \ref{fig:fullgadget} illustrates the labelling for $k=1, 2, 3$ when $n=6$.

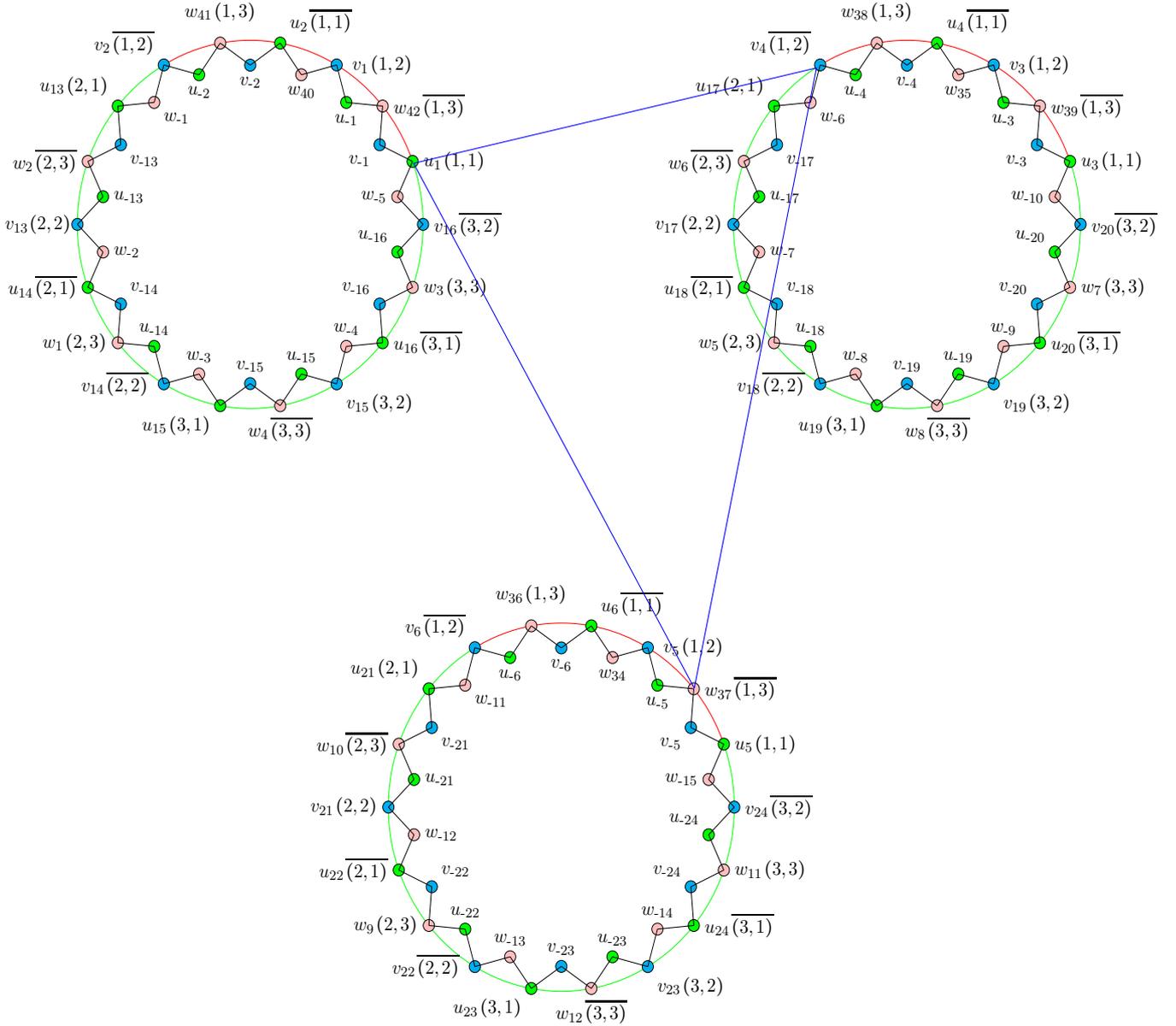
\begin{figure} 
\begin{minipage}{0.5\textwidth}
\hspace{-1 cm}
\resizebox{8cm}{7cm}{
\begin{tikzpicture}[remember picture]
	\draw [green] (0:3.7) arc [radius=3.7 cm, start angle=0, end angle=20];
	  \draw [red] (20:3.7) arc [radius=3.7 cm, start angle=20, end angle=120];
  \draw [green] (120:3.7) arc [radius=3.7 cm, start angle=120, end angle=360];
	

  \foreach \phi in {1,...,18}{
    \coordinate (A\phi) at (360/18*\phi:3.7cm);
    \coordinate (B\phi) at (360/18*\phi +30:3.2cm);
  }
  \foreach \phi in {1,4,7,10,13,16}{
      \draw[fill = green] (A\phi)circle(1. 2 mm);
		}
 \foreach \phi in {2,5,8,11,14,17}{
      \draw[fill = pink] (A\phi)circle(1. 2 mm);
		}
 \foreach \phi in {3,6,9,12,15,18}{
      \draw[fill = cyan] (A\phi)circle(1. 2 mm);
		}

  \foreach \phi in {1,4,7,10,13,16}{
      \draw[fill = green] (B\phi)circle(1. 2 mm);
		}
 \foreach \phi in {2,5,8,11,14,17}{
      \draw[fill = pink] (B\phi)circle(1. 2 mm);
		}
 \foreach \phi in {3,6,9,12,15,18}{
      \draw[fill = cyan] (B\phi)circle(1. 2 mm);	
   }

  \foreach \phi [count = \s from 2] in {1,...,16}
  {
  \draw[-] (B\phi) -- (A\s);
    } 
    \foreach \phi [count = \s from 3] in {1,...,16}
  {
  \draw[-] (B\phi) -- (A\s);
    }   
    \draw[-] (B17) -- (A18);
    \draw[-] (B17) -- (A1);
    \draw[-] (B18) -- (A1);
    \draw[-] (B18) -- (A2);

	\node (n1) at (A1)[right= 1mm]{$u_{1} \hspace{0.4mm} (1,1)$};
	\node at (A2)[right= 1mm]{$w_{42}\hspace{0.4mm}\overline{(1,3)}$};
	\node at (A3)[right= 2mm]{$v_{1}\hspace{0.4mm}(1,2)$};
	\node at (A4)[above right= 1mm]{$u_{2} \hspace{0.4mm}\overline{(1,1)}$};
	\node at (A5)[above = 3mm]{$w_{41}\hspace{0.4mm} (1,3)$};
	\node at (A6)[above left= 1mm]{$v_{2}\hspace{0.4mm} \overline{(1,2)}$};
	\node at (A7)[above left= 1mm]{$u_{13}\hspace{0.4mm}(2,1)$};
	\node at (A11)[left= 1mm]{$w_{1}\hspace{0.4mm}(2,3)$};
	\node at (A9)[left= 0mm]{$v_{13}\hspace{0.2mm}(2,2)$};
	\node at (A10)[left= 1mm]{$u_{14}\hspace{0.4mm}\overline{(2,1)}$};
	\node at (A8)[left= 1mm]{$w_{2}\hspace{0.4mm}\overline{(2,3)}$};
	\node at (A12)[left= 2mm]{$v_{14}\hspace{0.4mm}\overline{(2,2)}$};
	\node at (A13)[below left= 1mm]{$u_{15}\hspace{0.4mm}(3,1)$};
	\node at (A17)[right= 1mm]{$w_{3}\hspace{0.4mm}(3,3)$};
	\node at (A15)[below right= 1mm]{$v_{15}\hspace{0.4mm}(3,2)$};
	\node at (A16)[right= 1mm]{$u_{16}\hspace{0.4mm}\overline{(3,1)}$};
	\node at (A14)[below= 1mm]{$w_{4}\hspace{0.4mm}\overline{(3,3)}$};
	\node at (A18)[right= 1mm]{$v_{16}\hspace{0.4mm}\overline{(3,2)}$};

	\node at (B1)[below= 1mm]{$u_{\mhyphen1}$};
	\node at (B2)[below= 1mm]{$w_{40}$};
	\node at (B3)[below= 1mm]{$v_{\mhyphen2}$};
	\node at (B4)[below= 1mm]{$u_{\mhyphen2}$};
	\node at (B5)[below right= 1mm]{$w_{\mhyphen1}$};
	\node at (B6)[below right= 1mm]{$v_{\mhyphen13}$};
	\node at (B7)[right= 1mm]{$u_{\mhyphen13}$};
	\node at (B8)[right= 1mm]{$w_{\mhyphen2}$};
	\node at (B9)[above right= 1mm]{$v_{\mhyphen14}$};
	\node at (B10)[above = 1mm]{$u_{\mhyphen14}$};
	\node at (B11)[above= 1mm]{$w_{\mhyphen3}$};
	\node at (B12)[above= 1mm]{$v_{\mhyphen15}$};
	\node at (B13)[above= 1mm]{$u_{\mhyphen15}$};
	\node at (B14)[above= 1mm]{$w_{\mhyphen4}$};
	\node at (B15)[above left = 1mm]{$v_{\mhyphen16}$};
	\node at (B16)[above left= 1mm]{$u_{\mhyphen16}$};
	\node at (B17)[left= 1mm]{$w_{\mhyphen5}$};
	\node at (B18)[below left= 1mm]{$v_{\mhyphen1}$};

\end{tikzpicture}
}
\end{minipage}
\hspace{1cm}
\begin{minipage}{.5\textwidth}
\resizebox{8cm}{7cm}{
\begin{tikzpicture}[remember picture]
	\draw [green] (0:3.7) arc [radius=3.7 cm, start angle=0, end angle=20];
	  \draw [red] (20:3.7) arc [radius=3.7 cm, start angle=20, end angle=120];
  \draw [green] (120:3.7) arc [radius=3.7 cm, start angle=120, end angle=360];
	

  \foreach \phi in {1,...,18}{
    \coordinate (C\phi) at (360/18*\phi:3.7cm);
    \coordinate (D\phi) at (360/18*\phi +30:3.2cm);
  }
  \foreach \phi in {1,4,7,10,13,16}{
      \draw[fill = green] (C\phi)circle(1. 2 mm);
		}
 \foreach \phi in {2,5,8,11,14,17}{
      \draw[fill = pink] (C\phi)circle(1. 2 mm);
		}
 \foreach \phi in {3,6,9,12,15,18}{
      \draw[fill = cyan] (C\phi)circle(1. 2 mm);
		}

  \foreach \phi in {1,4,7,10,13,16}{
      \draw[fill = green] (D\phi)circle(1. 2 mm);
		}
 \foreach \phi in {2,5,8,11,14,17}{
      \draw[fill = pink] (D\phi)circle(1. 2 mm);
		}
 \foreach \phi in {3,6,9,12,15,18}{
      \draw[fill = cyan] (D\phi)circle(1. 2 mm);	
   }

  \foreach \phi [count = \s from 2] in {1,...,16}
  {
  \draw[-] (D\phi) -- (C\s);
    } 
    \foreach \phi [count = \s from 3] in {1,...,16}
  {
  \draw[-] (D\phi) -- (C\s);
    }   
    \draw[-] (D17) -- (C18);
    \draw[-] (D17) -- (C1);
    \draw[-] (D18) -- (C1);
    \draw[-] (D18) -- (C2);	

	\node at (C1)[right= 1mm]{$u_{3} \hspace{0.4mm} (1,1)$};
	\node at (C2)[right= 1mm]{$w_{39}\hspace{0.4mm}\overline{(1,3)}$};
	\node at (C3)[right= 2mm]{$v_{3}\hspace{0.4mm}(1,2)$};
	\node at (C4)[above right= 1mm]{$u_{4} \hspace{0.4mm}\overline{(1,1)}$};
	\node at (C5)[above = 3mm]{$w_{38}\hspace{0.4mm} (1,3)$};
	\node (n2) at (C6)[above left= 1mm]{$v_{4}\hspace{0.4mm} \overline{(1,2)}$}(C6);
	\node at (C7)[above left= 1mm]{$u_{17}\hspace{0.4mm}(2,1)$};
	\node at (C11)[left= 1mm]{$w_{5}\hspace{0.4mm}(2,3)$};
	\node at (C9)[left= 1mm]{$v_{17}\hspace{0.4mm}(2,2)$};
	\node at (C10)[left= 1mm]{$u_{18}\hspace{0.4mm}\overline{(2,1)}$};
	\node at (C8)[left= 1mm]{$w_{6}\hspace{0.4mm}\overline{(2,3)}$};
	\node at (C12)[left= 2mm]{$v_{18}\hspace{0.4mm}\overline{(2,2)}$};
	\node at (C13)[below left= 1mm]{$u_{19}\hspace{0.4mm}(3,1)$};
	\node at (C17)[right= 1mm]{$w_{7}\hspace{0.4mm}(3,3)$};
	\node at (C15)[below right= 1mm]{$v_{19}\hspace{0.4mm}(3,2)$};
	\node at (C16)[right= 1mm]{$u_{20}\hspace{0.4mm}\overline{(3,1)}$};
	\node at (C14)[below= 1mm]{$w_{8}\hspace{0.4mm}\overline{(3,3)}$};
	\node at (C18)[right= 1mm]{$v_{20}\hspace{0mm}\overline{(3,2)}$};

	\node at (D1)[below= 1mm]{$u_{\mhyphen3}$};
	\node at (D2)[below= 1mm]{$w_{35}$};
	\node at (D3)[below= 1mm]{$v_{\mhyphen4}$};
	\node at (D4)[below= 1mm]{$u_{\mhyphen4}$};
	\node at (D5)[below right= 1mm]{$w_{\mhyphen6}$};
	\node at (D6)[below right= 1mm]{$v_{\mhyphen17}$};
	\node at (D7)[right= 1mm]{$u_{\mhyphen17}$};
	\node at (D8)[right= 1mm]{$w_{\mhyphen7}$};
	\node at (D9)[above right= 1mm]{$v_{\mhyphen18}$};
	\node at (D10)[above = 1mm]{$u_{\mhyphen18}$};
	\node at (D11)[above= 1mm]{$w_{\mhyphen8}$};
	\node at (D12)[above= 1mm]{$v_{\mhyphen19}$};
	\node at (D13)[above= 1mm]{$u_{\mhyphen19}$};
	\node at (D14)[above= 1mm]{$w_{\mhyphen9}$};
	\node at (D15)[above left = 1mm]{$v_{\mhyphen20}$};
	\node at (D16)[above left= 1mm]{$u_{\mhyphen20}$};
	\node at (D17)[left= 1mm]{$w_{\mhyphen10}$};
	\node at (D18)[below left= 1mm]{$v_{\mhyphen3}$};
       
\end{tikzpicture}}
\end{minipage}

\vspace{2cm}
\hspace{3.7cm}
\resizebox{8cm}{7cm}{ 
\begin{tikzpicture}[remember picture]

	\draw [green] (0:3.7) arc [radius=3.7 cm, start angle=0, end angle=20];
	  \draw [red] (20:3.7) arc [radius=3.7 cm, start angle=20, end angle=120];
  \draw [green] (120:3.7) arc [radius=3.7 cm, start angle=120, end angle=360];
	

  \foreach \phi in {1,...,18}{
    \coordinate (E\phi) at (360/18*\phi:3.7cm);
    \coordinate (F\phi) at (360/18*\phi +30:3.2cm);
  }
  \foreach \phi in {1,4,7,10,13,16}{
      \draw[fill = green] (E\phi)circle(1. 2 mm);
		}
 \foreach \phi in {2,5,8,11,14,17}{
      \draw[fill = pink] (E\phi)circle(1. 2 mm);
		}
 \foreach \phi in {3,6,9,12,15,18}{
      \draw[fill = cyan] (E\phi)circle(1. 2 mm);
		}

  \foreach \phi in {1,4,7,10,13,16}{
      \draw[fill = green] (F\phi)circle(1. 2 mm);
		}
 \foreach \phi in {2,5,8,11,14,17}{
      \draw[fill = pink] (F\phi)circle(1. 2 mm);
		}
 \foreach \phi in {3,6,9,12,15,18}{
      \draw[fill = cyan] (F\phi)circle(1. 2 mm);	
   }

  \foreach \phi [count = \s from 2] in {1,...,16}
  {
  \draw[-] (F\phi) -- (E\s);
    } 
    \foreach \phi [count = \s from 3] in {1,...,16}
  {
  \draw[-] (F\phi) -- (E\s);
    }   
    \draw[-] (F17) -- (E18);
    \draw[-] (F17) -- (E1);
    \draw[-] (F18) -- (E1);
    \draw[-] (F18) -- (E2);	
     
	\node at (E1)[right= 1mm]{$u_{5} \hspace{0.4mm} (1,1)$};
	\node at (E2)[right= 1mm]{$w_{37}\hspace{0.4mm}\overline{(1,3)}$};
	\node at (E3)[right= 2mm]{$v_{5}\hspace{0.4mm}(1,2)$};
	\node at (E4)[above right= 1mm]{$u_{6} \hspace{0.4mm}\overline{(1,1)}$};
	\node (n3) at (E5)[above = 3mm]{$w_{36}\hspace{0.4mm} (1,3)$}(E5);
	\node at (E6)[above left= 1mm]{$v_{6}\hspace{0.4mm} \overline{(1,2)}$};
	\node at (E7)[above left= 1mm]{$u_{21}\hspace{0.4mm}(2,1)$};
	\node at (E11)[left= 1mm]{$w_{9}\hspace{0.4mm}(2,3)$};
	\node at (E9)[left= 1mm]{$v_{21}\hspace{0.4mm}(2,2)$};
	\node at (E10)[left= 1mm]{$u_{22}\hspace{0.4mm}\overline{(2,1)}$};
	\node at (E8)[left= 1mm]{$w_{10}\hspace{0.4mm}\overline{(2,3)}$};
	\node at (E12)[left= 2mm]{$v_{22}\hspace{0.4mm}\overline{(2,2)}$};
	\node at (E13)[below left= 1mm]{$u_{23}\hspace{0.4mm}(3,1)$};
	\node at (E17)[right= 1mm]{$w_{11}\hspace{0.4mm}(3,3)$};
	\node at (E15)[below right= 1mm]{$v_{23}\hspace{0.4mm}(3,2)$};
	\node at (E16)[right= 1mm]{$u_{24}\hspace{0.4mm}\overline{(3,1)}$};
	\node at (E14)[below= 1mm]{$w_{12}\hspace{0.4mm}\overline{(3,3)}$};
	\node at (E18)[right= 1mm]{$v_{24}\hspace{0.4mm}\overline{(3,2)}$};

	\node at (F1)[below= 1mm]{$u_{\mhyphen5}$};
	\node at (F2)[below= 1mm]{$w_{34}$};
	\node at (F3)[below= 1mm]{$v_{\mhyphen6}$};
	\node at (F4)[below= 1mm]{$u_{\mhyphen6}$};
	\node at (F5)[below right= 1mm]{$w_{\mhyphen11}$};
	\node at (F6)[below right= 1mm]{$v_{\mhyphen21}$};
	\node at (F7)[right= 1mm]{$u_{\mhyphen21}$};
	\node at (F8)[right= 1mm]{$w_{\mhyphen12}$};
	\node at (F9)[above right= 1mm]{$v_{\mhyphen22}$};
	\node at (F10)[above = 1mm]{$u_{\mhyphen22}$};
	\node at (F11)[above= 1mm]{$w_{\mhyphen13}$};
	\node at (F12)[above= 1mm]{$v_{\mhyphen23}$};
	\node at (F13)[above= 1mm]{$u_{\mhyphen23}$};
	\node at (F14)[above= 1mm]{$w_{\mhyphen14}$};
	\node at (F15)[above left = 1mm]{$v_{\mhyphen24}$};
	\node at (F16)[above left= 1mm]{$u_{\mhyphen24}$};
	\node at (F17)[left= 1mm]{$w_{\mhyphen15}$};
	\node at (F18)[below left= 1mm]{$v_{\mhyphen5}$};     
     
\end{tikzpicture}}

\begin{tikzpicture}[remember picture,overlay]
\draw[-, color = blue] (5.6,13.92) to (11.8,15.4);
\draw[-, color = blue] (11.8,15.4) to (9.92,5.78);
\draw[-, color = blue] (9.92,5.78) to (5.6,13.92);
\end{tikzpicture}
\caption{The variable gadgets $G^{\phi}_k, k=1,2,3$ for $n=6$. The hyperedges with a red arc correspond to the first occurrence of variables. Notice the difference in labelling of $W$ nodes. The clause edge corresponds to the clause $Y_1 \lor \overline{Y_2} \lor \overline{Y_3}$.\label{fig:fullgadget} }
\end{figure}

We now show that with the above ordering, the set of hyperedges $E$ of the hypergraph $G^{\phi}$ indeed is an antichain.

To simplify the argument, we divide the set of hyperedges in two parts $E =\mathcal{A} \cupdot\mathcal{B}$:
\begin{itemize}
\item Set $\mathcal{A}$: This set consists of local hyperedges in which both the literal-nodes correspond to the first occurrence of variables. We also include the 3-clause hyperedges.
\item Set $\mathcal{B}$: The set consisting of the remaining hyperedges, i.e., the ones in which at least one of the literal-nodes correspond to the second or the third occurrences of variables. We also include the 2-clause hyperedges.
\end{itemize}

We first argue that the subset $\mathcal{B}$ is an antichain.\\
We note that in $\mathcal{B}$, the literal-nodes are all labelled positive $(2n + 2(i-2) + 4(k-1)+j)$, $i \in \{2,3 \}$, $k\in [n]$, $j \in [4]$, while the free-nodes are all labelled negative $(-2n - 4(k-1)-\ell)$, $k \in [n]$, $\ell \in [4]$, for $U$ and $V$ nodes, whereas $(-5(k-1) - \ell)$, $k \in [n]$, $\ell \in [5]$ for $W$ nodes, and it is easy to verify that as the labels of the literal-node increase, the labels along the free-node decrease. \\
Now we take two arbitrary elements of the the set $\mathcal{B}$. Recall that every hyperedge in $\mathcal{B}$ contains exactly one free-node. Now the free-node will either be in the same partition or in different ones. \\
If they are in different ones, we are done because we have a pair of coordinates such that, in one of them, one hyperedge is labelled positive while the other is labelled negative, while the opposite happens in the other coordinate.
If the free nodes are in the same coordinate, we are done again because as the literal coordinate increases, the free coordinate decreases. \\
Note that, since we have already shuffled the nodes with free $W$ nodes taking the 2-clause hyperedges into account, the 2-clause hyperedges are also taken care off.

Now, we argue that given an arbitrary hyperedge of the set $\mathcal{A}$, and an arbitrary hyperedge of the set $\mathcal{B}$, they are incomparable too.\\
For this, we notice that, the labels of the $W$ nodes of all the hyperedges in $\mathcal{A}$ are higher than the labels of all the $W$ nodes of the hyperedges in $\mathcal{B}$. For this, we simply note that range of the $W$ labels of the second and the third occurrence (set $\mathcal{B}$) is $\{-5n, \ldots ,4n\} \setminus \{ 0 \}$, whereas the $W$ labels of the first occurrence ($\mathcal{A}$) has the range from $\{4n+1, \ldots ,7n\}$. Secondly, notice that the labels of the $U$ and $V$ literal-nodes at the second and third occurrences, i.e., from the edges of set $\mathcal{B}$ (range $\{2n+1, \ldots, 6n \}$) are all higher than that of the first occurrence i.e. from the edges of the set $\mathcal{A}$ (range $\{1, \ldots, 2n \}$).
 
We are done since for every pair of hyperedges $(h_a,h_b)$, where $h_a \in \mathcal{A}$ and $h_b \in \mathcal{B}$, we have that the $W$coordinate of $h_a$ will be higher than that of $h_b$, whereas the among the other two coordinates, whichever is positive (i.e. corresponds to a literal-node) in $h_b$ will be higher than the correpsonding coordinate in $h_a$. 

Finally we are left to show that $\mathcal{A}$ is also an antichain. 

We remind the reader that we have named the variables such that every 3-clause comprises of variables from only one triple of variables i.e. every 3-clause involves $Y_{3(q-1) +1},Y_{3(q-1) +2},Y_{3(q-1) +3} $ at first, second and third position respectively, for some $q>0$. 
Now first of all we notice that for a pair of hyperedges which come from a different triple of variables, we are done, because $W$ coordinates of a higher triple are all lower than the $W$ coordinates of a lower triple, since the labels are $(7n - 9(q-1) -\ell), \ell \in \{0, \ldots,8 \} $ for $q-$th triple of variables $Y_{3(q-1) +1},Y_{3(q-1) +2},Y_{3(q-1) +3} $, whereas the positive coordinate among $U$ or $V$ will be higher for the higher triple (lables are $4(k-1)+\ell, \ell \in [2]$).
When they are in the same triple of variables, it helps to remark that there are three kinds of hyperedges in $\mathcal{A}$, i.e. $\mathcal{A} = \mathcal{A}_1 \cupdot \mathcal{A}_2 \cupdot \mathcal{A}_c$:
\begin{itemize}
\item $\mathcal{A}_1$: the ones where the free-nodes belong to $U$ or $V$. These hyperedges have exactly one negative coordinate, which will either be in the $U$ coordinate or the $V$ coordinate.
\item $\mathcal{A}_2$: the ones where the free nodes belong to $W$. All the coordinates are positive.
\item $\mathcal{A}_c$: the set of 3-clause hyperedges: All the coordinates are again positive, as all the nodes are literal-nodes.
\end{itemize} 

Now, we need to compare the hyperedges of $\mathcal{A}_1, \mathcal{A}_1$ and $\mathcal{A}_c$ with each other and within themselves when they all belong to the same triple of variables, say $q-$th triple, $Y_{3(q-1) +1},Y_{3(q-1) +2},Y_{3(q-1) +3} $ for some $q \in [t]$. We remind the reader that the labelling of the $W$ nodes that appear in $\mathcal{A}$ varies depending on whether the corresponding index $k = 3(q-1) +1$, $3(q-1) +1,$ or $3(q-1) +3$.

There are six possible cases:

\begin{enumerate}[i.]
\item $\mathcal{A}_1$: same proof that was given for the elements of $\mathcal{B}$, where also we had exactly one negative coordinate.

\item $\mathcal{A}_2$: for the higher variable, the $W$ coordinate is lower (labels are $7n - 9(q-1) -2$ for $ k = 3(q-1)+1$, $7n - 9(q-1) -7$ for $k = 3(q-1)+2$ and $7n - 9(q-1) -8$ for $k = 3(q-1)+3$), while the other two coordinates are higher, since both $U$ and $V$ labels are $2(k-1)+1,2$.

\item $\mathcal{A}_c$: two different clauses clearly belong to different triple of variables: already taken care of above.

\item $\mathcal{A}_1-\mathcal{A}_2$ $(h_{a_1} \in \mathcal{A}_1, h_{a_2} \in \mathcal{A}_2)$: Here we have two cases: namely, either $h_{a_1}$ belonging to a higher variable, or $h_{a_1}$ belonging to the same or lower variable as compared to $h_{a_2}$. In the first case, one of the $U$ or $V$ coordinate of $h_{a_1}$ (whichever is positive) will be higher, while the other coordinate being negative will be lower than that of $h_{a_2}$ (whose all coordinates are positive). In the second case, we note that the $W$ coordinate of $h_{a_2}$ will be lower, since for the same variable, it has the lowest $W$ coordinate (being $7n - 9(q-1) -2$ versus $7n - 9(q-1)$, $7n - 9(q-1)-1$ for $k = 3(q-1) +1 $, $7n - 9(q-1) -7$ versus $7n - 9(q-1)-3$, $7n - 9(q-1)-4$ for $k = 3(q-1) +2$ and $7n - 3(k-1) -8$ versus $7n - 9(q-1)-5$, $7n - 9(q-1)-6$ for $k = 3(q-1) +3$), and as we go up the variables, $W$ coordinate decreases, while at least one of the other two coordinate will be higher, i.e., in the coordinate in which $h_{a_1}$ is negative and $h_{a_2}$ is positive.

\item $\mathcal{A}_1-\mathcal{A}_c$ $(h_{a_1} \in \mathcal{A}_1, h_{a_c} \in \mathcal{A}_c)$: When $h_{a_1}$ belongs to $G^{\phi}_{3(q-1)+1}$ or $G^{\phi}_{3(q-1)+2}$, its $W$ coordinate will be higher than that of $h_{a_c}$, since for the clause hyperedge $h_{a_c}$, the $W$ node is picked from $G^{\phi}_{3(q-1)+3}$.  However, one of the other two coordinates in $h_{a_1}$ is negative. So, it will be lower than that of $h_{a_c}$. So, we are done. When $h_{a_1}$ belongs to $G^{\phi}_{3(q-1)+3}$, both $h_{a_1}$ and $h_{a_c}$ might share the $W$ coordinate. However, in such $h_{a_1}$, the positive node among the $U$ and $V$ coordinate will be higher than that of $h_{a_c}$, since $h_{a_1}$ comes from the highest variable among the triple, and both $U$ and $V$ coordinate increase with higher variables, being labelled $2(k-1)+1,2$, whereas the negative coordinate will of course be lower than that of $h_{a_c}$ which has no negative coordinate.

\item $\mathcal{A}_2-\mathcal{A}_c$ $(h_{a_2} \in \mathcal{A}_2, h_{a_c} \in \mathcal{A}_c)$: Here when $h_{a_2} \in G^{\phi}_{3(q-1)+1}$, its $V$ coordinate will be less since $Y_{3(q-1)+1}$ is the lowest variable, whereas the $V$ coordinate of the clause hyperedge $h_{a_c}$ is picked from $G^{3(q-1)+2}$. However, the $W$ coordinate will be higher for $h_{a_2}$ as it is labelled $7n-9(q-1)-2$, whereas the clause gets the $W$ coordinate corresponding to the $G^{\phi}_{3(q-1)+3}$ and hence the label $7n-9(q-1)-5$ or $7n-9(q-1)-6$ .
Whereas when $h_{a_2} \in G^{\phi}_{3(q-1)+2}$ or $G^{\phi}_{3(q-1)+3}$, the $W$ coordinate will be lower for $h_{a_2}$ (labelled $7n-9(q-1)-7$ or $7n-9(q-1)-8$ respectively) than $h_{a_c}$ (labelled $7n-9(q-1)-5$ or $7n-9(q-1)-6$), whereas the $U$ coordinate of $h_{a_2}$ will be higher, since the clause hyperedge $h_{a_c}$ gets the $U$ coordinate corresponding to variable $Y_{3(q-1)+1}$ which is the lowest variable within the triple and hence has the lowest $U$ coordinate ($U$ labels being $2(k-1)+1,2$).
\end{enumerate}
\end{proof}

\subsection{NP-Hardness of minrank} \label{sec:hard}

In this section we prove $\NP$-hardness of $\HMinRk$ by reducing it to the following problem:

\begin{problem}{$\HQSAT_{S, F}$}
  Given a set of quadratic forms with coefficients from $S$, represented by lists of coefficients, determine if it has a common zero over $F$.
\end{problem}

To implement the reduction, we need to perform linear algebra computations with elements of the field.
\begin{definition}
  An \emph{effective field} is a finite or countable field $F$ with a binary encoding of elements of $F$ such that the following operations can be performed in time polynomial in the length of the encoding of arguments:
  \begin{itemize}
    \item multiplication and addition of two elements over $F$,
    \item multiplication of an arbitrary number of matrices over $F$ (follows from the first item),
    \item equality comparison of two elements of $F$,
    \item division of two elements of $F$ (if the denominator is zero, the algorithm should fail).
  \end{itemize}
Furthermore, we want that polynomial identity testing is in $\BPP$, that is, there is a $\BPP$-machine that
given an algebraic circuit computing a polynomial over $F$, decides in whether this polynomial is identically zero.
\end{definition}

In our paper, we usually deal with polynomials over uncountable fields like $\bbC$. In the algebraic complexity setting,
this is no problem. However, when we want to compute with Turing machines, we have to restrict ourselves to appropriate
subfields. This is modelled by effective fields.
In particular, $\bbQ$ is effective and the natural effective subfield of $\bbR$ and $\bbQ + i \bbQ$ 
is natural choice for $\bbC$. Finite fields are effective,
when we drop the last condition about identity testing, which we only need in the second part of this section.

Efficient multiplication of several matrices implies that products and linear combinations of elements can also be computed in polynomial time.
It also allows for various polynomial-time linear algebra procedures. In particular, we are interested in the following:

\begin{theorem}
  For an effective field $K$ there is a polynomial time algorithm which, given a matrix $A$ over $K$, computes a basis of $\ker A$.
\end{theorem}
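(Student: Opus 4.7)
The plan is to use (a careful version of) Gaussian elimination to bring $A$ to reduced row echelon form $R$, and then read off a basis of $\ker R = \ker A$ directly from the positions of the pivots.

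First, I would perform row reduction on $A = (a_{ij}) \in K^{m \times n}$. Scan left to right for the next pivot column: use the equality test in $K$ to find the topmost remaining row whose entry in the current column is nonzero; if none exists, move to the next column. Otherwise swap it up, divide the pivot row by the pivot entry (using the division oracle for $K$), and subtract the appropriate scalar multiples to zero out all other entries in that column. After at most $\min(m,n)$ pivot stages we obtain a reduced row echelon form $R$ together with the pivot-column set $P \subseteq \{1,\dots,n\}$, in a total of $O(mn \min(m,n))$ arithmetic operations in $K$.

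Second, for each free column $j \in F := \{1,\dots,n\} \setminus P$ I would construct a kernel vector $v_j \in K^n$ by setting $(v_j)_j = 1$, $(v_j)_{j'} = 0$ for $j' \in F \setminus \{j\}$, and $(v_j)_i = -R_{r_i, j}$ for each pivot column $i \in P$ with pivot located in row $r_i$. A direct check shows $R v_j = 0$, hence $A v_j = 0$, and the vectors $\{v_j\}_{j \in F}$ are linearly independent (they are distinguished by their coordinates on $F$) and span $\ker A$ (any solution is determined by choosing the free coordinates arbitrarily), yielding the required basis of size $n - |P|$. This post-processing is also polynomial in the size of $A$.

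The main obstacle is not the combinatorial structure of the algorithm but controlling the bit-size of intermediate field elements so that the total running time, not just the operation count, is polynomial: a naive elimination over $\bbQ$ can blow up the numerators and denominators exponentially. To handle this inside the abstract ``effective field'' framework I would use a fraction-free elimination strategy in the spirit of the Bareiss algorithm, where intermediate quantities are (up to sign) subdeterminants of $A$ and hence admit polynomial-size encodings by Hadamard's inequality for the standard effective fields of interest (notably $\bbQ$ and $\bbQ[i]$). For finite effective fields the issue does not arise since all elements have bounded encoding size by construction. In either case, the overall algorithm runs in time polynomial in the input size.
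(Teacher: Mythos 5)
Your algorithm (fraction-free Gaussian elimination followed by the standard read-off of a kernel basis from a row echelon form) is a valid route to the result and is genuinely different from the paper's, which instead computes determinants directly, greedily enlarges a nonzero minor until it is maximal, and finally recovers the kernel by Cramer-type linear algebra. However, there is a gap in your bit-size control. You observe, correctly, that the Bareiss intermediate quantities are (signed) minors of $A$, but you only argue that these have polynomial encoding length for $\bbQ$ and $\bbQ[i]$ (via Hadamard's inequality) and for finite fields (trivially). The theorem is stated for an \emph{arbitrary} effective field in the sense of the paper's definition, and Hadamard's inequality is not available in that abstraction, so your argument does not prove the general statement that was asked for.

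The ingredient you are missing is exactly the clause in the definition of effective field that allows multiplying an arbitrary number of matrices in time polynomial in the encoding lengths of the inputs. Combined with the classical fact that the $n \times n$ determinant can be written as an entry of a polynomial-size iterated matrix product, this immediately gives that every minor of $A$ has polynomially bounded encoding length over \emph{any} effective field. That observation is the heart of the paper's proof, and it is also precisely what you need to make your Bareiss-style elimination (and the subsequent divisions when reading off the kernel vectors, whose coordinates are ratios of minors) run in polynomial time uniformly. If you replace the appeal to Hadamard's inequality with an appeal to the iterated-matrix-product clause plus the determinant-as-iterated-matrix-product fact, your elimination-based argument becomes a correct alternative proof.
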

\begin{proof}
  Determinants of matrices over an effective field are computable in polynomial time, because determinant can be represented as an iterated matrix multiplication of polynomial size (see e.~g.~\cite{IKENMEYER20172911}).
  This allows computing the inverse of a nonsingular matrix.
  Also, we can find one of the maximal nonzero minors of a given nonzero matrix, by starting from any nonzero entry and trying to enlarge the minor by checking all rows and columns at each step.
  We can then compute the basis of the kernel by basic linear algebra.
\end{proof}

Hillar and Lim~\cite[Thm.~2.6]{DBLP:journals/jacm/HillarL13} proved that $\HQSAT$ is $\NP$-hard over the fields $\bbR$ and $\bbC$.
Their proof also works for any field of characteristic different from $3$ containing cubic roots of unity.
The $\NP$-hardness for arbitrary fields was proven by Grenet, Koiran and Portier in~\cite{DBLP:journals/jc/GrenetKP13}.
We give another proof for arbitrary fields based on the idea of Hillar and Lim.
Compared to~\cite{DBLP:journals/jc/GrenetKP13}, we describe a general construction for all fields
instead of treating characteristic 2 as a special case, and only use coefficients from $\{-1, 0, 1\}$.

\begin{theorem}
  $\HQSAT_{\{0,1,-1\}, F}$ is $\NP$-hard for any field $F$.
\end{theorem}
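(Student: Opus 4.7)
The plan is to reduce 3-SAT, which is $\NP$-hard, to $\HQSAT_{\{0,1,-1\},F}$; here a ``common zero'' must mean a nontrivial one, since every quadratic form vanishes at the origin. Given a 3-CNF instance with Boolean variables $x_1,\ldots,x_n$ and clauses $C_1,\ldots,C_m$, I will introduce one homogenization variable $h$, one variable $y_i$ per Boolean variable, and one auxiliary variable $w_j$ per clause. The intended meaning is that in any nontrivial solution $h$ must be nonzero; rescaling to $h=1$ turns each $y_i$ into a Boolean value, and these values will satisfy the formula.

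For each $i$ I include the form $y_i^2 - y_i h$, which in the affine chart $h=1$ forces $y_i\in\{0,1\}$. For each literal $\ell$ define its \emph{complement} $\tilde\ell := h - y_i$ if $\ell = x_i$ and $\tilde\ell := y_i$ if $\ell = \neg x_i$, so that with $h=1$ the clause $\ell_{j,1}\vee\ell_{j,2}\vee\ell_{j,3}$ is satisfied precisely when $\tilde\ell_{j,1}\tilde\ell_{j,2}\tilde\ell_{j,3}=0$. To phrase this cubic condition via quadratic forms I add, for each clause $j$, the two forms $\tilde\ell_{j,1}\tilde\ell_{j,2} - w_j h$ and $w_j\tilde\ell_{j,3}$. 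A direct case check over the sign patterns of $(\ell_{j,1},\ell_{j,2})$ shows that every coefficient appearing in these forms lies in $\{-1,0,1\}$, as required.

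The remaining concern is a spurious nontrivial common zero with $h=0$ that does not encode any assignment. To eliminate these, I add the extra form $w_j^2 - w_j h$ for each clause $j$. Then $h=0$ forces $y_i^2 = w_j^2 = 0$ and hence $y_i = w_j = 0$, so the only common zero with $h=0$ is the trivial one. Consequently every nontrivial common zero satisfies $h\neq 0$, and rescaling to $h=1$ gives a truth assignment satisfying every clause by the discussion above. Conversely, given a satisfying assignment $\sigma$, the choice $h=1$, $y_i = \sigma(x_i)$, $w_j = \tilde\ell_{j,1}(\sigma)\tilde\ell_{j,2}(\sigma)$ makes every form vanish. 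The whole reduction is manifestly polynomial time.

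The main obstacle is uniformity: the reduction must work over \emph{every} field $F$, including finite fields and fields of characteristic $2$ or $3$. The Hillar--Lim approach uses $y^2 = 1$ to realize $\{-1,+1\}$, which collapses in characteristic $2$ and forced Grenet--Koiran--Portier to handle that case separately. Switching to $\{0,1\}$ via the form $y_i^2 - y_i h$ is the key to a characteristic-free argument, since the only property of $F$ invoked is $0\neq 1$; the coefficient restriction to $\{-1,0,1\}$ is then a straightforward consequence of how the gadgets are assembled, and the bookkeeping needed is precisely the case check mentioned above plus the verification that the $h=0$ locus is completely killed off.
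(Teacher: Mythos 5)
Your reduction is correct in spirit but takes a genuinely different route from the paper's, and it contains one small but real gap.

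\textbf{Comparison with the paper's proof.} The paper reduces from graph $3$-colorability rather than from 3-SAT. Both proofs rest on the same characteristic-free trick: homogenized idempotent forms $y^2 - yh$ (the paper's $x_v^2 - x_vz$, etc.) restrict variables to $\{0,1\}$ once $h$ (resp.\ $z$) is scaled to $1$, and the same forms kill the $h=0$ locus by forcing all squares, hence all coordinates, to vanish there. The paper encodes the three colors as the pairs $(0,0),(0,1),(1,0)$ using $x_vy_v = 0$, $x_v^2-x_vz = 0$, $y_v^2-y_vz = 0$, and forbids equal endpoint colors with one extra quadratic per edge whose correctness is verified by a $3\times 3$ table; you encode Boolean assignments directly and linearize the cubic clause condition $\tilde\ell_1\tilde\ell_2\tilde\ell_3 = 0$ with an auxiliary idempotent $w_j$ per clause. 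Your route is arguably more direct (3-SAT is the more natural source, no table check), at the cost of one extra variable per clause; the paper's avoids any normalization assumptions on the source instance. Both achieve the same end, a single construction uniform over all fields with coefficients in $\{-1,0,1\}$, which is exactly what the paper highlights as its improvement over Grenet--Koiran--Portier.

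\textbf{The gap.} Your claim that every coefficient of $\tilde\ell_{j,1}\tilde\ell_{j,2} - w_jh$ lies in $\{-1,0,1\}$ is not true for arbitrary 3-CNF input: if $\ell_{j,1}$ and $\ell_{j,2}$ are the \emph{same positive literal} $x_a$, then $\tilde\ell_{j,1}\tilde\ell_{j,2} = (h-y_a)^2 = h^2 - 2y_ah + y_a^2$ has the coefficient $-2$, which lies outside $\{-1,0,1\}$ (and is even $0$ in characteristic $2$, which would also break the gadget's semantics). The ``case check over sign patterns'' you invoke only covers the four polarity combinations and silently assumes the two literals involve distinct variables. This is repairable: 3-SAT restricted to clauses with three \emph{distinct} variables is still $\NP$-hard by a standard normalization (e.g.\ replace $x\vee x\vee y$ by $x\vee y\vee z$ and $x\vee y\vee\neg z$ with $z$ fresh), so you should state that assumption explicitly before the case check. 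With that sentence added, the argument is complete.

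One minor factual slip: Hillar--Lim use cube roots of unity (the paper notes their argument needs characteristic $\ne 3$ and cube roots of unity in $F$), not $y^2=1$; this does not affect your proof, which indeed sidesteps the issue entirely by using $\{0,1\}$ instead of roots of unity.
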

\begin{proof}
  We reduce from graph $3$-colorability.

  Given a graph $G = (V, E)$, we will construct a system of quadratic homogeneous equation, solutions of which correspond to colorings of the graph.
  The set of variables consists of two variables $x_v$ and $y_v$ for each vertex $v \in V$ and one additional variable $z$.
  Consider a system of homogeneous quadratic equations which contains for each vertex $v$ the three equations
  \begin{gather*}
      x_v y_v = 0 \\
      x_v^2 - x_v z = 0 \\
      y_v^2 - y_v z = 0 
  \end{gather*}
  and for each edge $(v, w) \in E$ the equation
  \[
    x_v^2 + y_v^2 + x_w^2 + y_w^2 - x_v y_w - x_w y_w - z^2 = 0
  \]
  If $z = 0$, then from vertex equations we deduce $x_v = y_v = 0$ for all $v \in V$. Therefore, a nontrivial solution must have nonzero $z$.
  We can scale it so that $z = 1$.
  When $z = 1$, the vertex equations give $(x_v, y_v) \in \{(0, 0), (0, 1), (1, 0)\}$.
  Restricted to these values, the left-hand side of the edge equation has the following values:
  \begin{center}
    \begin{tabular}{|c||c|c|c|}
      \hline
      \backslashbox{$v$}{$w$} & $(0,0)$ & $(0,1)$ & $(1,0)$ \\
      \hline
      \hline
      $(0,0)$ & $-1$ & $0$ & $0$ \\
      \hline
      $(0,1)$ & $0$ & $1$ & $0$ \\ 
      \hline
      $(1,0)$ & $0$ & $0$ & $1$ \\ 
      \hline
    \end{tabular}
  \end{center}
  That is, the edge equation forces the tuples $(x_v, y_v)$ and $(x_w, y_w)$ to be different.
  Thus, nontrivial solutions with $z = 1$ are in one-to-one correspondence with colorings of the graph $G$ into three colors, given by the three possible solutions of the vertex equations.
\end{proof}

\begin{theorem}
  Let $F$ be a field and $K$ be an effective subfield of $F$. Then $\HMinRkU_{K, F}$ is polynomial-time equivalent to $\HQSAT_{K, F}$.
\end{theorem}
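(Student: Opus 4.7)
We prove the equivalence via polynomial-time many-one reductions in both directions.

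\emph{Direction $\HMinRkU_{K, F} \leq_P \HQSAT_{K, F}$.} Given a tensor $T \in K^{k \times m \times n}$ with slices $A_1, \dots, A_k$, form the matrix $T(x) = \sum_{i=1}^k x_i A_i$ whose entries are homogeneous linear forms in $x = (x_1, \dots, x_k)$ with coefficients in $K$. A nonzero $x \in F^k$ satisfies $\rk(T(x)) \le 1$ if and only if every $2\times 2$ minor of $T(x)$ vanishes at $x$. Each such minor $(T(x))_{i_1 j_1}(T(x))_{i_2 j_2} - (T(x))_{i_1 j_2}(T(x))_{i_2 j_1}$ is a homogeneous quadratic form in $x$ with $K$-coefficients that can be written down in polynomial time from the entries of $T$. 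Outputting the list of these $\binom{m}{2}\binom{n}{2}$ quadratic forms gives an $\HQSAT_{K,F}$ instance that admits a nontrivial common zero over $F$ iff $T$ has minrank at most $1$ over $F$.

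\emph{Direction $\HQSAT_{K, F} \leq_P \HMinRkU_{K, F}$.} Given quadratic forms $q_1, \dots, q_m$ in $x_1, \dots, x_k$ over $K$, we construct in polynomial time a tensor $T$ over $K$ in the variables $x$ together with, if needed, auxiliary variables $y = (y_1, \dots, y_\ell)$, such that $T$ has minrank $\le 1$ over $F$ iff the $q_i$ admit a common nontrivial zero over $F$. The key tool is the rank decomposition $Q_i = \sum_s u_{is} v_{is}^\trans$ of each symmetric matrix $Q_i$ defining $q_i(x) = x^\trans Q_i x$, which yields a presentation $q_i(x) = \sum_s L_{is}(x) M_{is}(x)$ as a sum of products of linear forms in $x$ and is computable in polynomial time over the effective field $K$. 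These linear forms are assembled, together with auxiliary variables, into a single matrix of linear forms whose $2\times 2$ minors cut out the variety $\{q_1 = \dots = q_m = 0\}$.

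\emph{The main obstacle} lies in the second reduction. A naive block-diagonal construction fails, since the rank of a block-diagonal matrix is the sum of the block ranks, making ``rank $\le 1$'' far more restrictive than ``each block has rank $\le 1$''. Instead one must glue the blocks along shared rows or columns, and use the auxiliary variables $y$ both to linearize the products $L_{is}(x) M_{is}(x)$ and to enforce, through further $2\times 2$ minor relations, that $y$ is a specific linear function of $x$; this ensures any nontrivial rank-$\le 1$ certificate $(x,y)$ forces $x \ne 0$ and satisfies $q_i(x) = 0$ for all $i$. The construction must work uniformly over any field $F \supseteq K$, which precludes appeals to algebraic closure or to special $2\times 2$ determinantal representations of quadratic forms that exist only over quadratically closed fields.
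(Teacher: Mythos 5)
Your first reduction, $\HMinRkU_{K,F} \leq_P \HQSAT_{K,F}$, via the $2\times 2$ minors of $Tx$, is exactly what the paper does; no issue there.

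The second direction is where your proposal departs from the paper, and it has a genuine gap: you correctly identify the obstacle (a block-diagonal assembly of the $2\times 2$-minor relations coming from each $q_i$ raises the target rank, and $2\times 2$ determinantal representations of individual quadrics do not exist uniformly over non-closed fields), but you never exhibit the ``gluing'' construction that would overcome it. As stated, the plan amounts to asking that an arbitrary system of quadratics over $K$ be realized as the system of $2\times 2$ minors of a single matrix of linear forms, with the equivalence holding over every extension field $F \supseteq K$. That is essentially the hard content of the theorem restated, not a proof, and it is not clear that your rank decomposition $Q_i = \sum_s u_{is}v_{is}^\trans$ together with auxiliary $y$-variables can be assembled to achieve it.

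The paper avoids this obstacle entirely by a dualization. Each quadratic form $q_i$ gives a linear form $Q_i$ on the space $\Sym^2 F^n$ of symmetric matrices, with $q_i(x) = Q_i(x x^\trans)$. The $Q_i$'s define a linear map $L \colon \Sym^2 F^n \to F^k$, and $x$ is a nontrivial common zero iff $xx^\trans$ is a nonzero rank-$1$ matrix in $\ker L$. Because all coefficients lie in $K$ and $K$ is effective, one computes a $K$-basis $A_1,\dots,A_m$ of $\ker L$ in polynomial time and sets $T := \sum_{i=1}^m e_i \otimes A_i$. Then ``minrank $\le 1$'' for $T$ asks for a nonzero $y \in F^m$ with $\rk(\sum y_i A_i) \le 1$; since the $A_i$ are a basis, $\sum y_i A_i \ne 0$, so its rank is exactly $1$, and a symmetric rank-$1$ matrix over any field has the form $\alpha u u^\trans$ with $\alpha \ne 0$, whence $q_i(u)=0$ for all $i$. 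The converse is immediate. This route needs no auxiliary variables, no gluing, and no determinantal representation of the individual forms; the $2\times 2$ minors that ``cut out'' the quadrics never need to be produced explicitly because the rank-$1$ condition on $\ker L$ already encodes the whole system at once.
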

\begin{proof}
  To reduce from $\HMinRkU$ to $\HQSAT$, note that the condition $\rk(Tx) \leq 1$ can be expressed by homogeneous quadratic equations on $x$,
  namely, vanishing of $2 \times 2$ minors of the matrix of linear forms $Tx$.

  Now we describe the reduction from $\HQSAT$ to $\HMinRkU$. Let $k$ be a number of given quadratic forms and $n$ be the number of variables.
  Each quadratic form $q(x) = \sum_{1 \leq i \leq j \leq n} a_{ij} x_i x_j$ on $F^n$ corresponds to a linear form $Q(X) = \sum_{1\leq i \leq j \leq n} a_{ij} x_{ij}$ on the space $\Sym^2 F^n \subset F^n \otimes F^n$ of symmetric matrices,
  and a vector $x$ is a zero of $q$ if and only if $x \otimes x$ is a zero of $Q$.
  Therefore, a set of $k$ linear forms on $F^n$ corresponds to a linear map $L \colon \Sym^2 F^n \to F^k$ given by a matrix consisting from the coefficients of quadratic forms,
  and $x$ is a common zero if and only if $x \otimes x$ is contained in $\ker L$.
  Since all the coefficients lie in $K$, the map $L$ is an extension of a linear map $\Sym^2 S^n \to S^k$, and its kernel has a basis consisting of vectors in $\Sym^2 S^n$, which can be computed in polynomial time.
  Let $A_1, \dots, A_m$ be such basis and $T = \sum_{i = 1}^m e_i \otimes A_i \in S^m \otimes S^n \otimes S^n$.
  Nontrivial common zeros $x \in F^n$ of the original set of quadratic forms corresponds to rank $1$ symmetric matrices $x \otimes x$ which can be presented as a nontrivial linear combination $\sum_{i = 1}^m y_i A_i$ with $y_i \in F$ or, equivalently, as a contraction $Ty$ with nonzero $y \in F^m$.
  This is the resulting instance of $\HMinRkU$ problem.
\end{proof}

\begin{corollary}
  Let $F$ be a field and $K$ be an effective subfield of $F$. Then $\HMinRkU_{K, F}$ is $\NP$-hard.
\end{corollary}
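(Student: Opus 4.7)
The plan is to chain together the two preceding results in this subsection. First, the earlier theorem established that $\HQSAT_{\{0,1,-1\}, F}$ is $\NP$-hard for every field $F$. Second, the immediately preceding theorem showed that whenever $K$ is an effective subfield of $F$, the problems $\HMinRkU_{K,F}$ and $\HQSAT_{K,F}$ are polynomial-time equivalent. So what remains is to bridge between the coefficient sets $\{0,1,-1\}$ and $K$, and to combine these two statements.

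For the bridge, I would simply observe that any subfield $K$ of $F$ automatically contains the prime elements $0$, $1$, and $-1$, so $\{0,1,-1\} \subseteq K$. This gives a trivial identity reduction $\HQSAT_{\{0,1,-1\}, F} \leq_p \HQSAT_{K, F}$: every instance of the former is already an instance of the latter, with no transformation on coefficients required. Hence $\HQSAT_{K, F}$ inherits $\NP$-hardness from $\HQSAT_{\{0,1,-1\}, F}$.

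Combining this with the polynomial-time equivalence yields a chain
\[
\HQSAT_{\{0,1,-1\}, F} \;\leq_p\; \HQSAT_{K, F} \;\leq_p\; \HMinRkU_{K, F},
\]
and therefore $\HMinRkU_{K, F}$ is $\NP$-hard, as claimed.

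There is no real obstacle here; the corollary is an immediate consequence of the two theorems just proved. The only minor point worth checking is that the reduction from $\HQSAT_{K,F}$ to $\HMinRkU_{K,F}$ in the previous theorem indeed produces an instance whose tensor has coefficients in $K$ (and not in some larger effective ring), but this is already guaranteed by the construction there, which computes a kernel basis of a matrix with entries in $K$ in polynomial time.
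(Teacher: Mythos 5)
Your proof is correct and follows precisely the route the paper intends: the corollary is stated without proof because it is the immediate composition of the two preceding theorems, exactly as you lay out. Your additional observation that $\{0,1,-1\}\subseteq K$ for any subfield, giving the trivial inclusion reduction $\HQSAT_{\{0,1,-1\},F}\leq_p\HQSAT_{K,F}$, is the (implicit) bridge the paper relies on.
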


The $\HMinRk$ problem is also hard in other regimes.

\begin{theorem}
  Let $F$ be a field of characteristic $0$ and $K$ be an effective subfield of $F$. Then $\HMinRk_{\bbQ, F}$ is $\NP$-hard for $n \times (2n+1) \times (2n+1)$ tensors and $r = n + 1$.
\end{theorem}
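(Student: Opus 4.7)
The plan is to reduce from $\HMinRkU_{\bbQ,F}$, whose $\NP$-hardness we have just established. Given an instance of $\HMinRkU$, i.e., a tensor $\tilde T \in F^{k \times p \times q}$ with rational entries asking whether its minrank is at most $1$, the goal is to produce a tensor $T' \in F^{n \times (2n+1) \times (2n+1)}$ whose minrank is at most $n+1$ iff $\tilde T$ has minrank at most $1$, where $n$ is polynomially bounded in the input size.

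First I would preprocess the input. By padding $\tilde T$ with zero slices (which does not change the minrank predicate, since a nonzero witness $y$ may have arbitrary zero components appended) and by padding each slice with zero rows and columns, I may assume that $\tilde T$ has exactly $n$ slices and that each slice has size at most $(n+1) \times (n+1)$, where $n$ is chosen large enough relative to $k,p,q$. Equivalently, the slices $A_1,\dots,A_n$ may be viewed as sitting inside the top-left $(n+1)\times(n+1)$ corner of an $(2n+1)\times(2n+1)$ matrix.

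Next comes the main construction. I would define each slice of $T'$ as a block-diagonal matrix
\[
A'_i \;=\; \begin{pmatrix} A_i & 0 \\ 0 & B_i \end{pmatrix},
\]
where the $B_i$ form an auxiliary pencil with the crucial property that for every nonzero $y \in F^n$, $\rk\bigl(\sum_i y_i B_i\bigr)$ equals exactly $n$. Granted such a pencil, we would have $\rk(T'y) = \rk(\tilde T y) + n$ for every nonzero $y$, so that the minrank of $T'$ is at most $n+1$ precisely when the minrank of $\tilde T$ is at most $1$. The remaining book-keeping (that the block sizes fit into an $(2n+1)\times(2n+1)$ matrix and the embedding can be carried out in polynomial time using the effectiveness of $K$) is routine.

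The main obstacle is the construction of the auxiliary pencil $B_1,\dots,B_n$. Over an algebraically closed field $F$, no pencil of $n$ square $n \times n$ matrices can have rank $n$ on all of $F^n \setminus \{0\}$, because $\det\bigl(\sum_i y_i B_i\bigr)$ is then a nonconstant homogeneous polynomial in $y$ and must have a nontrivial zero. The way out is to use a Kronecker pencil of rectangular shape: choose the $B_i$ of size $n \times (n+1)$ of ``minimal indices'' form, for instance with $B_i$ having $1$'s on a shifted diagonal, so that $\sum_i y_i B_i$ always has full row rank $n$. Such a pencil sits inside the available bottom-right $n \times (n+1)$ block of $A'_i$ (disjoint from the $(n+1)\times(n+1)$ block holding $A_i$), and block-diagonal rank additivity gives the required identity. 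If desired, one can alternatively bypass this obstacle entirely by reducing directly from $\HQSAT_{\bbQ,F}$, encoding the quadratic forms into the $(2n+1)\times(2n+1)$ slices so that the rank deficit of $T'y$ at a nonzero $y$ witnesses the simultaneous vanishing of the quadratics; that variant avoids the need for a nonsingular pencil but demands a more intricate encoding of the quadratic data into linear slices.
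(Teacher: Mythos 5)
Your plan runs into a genuine obstruction at the auxiliary pencil. You want $n$ matrices $B_1,\dots,B_n$ of size $n \times (n+1)$ such that every nonzero combination $\sum_i y_i B_i$ has rank exactly $n$, because this is exactly what forces the implication $\rk(T'y)\le n+1 \Rightarrow \rk(\tilde Ty)\le 1$. No such pencil exists over $\bbC$ once $n\ge 3$, and $\bbC$ is a legitimate choice of $F$. The left-kernel incidence variety $I := \{ ([y],[z]) \in \bbP^{n-1}\times\bbP^{n-1} : z^{\trans}(\sum_i y_i B_i)=0 \}$ is cut out from the Segre variety of dimension $2n-2$ inside $\bbP^{n^2-1}$ by $n+1$ linear conditions (one per column), so it is nonempty whenever $2n-2 \ge n+1$, i.e., $n\ge 3$; any point of $I$ yields a nonzero $y$ with $\rk\bigl(\sum_i y_i B_i\bigr)<n$. (More generally, a constant-rank-$n$ subspace of $n\times m$ complex matrices has dimension at most $m-n+1$, here $2$.) The ``shifted diagonal'' Kronecker block $L_\epsilon$ you invoke is the minimal-index block of a \emph{two-parameter} pencil $\lambda A+\mu B$, and does not give full rank over a whole $n$-dimensional coordinate space: with $B_i$ carrying $1$s on the $i$-th superdiagonal, the combination at $y=e_n$ is just $B_n = e_1 e_n^{\trans} + e_2 e_{n+1}^{\trans}$, which has rank $2$, not $n$. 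There is also a small arithmetic slip (disjoint blocks of sizes $(n+1)\times(n+1)$ and $n\times(n+1)$ need $2n+2$ columns, not $2n+1$), but fixing it only tightens the constraints further: making the $B_i$-block wide enough, $n\times(2n-1)$, to admit a constant-rank pencil squeezes the $A_i$-block to at most two columns and trivializes the minrank-$\le 1$ condition on $\tilde T$.

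The paper proves the theorem by a completely different reduction, from a size-balanced version of \textsc{Partition}, following Vardy's $\NP$-hardness proof for the minimum-distance problem of linear codes. From the multiset $\{a_1,\dots,a_{2n}\}$ one builds a rational $(n+1)\times(2n+1)$ Vandermonde-type matrix $A$ whose maximal minors vanish exactly on the equal-sum bisections, takes a basis $b_1,\dots,b_n$ of $\ker A$, and forms the $n\times(2n+1)\times(2n+1)$ tensor $\sum_i e_i\otimes\diag(b_i)$. Since $\rk\bigl(\sum_i y_i\diag(b_i)\bigr)$ equals the Hamming weight of $\sum_i y_i b_i$, the minrank of this tensor is the minimum Hamming weight of a nonzero element of $\ker A$, which is $\le n+1$ iff the \textsc{Partition} instance is solvable. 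That detour through coding theory is what makes the balanced parameter regime $r=n+1$ with $(2n+1)\times(2n+1)$ slices attainable; the block-diagonal route you sketch requires an object that provably does not exist.
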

\begin{proof}
  The proof is based on a similar theorem for finite fields is sketched in~\cite[\S 3.3]{DBLP:conf/asiacrypt/Courtois01}, which uses $\NP$-completeness of the minimum distance problem for linear codes proved in~\cite{DBLP:journals/tit/Vardy97}.

  We reduce from a variant of the \textsc{Partition} problem: given a list of $2n$ integers such that each integer appears at most $n - 2$ times, determine if it can be partitioned into 2 subsets of size $n$ with equal sums. $\NP$-completeness of this variant is noted in~\cite[SP12]{DBLP:books/fm/GareyJ79}.

  From the input $\{a_i, \dots, a_{2n}\}$ of the \textsc{Partition} problem construct a $(n + 1) \times (2n + 1)$ matrix
  \[
    A = \begin{bmatrix}
    1 & 1 & \dots & 1 & 0 \\
    a_1 & a_2 & \dots & a_{2n} & 0 \\
    a_1^2 & a_2^2 & \dots & a_{2n}^2 & 0 \\
    \vdots & \vdots & \ddots & \vdots & \vdots \\
    a_1^{n - 2} & a_2^{n - 2} & \dots & a_{2n}^{n - 2} & 0 \\
    a_1^{n - 1} & a_2^{n - 1} & \dots & a_{2n}^{n - 1} & 1 \\
    a_1^n & a_2^n & \dots & a_{2n} & S/2
    \end{bmatrix}
  \]
  where $S$ is the sum of all $a_i$.
  From the properties of Vandermonde determinants we see that any $(n + 1) \times (n + 1)$ minor is nonzero if it does not contain the last column.
  If a minor does contain the last column and columns $i_1, \dots, i_n$, it vanishes if and only if $S/2 = a_{i_1} + \dots + a_{i_n}$~\cite[Lem.~1]{DBLP:journals/tit/Vardy97}.
  Thus, the matrix $A$ has rank $n + 1$. Moreover, it has $n + 1$ linearly dependent columns if and only if the original \textsc{Partition} problem has a solution.

  Let $b_1, \dots, b_n$ be a basis of $\ker A$. Since subsets of $k$ linearly dependent columns corresponds to vectors in $\ker A$ which have at most $k$ nonzero coordinates, the original problem has a solution if and only if there is a nonzero linear combination of $b_i$ with at most $n + 1$ nonzero coordinates.

  Let $B_i$ be a $(2n + 1) \times (2n + 1)$ matrix constructed from $b_i$ by placing its coordinates on the diagonal. The rank of a linear combination of $B_i$ is equal to the number of nonzero coordinates in the corresponding linear combination of vectors $b_i$. Thus, the answer to the $\HMinRk$ problem for the $n \times (2n + 1) \times (2n + 1)$ tensor $\sum_{i = 1}^{n} e_i \otimes B_i$ and $r = n + 1$ determines the answer to the original problem.
\end{proof}

From the facts that the minrank problem is $\NP$-hard and that minrank varieties can be written as orbit closures,
we immediately get the following hardness result for the orbit closure containment problem.

\begin{corollary} \label{cor:occhard}
Given two tensors $t$ and $t'$, deciding whether the orbit closure of $t$ is contained in the orbit closure of $t'$
(under the usual $\GL_n \times \GL_n \times \GL_n$ action) is $\NP$-hard.
\end{corollary}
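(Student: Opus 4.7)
}
The plan is to reduce from the homogeneous minrank problem $\HMinRk$, whose $\NP$-hardness has been established in Section~\ref{sec:hard} (for example, in the $r=1$ regime via the reduction from $\HQSAT$). Given an instance $(T,r)$ with $T \in F^{k \times m \times n}$, I would take $t := T$ (suitably padded) and $t' := T_{k,n,r} \in U \otimes L \otimes L$, the minrank generator tensor from Section~\ref{sec:geo}, where $\dim U = k$ and $L$ is a space of dimension $s = (k-1)n + r$ in which we embed both $V$ and $W$. The construction of $t'$ is explicit in the input parameters and the embedding of $T$ into the larger ambient space $U \otimes L \otimes L$ is just padding by zeros, so the whole reduction is polynomial time.

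The correctness comes from Corollary~\ref{cor:orbit}, which asserts
\[
\cM_{U \otimes V \otimes W, r} = \overline{(\GL(U) \times \GL(L) \times \GL(L))\, T_{k,n,r}} \cap (U \otimes V \otimes W).
\]
Hence (the padded) $T$ has minrank at most $r$ if and only if $T$ lies in the orbit closure of $t' = T_{k,n,r}$. It remains to convert membership in $\overline{G t'}$ into the orbit closure containment formulation. For this I would observe that $\overline{G t'}$ is by definition Zariski closed and $G$-invariant; thus if $t \in \overline{G t'}$, then $G t \subseteq \overline{G t'}$, and taking closures gives $\overline{G t} \subseteq \overline{G t'}$. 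The reverse implication is trivial since $t \in \overline{G t}$. This equivalence turns the minrank instance into an orbit closure containment instance, completing the reduction.

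The argument has essentially no obstacle, since both ingredients---the $\NP$-hardness of minrank and the description of $\cM_r$ as an orbit closure---are already in hand; the only small technical point I would check carefully is that the ambient dimensions of $t$ and $t'$ can be chosen to match (handled by the padding into $U \otimes L \otimes L$) and that the action $\GL_k \times \GL_s \times \GL_s$ used for $T_{k,n,r}$ is the ``usual'' $\GL_n^{\times 3}$ action after identifying dimensions, which is routine. One could additionally remark that since $r=1$ already suffices for $\NP$-hardness by the reduction from $\HQSAT$, hardness of orbit closure containment holds even when $t'$ has a very restricted form, namely $t' = T_{k,n,1}$.
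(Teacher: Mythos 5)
Your proposal is correct and takes essentially the same approach as the paper, which states the corollary with the one-line justification that it follows ``from the facts that the minrank problem is $\NP$-hard and that minrank varieties can be written as orbit closures.'' Your write-up simply makes explicit the two ingredients the paper points to---Corollary~\ref{cor:orbit} giving $\cM_{U\otimes V\otimes W,r}=\overline{G\,T_{k,n,r}}\cap(U\otimes V\otimes W)$, and the elementary equivalence $t\in\overline{Gt'}\iff\overline{Gt}\subseteq\overline{Gt'}$ already noted in the introduction---so the two arguments coincide.
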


\subsection{Slice rank and minrank varieties and algebraic natural proofs}

We have found a lot of equations for the minrank varieties and it is a natural question how hard these equations are.
In particular, in the GCT setting, we have a sequence of varieties $V_n$ and a sequence of points $x_n$ and want
to prove that $x_n$ is not in $V_n$. This is done by giving equations $f_n$ such that $f_n$
vanishes on $V_n$, but $f_n(x_n) \not= 0$. The meta-question is how ``difficult'' is it to prove that $f_n$
has the desired properties. That is, why is progress on algebraic circuit lower bounds so hard?
For instance, if $f_n$ has high circuit complexity, then it is very unlikely
that we will be able to prove  $f_n(x_n) \not= 0$ by evaluating this circuit. In turns out that when testing 
membership in $V_n$ is a hard problem, then this high circuit complexity is in some sense unavoidable. 
To deal with this questions,
we generalize the methods from \cite{DBLP:conf/stoc/BlaserIJL18} and make them applicable to varieties for which
the membership problem is hard.

We call a sequence $(V_n)$ a p-family of varieties if $V_n$ is a subset of $F^{p(n)}$ 
for some polynomially bounded function $p$.

\begin{definition}
A family of varieties $(V_n)$ is \emph{polynomially definable}, if for each $n$, there are polynomials $f_1,\dots,f_m$ such that
$V_n$ is the common zero set of these polynomials and $L(f_i)$ is polynomially bounded in $n$ for all $1 \le i \le m$.
\end{definition}

Here $L(f_i)$ denotes the algebraic circuit complexity of $f_i$, that is, the size of a smallest circuit computing $f$.
Note that we do not require that $m$ is polynomially bounded in $n$. 


\begin{definition} \label{def:generated}
Let $F$ be a field and $K$ be an effective subfield. A p-family of varieties $(V_n)$ with $V_n \subseteq F^{p(n)}$ is \emph{uniformly generated}
if for all $n$, there are polynomials $g_1,\dots,g_{p(n)}$ over $K$ such that 
\begin{enumerate}
\item the image of $(g_1,\dots,g_{p(n)})$ is dense in $V_n$,
\item each $g_i$ has polynomial circuit complexity, and
\item there is a polynomial time bounded Turing machine $M$ that given $n$ in unary, 
outputs for each $g_i$ an arithmetic circuit. 
\end{enumerate}
\end{definition}

The $(V_n)$-membership problem is the following decision problem: Given $n$ and an encoding of a point $x \in S^{p(n)}$,
decide whether $x \in V_n$.

\begin{theorem} \label{thm:hardproof}
Let $F$ be a field and $K$ be an effective subfield.
Let $V = (V_n)$ be a p-family of varieties such that $V$ is polynomially definable over $K$ and uniformly generated
and the $V$-membership problem is $\NP$-hard. Then $\coNP \subseteq \exists \BPP$.
\end{theorem}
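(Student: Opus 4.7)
My plan is to show that, under the hypotheses of the theorem, the non-membership problem for $(V_n)$ lies in $\exists\BPP$. Since $V$-membership is $\NP$-hard by hypothesis, its complement is $\coNP$-hard under polynomial-time many-one reductions; hence every $\coNP$-language reduces to $V$-non-membership, and once $V$-non-membership is placed in $\exists\BPP$, composing the reduction with the $\exists\BPP$-algorithm yields $\coNP \subseteq \exists\BPP$. The key observation that drives the argument is that polynomial definability supplies, for every $x \notin V_n$, a short witness of non-membership: since $V_n$ is cut out by defining polynomials $f_1,\ldots,f_m$ of polynomial circuit complexity, at least one of them must satisfy $f_i(x)\neq 0$, and this $f_i$ will be the object guessed in the existential phase.

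The $\exists\BPP$-algorithm for deciding $x \notin V_n$ proceeds in three steps. \emph{Step 1:} guess an arithmetic circuit $C$ of size polynomial in $n$ with constants from $K$. \emph{Step 2:} check that $C$ vanishes identically on $V_n$. By uniform generation (Definition~\ref{def:generated}) we can produce, in deterministic polynomial time, circuits for the parametrizing map $g=(g_1,\ldots,g_{p(n)})$ whose image is Zariski-dense in $V_n$. Because $V_n$ is the Zariski closure of the image of $g$, a polynomial vanishes on $V_n$ if and only if it vanishes on this image, which happens iff the composition $C\circ g$ is the zero polynomial. Since both $C$ and each $g_i$ have polynomial-size circuits, $C\circ g$ also has a polynomial-size circuit, so the identity test $C\circ g \equiv 0$ lies in $\BPP$ by the assumption that PIT over the effective field $K$ is in $\BPP$. \emph{Step 3:} check that $C(x)\neq 0$ by evaluating the circuit $C$ at the $K$-rational point $x$ (effectiveness of $K$ makes this deterministic polynomial time) and comparing the result to $0$ using the equality test of $K$. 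Accept iff both checks succeed.

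For completeness, given $x\notin V_n$, polynomial definability guarantees a defining polynomial $f_i$ with polynomial-size circuit such that $f_i(x)\neq 0$, which serves as an accepting witness $C$. For soundness, if the algorithm accepts then the PIT step (with error reduced to exponentially small by standard amplification) certifies that $C$ vanishes on the image of $g$, hence on its closure $V_n$, while the exact evaluation in Step~3 guarantees $C(x)\neq 0$, so indeed $x\notin V_n$.

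The main technical subtlety I expect concerns Step~2, the randomized identity test on $C\circ g$: this object may compute a polynomial of exponential degree (inherited from $g$), so one must be careful that the Schwartz--Zippel-type identity test still applies, possibly by evaluating in a polynomial-size algebraic extension of $K$ when $K$ is small. This is precisely what is absorbed into the assumption that PIT over $K$ lies in $\BPP$. A secondary subtlety is ensuring that the guessed circuit $C$ uses only constants of polynomial bit-length over $K$; this is standard in the algebraic natural proofs framework and does not restrict generality, since any polynomial-size circuit over $K$ that is an equation of $V_n$ suffices as a witness. Once these points are addressed, the argument goes through essentially as sketched in the introduction, generalizing the matrix-completion-specific reasoning of \cite{DBLP:conf/stoc/BlaserIJL18} to arbitrary $(V_n)$ satisfying the two hypotheses.
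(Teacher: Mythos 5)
Your overall strategy and algorithm coincide with the paper's: guess a polynomial-size circuit $C$ as the non-membership witness, verify via a PIT query on $C\circ g$ that $C$ vanishes on the dense image of the generating map (and hence on all of $V_n$), and verify that $C$ is nonzero at $x$. Completeness from polynomial definability and soundness from Zariski closure are argued identically, so this is essentially the same proof.

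There is one genuine slip in Step~3, however. You claim that one can ``evaluate the circuit $C$ at the $K$-rational point $x$'' in deterministic polynomial time by effectiveness of $K$. That does not follow: a polynomial-size arithmetic circuit can compute values of doubly-exponential magnitude (e.g.\ $n$ nested squarings yield $2^{2^n}$), so the intermediate encodings can have exponential bit-length, and the effective-field definition only promises polynomial time \emph{in the encoding length of the arguments}. You already notice exactly this phenomenon in your discussion of Step~2 (``this object may compute a polynomial of exponential degree''), but the same obstruction applies to the single-point evaluation. The paper sidesteps it by running polynomial identity testing on the constant circuit $C(x_1,\ldots,x_{p(n)})$ — recall the paper's earlier observation that general circuit PIT reduces to PIT for degree-zero circuits — which is exactly the modular/random-evaluation trick that avoids the blow-up. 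So replace your deterministic ``evaluate and compare'' in Step~3 by a second PIT call, and the argument is then complete and matches the paper.
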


\begin{proof}
We give an $\exists \BPP$-algorithm for the $V$-non-membership problem, that is given a point 
$x = (x_1,\dots,x_{p(n)}) \in S^{p(n)}$,
decide whether $x \notin V_n$. Since $V$-membership is $\NP$-hard, $V$-non-membership is $\coNP$-hard
and the result follows. 
The idea is to guess an equation $f$ of the variety $V_n$ such that $f(x) \not= 0$. 
Since $V$ is polynomially definable, there is a set of defining equations of $V_n$ that all have
polynomial circuit complexity. Of course, we need to check that $f$ vanishes indeed of $V_n$.
The algorithm works as follows:
\begin{enumerate}
\item Guess a circuit $C$ of size polynomial in $n$ computing a polynomial $f(X_1,\dots,X_{p(n)})$.
\item Generate the circuits $D_1,\dots,D_{p(n)}$ computing polynomials $g_1,\dots,g_{p(n)}$
as in Definition~\ref{def:generated}.
\item Use polynomial identity testing to check whether $C(g_1,\dots,g_{p(n)})$ is identically zero.
If not, reject.
\item Otherwise, use polynomial identity testing to check whether $C(x_1,\dots,x_{p(n)})$ 
is identically zero. If yes, reject. Otherwise accept.
\end{enumerate}

Since polynomial identity testing over $K$ can be done in $\BPP$, this is clearly an
$\exists\BPP$-algorithm. 

Assume that $x$ is not in the variety. Then there is an equation
of polynomial circuit complexity $f$ that vanishes on $V_n$ such that $f(x) \not= 0$ by the definiability of $V$.
Assume we guessed a circuit $C$ for $f$ in the first step. Since the image of $(g_1,\dots,g_{p(n)})$
lies in $V_n$, $C(g_1,\dots,g_{p(n)})$ will not be identically zero. We pass the test in step 3 with probility $1 - \epsilon$.
Since $f(x) \not= 0$, we accept with probility $1 - \epsilon$ in step 4. Therefore, the overall 
acceptance probability is bounded by $1 - 2\epsilon$.

Now assume that $x \in V_n$. If the guessed circuit computes an equation $f$ of $V_n$,
then we will reject with probability $1 - \epsilon$ in step $4$. If $f$ is not an equation
of $V_n$, then we reject in step $1 - \epsilon$ in step $3$. In both cases the acceptance
probability is bounded by $\epsilon$. This shows the correctness of the algorithm.
\end{proof}

\begin{lemma} \label{lem:orbit}
Let $(V_n) \subseteq F^{p(n)}$ be a p-family of varieties. Let $(G_n)$ be a sequence of groups and $(u_n)$ be a sequence of vectors such that 
$V_n$ is the $G_n$-orbit closure of $u_n$. If for a generic element $g \in G_n$, the coordinate functions $(\gamma_1,\dots,\gamma_{p(n)})$
of $g u_n$ can be described by polynomial size circuits $(C_1,\dots,C_{p(n)})$ and the mapping $1^n \mapsto (C_1,\dots,C_{p(n)})$
is polynomial time computable, then $(V_n)$ is uniformly generated.
\end{lemma}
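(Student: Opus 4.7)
The plan is to simply verify that the hypothesized parametrization of the orbit $G_n u_n$ witnesses the three properties in Definition~\ref{def:generated}. The candidate polynomials $g_1,\dots,g_{p(n)}$ will be exactly the coordinate functions $\gamma_1,\dots,\gamma_{p(n)}$, viewed as polynomials in the matrix entries (or whatever finite set of parameters) of a generic group element $g \in G_n$.

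First, I would check density. By construction the image of the map $g \mapsto g\cdot u_n$ is the orbit $G_n u_n$, and by hypothesis $V_n = \overline{G_n u_n}$. Hence the image of $(\gamma_1,\dots,\gamma_{p(n)})$, regarded as a polynomial map from the parameter space of $G_n$ to $F^{p(n)}$, is Zariski dense in $V_n$, giving property~(1) of Definition~\ref{def:generated}. Property~(2) is immediate from the hypothesis that each $\gamma_i$ is computed by a polynomial size circuit $C_i$, and property~(3) is exactly the assumed polynomial time computability of the map $1^n \mapsto (C_1,\dots,C_{p(n)})$.

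The only mild subtlety to address is that Definition~\ref{def:generated} requires the polynomials to live over the effective subfield $K$. In all instances in which we apply the lemma (orbit closures of $T_{k,n,r}$ or $S_{n,r_1,r_2,r_3}$ under $\GL \times \GL \times \GL$), the generic element of $G_n$ can be parametrized by a tuple of matrix entries with values in $K$, the tensors $u_n$ have entries in $K$, and the group action is given by $K$-polynomial maps, so the resulting $\gamma_i$ are automatically polynomials with coefficients in $K$. I would include one sentence pointing this out.

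I do not expect any real obstacle: the lemma is essentially a bookkeeping statement that packages a ``parametrizable orbit'' into the abstract notion of uniform generation used in Theorem~\ref{thm:hardproof}. The substantive work lies in later verifying the three parametrization hypotheses for the specific families $\cM_r$ and $\mathcal{SV}_r$, which follows from the explicit descriptions of $T_{k,n,r}$ and $S_{n,r_1,r_2,r_3}$ in Sections~\ref{sec:geo} and~\ref{sec:srk:orbit} together with the fact that matrix multiplication (and tensoring a fixed vector with three generic matrices) is computed by polynomial size circuits that can be printed in polynomial time from $1^n$.
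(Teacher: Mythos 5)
Your proof is correct and follows exactly the same approach as the paper's: take the $\gamma_i$ as the generating polynomials, observe that density of the image is immediate because $V_n$ is by hypothesis the closure of the orbit $G_n u_n$, and read off properties (2) and (3) directly from the lemma's assumptions. The paper's own proof is a two-sentence version of yours; your additional remark about the generating polynomials living over $K$ is a reasonable clarification that the paper leaves implicit.
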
 

\begin{proof} Since $V_n$ is an orbit closure, the orbit lies dense in $V_n$ by definition. The other two 
items in Definition~\ref{def:generated} follow from the prerequisites of the lemma.
\end{proof}
 
\begin{remark}
The same statement is true, if every $V_n$ is not an orbit closure but an intersection of an orbit closure with a vector space.
The proof is almost identical.
\end{remark}

\begin{corollary}
Let $S$ be an effective subfield of $F$. 
For infinitely many $n$, there is an $m$, a tensor $t \in S^{m \times n \times n}$
and a value $r$ such that there is no 
algebraic $\poly(n)$-natural proof for the fact that the minrank of $t$ is greater than $r$ 
unless $\coNP \subseteq \exists \BPP$.
\end{corollary}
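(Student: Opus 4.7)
The plan is to invoke Theorem~\ref{thm:hardproof} for the minrank varieties with parameters coming from the $\NP$-hardness results of Section~\ref{sec:hard}. Let $F$ be an algebraically closed field containing $S$ as an effective subfield. The reduction from $\HQSAT_{S,F}$ to $\HMinRkU_{S,F}$ already produces, for each $n$, a tensor in $S^{m \times n \times n}$ with $m \le \binom{n+1}{2}$ and target minrank $r = 1$, since its slices arise from symmetric matrices in $\Sym^2 F^n$. Fixing $k(n) \le \binom{n+1}{2}$ and $r(n) = 1$, I set $V_n := \cM_{r(n)} \cap (F^{k(n)} \otimes F^n \otimes F^n)$. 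This is a p-family of varieties (the ambient dimension $k(n) n^2$ is polynomial in $n$), and membership in $(V_n)$ is $\NP$-hard by Section~\ref{sec:hard}.

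The next step is to show that $(V_n)$ is uniformly generated in the sense of Definition~\ref{def:generated}. By Corollary~\ref{cor:orbit}, $V_n$ equals the intersection of the $\GL(U) \times \GL(L) \times \GL(L)$-orbit closure of $T_{k(n), n, r(n)}$ with the ambient space $F^{k(n)} \otimes F^n \otimes F^n$. The polynomial map $(A,B,C) \mapsto (A \otimes B \otimes C)\, T_{k(n),n,r(n)}$ has image exactly the orbit, hence dense in its closure; each of the $\poly(n)$ output coordinates is a degree-$3$ polynomial in the entries of $(A,B,C)$, trivially computed by a polynomial-size arithmetic circuit uniformly constructable in polynomial time from $1^n$. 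By Lemma~\ref{lem:orbit} (together with its subspace-intersection variant noted in the remark following it), $(V_n)$ is uniformly generated.

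With both hypotheses of Theorem~\ref{thm:hardproof} in place, polynomial definability of $(V_n)$ would immediately yield $\coNP \subseteq \exists\BPP$. The final move is to translate polynomial definability into the language of natural proofs. Suppose that, for some fixed polynomial $p$ and all but finitely many $n$, every tensor $t \in S^{k(n) \times n \times n} \setminus V_n$ admits an algebraic natural proof of circuit size at most $p(n)$, i.e., a polynomial of circuit size $\le p(n)$ vanishing on $V_n$ but not at $t$. Then the collection $\mathcal F_n$ of all polynomials computed by circuits of size $\le p(n)$ over $S$ that vanish on $V_n$ has common zero set exactly $V_n$, exhibiting $(V_n)$ as polynomially definable. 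Contrapositively, assuming $\coNP \not\subseteq \exists\BPP$, for every polynomial $p$ there exist infinitely many $n$ with a witness tensor $t$ possessing no natural proof of size $\le p(n)$, which is precisely the statement of the corollary.

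The main obstacle I foresee is minor but deserves care: one must verify that the $\HQSAT$-to-$\HMinRkU$ reduction indeed outputs tensors of the exact ambient shape $m \times n \times n$ required in the statement (it does, because its slices are symmetric matrices), and that the circuits computing the orbit-generating map are not merely small but uniformly constructable from $1^n$. Both properties are immediate from the explicit orbit-closure description of Corollary~\ref{cor:orbit}, so no deeper difficulty arises and the corollary reduces cleanly to a direct application of Theorem~\ref{thm:hardproof}.
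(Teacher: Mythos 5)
Your proof is correct and takes essentially the same route as the paper's: invoke Theorem~\ref{thm:hardproof} with the minrank varieties, use Lemma~\ref{lem:orbit} (plus the remark on intersecting with a subspace) for uniform generation, and argue by contrapositive that universal $\poly(n)$-size natural proofs would make the family polynomially definable. You merely spell out more explicitly the choice of parameters coming from the $\HMinRkU$ reduction and the bookkeeping the paper leaves implicit.
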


\begin{proof}
The proof is by contradiction. If there is a $\poly(n)$-natural proof for every tensor $t$ for almost all $n$,
then the corresponding sequence of minrank varieties is $p$-definable. Since each minrank variety
can be written as an orbit closures, where the groups are triples of general linear groups,
by Lemma~\ref{lem:orbit}, the minrank varieties are also uniformly generated. 
Therefore, by Theorem~\ref{thm:hardproof}, $\coNP \subseteq \exists \BPP$.
\end{proof}

\begin{remark}
The result above can also be extended to the slice rank varieties. Since each
of them can be written as a polynomial union of orbit closures,
instead of testing whether the circuit $C$ in the proof of Theorem~\ref{thm:hardproof}
vanishes on one dense subset, we test whether it vanishes on polynomially many 
dense subsets. 
\end{remark}






\nocite{DBLP:conf/icalp/GrochowMQ16}

\bibliographystyle{plainurl}
\bibliography{minrank,br-mc,slice_rank}

\end{document}